\newcommand{\X}{\mathcal X}
\newcommand{\K}{\mathcal K}
\newcommand{\G}{\mathcal G}
\newcommand{\mG}{\mathcal G}
\newcommand{\T}{\mathcal T}
\newcommand{\Q}{\mathcal Q}
\newcommand{\beq}{\begin{equation}}
\newcommand{\eeq}{\end{equation}}
\newcommand{\bAL}{\begin{align}}
\newcommand{\eAL}{\end{align}}
\newcommand{\LCM}{\nabla}
\newcommand{\LCS}{\mathcal D}
\newcommand{\SFS}{\chi}
\newcounter{example}[section]
\newcounter{remark}[section]
\newcounter{theorem}[section]
\newcounter{proposition}[section]
\newcounter{lemma}[section]
\newcounter{corollary}[section]
\newcounter{definition}[section]
\def\theremark{\arabic{section}.\arabic{remark}}
\def\thetheorem{\arabic{section}.\arabic{theorem}}
\def\thedefinition{\arabic{section}.\arabic{definition}}
\renewcommand*{\email}[1]{\footnote{Electronic address: \href{mailto:#1}{\nolinkurl{#1}} }}
\newenvironment{proof}{\noindent {\textit{Proof:}}
}{\medskip}
\newenvironment{theorem}{\refstepcounter{theorem}
\medskip\noindent{\bf Theorem \thetheorem}:\em}{\medskip}
\newenvironment{proposition}{\refstepcounter{theorem}\medskip\noindent{\bf
Proposition \thetheorem}:\em}{\medskip}
\newenvironment{definition}{\refstepcounter{definition}\medskip\noindent{\bf
Definition \thedefinition}:\em}{\medskip}
\begin{document}

\title{ 
Slice-reducible conformal Killing tensors, photon surfaces and shadows}
\author{Kirill Kobialko\email{kobyalkokv@yandex.ru}}
\author{Igor Bogush\email{igbogush@gmail.com}}
\author{Dmitri Gal'tsov\email{galtsov@phys.msu.ru}}
\affiliation{Faculty of Physics, Moscow State University, 119899, Moscow, Russia}

\begin{abstract}
We generalize our recent method for constructing Killing tensors of the second rank to conformal Killing tensors. The method is intended for foliated spacetimes of arbitrary dimension $m$, which have a set of conformal Killing vectors. It applies to foliations of a more general structure than in previous literature. The basic idea is to start with reducible Killing tensors in slices constructed from a set of conformal Killing vectors and the induced metric, and then lift them to the whole manifold. Integrability conditions are derived that ensure this, and a constructive lifting procedure is presented.  The resulting conformal Killing tensor may be irreducible. It is shown that subdomains of foliation slices suitable for the method are fundamental photon surfaces if some additional photon region inequality is satisfied. Thus our procedure also opens the way
to obtain a simple general analytical expression for the boundary of the gravitational shadow.
We apply this technique to electrovacuum, and ${\cal N}=2,\,4,\,8$ supergravity black holes, providing a new easy way to establish the existence of exact and conformal Killing tensors. 
\end{abstract}


\maketitle

\setcounter{page}{2}

\setcounter{equation}{0}
\setcounter{subsection}{0}
\setcounter{section}{0}

\section{Introduction}
Killing vectors describe spacetime isometries and generate first integrals of  geodesic  equations. Conformal Killing vectors describe spacetime symmetries preserving a line element up to a conformal factor and provide a set of integrals of motion for null geodesics. Additional symmetries, often referred to as hidden ones, are provided by Killing tensors (KT) and conformal Killing tensors (CKT) of higher rank. They correspond to phase space symmetries for geodesics and give higher-order conserved quantities in conjugate momenta. Basic facts about their existence and some methods of construction are known for more than half-century \cite{Carter:1968ks,Walker:1970un,Hughston:1972qf,sommers,benenti, Carter:1977pq, Stephani:2003tm}, but this still remains an active area of research \cite{Gibbons:2011hg,Cariglia:2014ysa,Frolov:2017kze, Papadopoulos:2018nvd,Carson:2020dez,Papadopoulos:2020kxu}. 

KT and CKT hidden symmetries may follow from underlying stronger Killing-Yano symmetries in which case a complete theory was developed during the past decade, for a review see \cite{Frolov:2017kze}. Solutions admitting conformal Killing-Yano tensor belong to rather strict class of charged Kerr-NUT-(A)dS type $D$ solutions and their higher-dimensional generalizations related to type $D$ in classification of \cite{Coley:2004jv,Coley:2007tp}. Solutions admitting KT and CKT beyond this class are not fully classified, but their existence in some algebraically non-special spacetimes is known \cite{Keeler:2012mq,Chow:2014cca}. An old approach of Benenti and Francaviglia \cite{benenti}, based on a certain ansatz for inverse metric in stationary axisymmetric spacetimes  for which an irreducible KT can be presented explicitly,  was recently shown to be applicable to many of currently popular ``deformed'' Kerr metrics, splitting them on geodesically integrable and non-integrable \cite{Papadopoulos:2018nvd,Papadopoulos:2020kxu, Carson:2020dez}. 
Some partial results have been obtained for  special spacetimes with a warped/twisted product structure \cite{Krtous:2015ona},  spaces that admits a hypersurface orthogonal Killing vector field \cite{Garfinkle:2010er,Garfinkle:2013cha}, or special conformal Killing fields \cite{Koutras,Barnes:2002pm}. 
 
Somewhat surprisingly, the equations that determine the null geodesics and the integrability conditions for the second-rank KT equations turn out to be related \cite{Pappas:2018opz,Glampedakis:2018blj}. Furthermore, it has recently been shown \cite{Kobialko:2021aqg,Koga:2020akc} that the existence of KT can be related to important characteristics of spacetime, such as photon spheres and their generalizations: fundamental photon surfaces  \cite{Claudel:2000yi,Yoshino1,Yoshino:2019dty,Yoshino:2019mqw,Teo, Galtsov:2019fzq, Galtsov:2019bty,Cao:2019vlu, Koga:2022dsu,Lee:2020pre,Kobialko:2021uwy,Kobialko:2020vqf}, which are compact submanifolds paved by null geodesics. Such surfaces are important for strong gravitational lensing and black hole shadows \cite{Grover:2017mhm,Cunha:2018acu,Shipley:2019kfq,Grenzebach,Tsukamoto:2020iez,Tsukamoto:2020bjm,Tsukamoto:2021fsz,Tsukamoto:2021lpm,GrenzebachSBH,Grenzebach:2015oea,Konoplya:2021slg,Perlick:2021aok,Lan:2018lyj}. They are equally useful in the analysis of the black hole uniqueness \cite{Cederbaum,Yazadjiev:2015hda,Yazadjiev:2015mta,Yazadjiev:2015jza,Yoshino:2016kgi,Yazadjiev:2021nfr,Koga:2020gqd,Rogatko,Cederbaumo,Cederbaum:2019rbv} and in deriving the area bounds \cite{Shiromizu:2017ego,Feng:2019zzn,Yang:2019zcn}. 

The novel geometric method for constructing KT was proposed in Ref. \cite{Kobialko:2021aqg}. It is applicable to manifolds with foliation of codimension one and is based on lifting reducible KTs constructed in slices from Killing vectors tangent to them. This approach does not require the slices to be orthogonal to the Killing vector field as was assumed in \cite{Garfinkle:2010er,Garfinkle:2013cha}, nor does it assume that the spacetime has a warped/twisted product structure as was considered in \cite{Krtous:2015ona}.  

The purpose of this article is to generalize this approach to CKT, as well as to demonstrate an important connection with the structure of fundamental photon surfaces and the formation of gravitational shadows. We find general integrability conditions for slice-to-bulk lift equations and prove the theorem \ref{KBG}, which provides a constructive method for generating slice-reducible CKTs that may be bulk irreducible. We discuss connection of such slice-reducible CKTs with photon surfaces and find a compact general formula for the shadow boundary of black holes applicable in many important solutions of various models. We apply this new technique to Plebanski-Demianski, EMDA and STU black holes solutions showing that this technique allows obtaining CKT \cite{Kubiznak:2007kh,Vasudevan:2005bz} purely algebraically without solving any differential equations and find the corresponding gravitational shadows. We introduce  the classes of slice-reducible KTs and CKTs as a promising tool for obtaining further analytical results. 

The work plan is the following.
In Sec. \ref{sec:setup} we briefly describe the equations for the conformal Killing vectors and second-rank CKT, and also briefly consider the formalism and notation of the general spacetime foliation of codimension one. In Sec. \ref{sec:conformal_killing_tensor} we present general lifting equations and  formulate the main theorems for conformal and exact KTs. In Sec. \ref{sec:photon}   the relationship between CKT and fundamental photon surfaces and shadows is discussed. In Sec. \ref{sec:examples} we apply formalism to many important particular cases. The appendices contain proofs of the statements formulated in the main part of the article. 

\section{Setup} 
\label{sec:setup}
Let $M$ be a Lorentzian manifold \cite{Chen} of dimension $m$ with metric tensor $g_{\alpha\beta}$ and Levi-Civita connection $\LCM_\alpha$ in some holonomic basis. 

\begin{definition} 
Tensor fields $K^\alpha$ and $K_{\alpha\beta}$ are called conformal Killing vectors and conformal Killing tensors respectively if \footnote{Here, the symmetrization is defined with weight one: $T_{(\alpha\beta)}=T_{\alpha\beta}+T_{\beta\alpha}$.} 
\begin{subequations}\label{eq:killing_equation}
    \begin{align} \label{eq:killing_equation_a}
        &
        \LCM_{(\alpha} K_{\beta)} = \Omega g_{(\alpha\beta)},
        \qquad
        \Omega = \frac{1}{m}\LCM_{\alpha} K^{\alpha},
        \\ & \label{eq:killing_equation_b}
        \LCM_{(\alpha} K_{\beta\gamma)} = \Omega_{(\alpha} g_{\beta\gamma)},
        \qquad
        \Omega_{\gamma} = 
        \frac{1}{m+2} \LCM_{\alpha}\left(
              2K^{\alpha}_{\gamma}
            + K\delta^\alpha_{\gamma}
        \right),
    \end{align}
\end{subequations}
where $K\equiv K^\alpha_\alpha$ (see Refs. \cite{Walker:1970un,Frolov:2017kze}).
\end{definition} 

Conformal Killing vectors and CKTs correspond to a set of conserved quantities defined as $\Q_{(1)}\equiv\dot{\gamma}^\alpha K_{\alpha}$ and $\Q_{(2)}\equiv\dot{\gamma}^\alpha \dot{\gamma}^\beta K_{\alpha\beta}$ for an arbitrary null geodesic $\gamma$ with affine parametrization, i.e., $\dot{\gamma}^\alpha\LCM_\alpha \Q_{(1,2)}=0$. One can always construct a CKT from the conformal Killing vectors \cite{Walker:1970un,Barnes:2002pm}. The corresponding conserved quantity is not independent, as far it can be expressed via conserved quantities associated with the conformal Killing vectors. Such reducible CKTs can be constructed as symmetrized products by following procedure:

\begin{proposition}\label{prop:trivial_tensor}
Let $K_a{}^{\alpha}$ be a set of $n$ conformal Killing vectors fields ($a=1,\ldots,n$). Then, one can define a reducible CKT field $K^{\alpha\beta}$ as follows 
\begin{subequations} \label{eq:trivial_tensor}
    \begin{align}
        & \label{eq:trivial_tensor_a}
        K^{\alpha\beta} =
          \alpha g^{\alpha\beta}
        + \gamma^{ab} K_a{}^{\alpha} K_b{}^{\alpha}, 
        \\& \label{eq:trivial_tensor_b}
        \Omega^{\alpha} =\LCM^\alpha \alpha+ 
        2 \gamma^{ab} \Omega_a K_b{}^{\alpha}, \quad \Omega_a=m^{-1}\LCM_{\alpha} K_a{}^{\alpha}, 
    \end{align}
\end{subequations}
where $\alpha$ is an arbitrary function and $\gamma^{ab}$ is a symmetric matrix filled with a set of $n(n+1)/2$ independent constants. 
\end{proposition}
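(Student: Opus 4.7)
The plan is to verify Eq.~(2.1b) directly for the candidate tensor in Eq.~(2.2a) (noting a typo: the second factor should read $K_b{}^{\beta}$, so that $K^{\alpha\beta}$ is manifestly symmetric in $\alpha,\beta$ via the combined use of symmetry of $\gamma^{ab}$ and relabeling of the internal indices $a\leftrightarrow b$). First I would differentiate: by metric compatibility and Leibniz,
\begin{equation*}
\LCM_\gamma K_{\alpha\beta}
= g_{\alpha\beta}\LCM_\gamma \alpha
+ \gamma^{ab}\bigl[(\LCM_\gamma K_{a\,\alpha})K_{b\,\beta} + K_{a\,\alpha}(\LCM_\gamma K_{b\,\beta})\bigr].
\end{equation*}
Relabeling $a\leftrightarrow b$ in the second bracket (and using $\gamma^{ab}=\gamma^{ba}$) shows that this bracket equals $\gamma^{ab}\bigl[(\LCM_\gamma K_{a\,\alpha})K_{b\,\beta} + (\LCM_\gamma K_{a\,\beta})K_{b\,\alpha}\bigr]$, which is already symmetric in $\alpha\beta$.

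Next I would symmetrize over $(\gamma\alpha\beta)$ with the weight-one convention of the paper. The derivative-of-$\alpha$ contribution immediately gives $(\LCM_{(\gamma}\alpha)g_{\alpha\beta)}$. For the quadratic-in-$K_a$ piece, I would pair the six permutations into three pairs so that in each pair the two derivatives on a single $K_a$ combine via the conformal Killing equation (\ref{eq:killing_equation_a}), $\LCM_\gamma K_{a\,\alpha}+\LCM_\alpha K_{a\,\gamma}=2\Omega_a g_{\gamma\alpha}$. Concretely, denoting $T_{\gamma\alpha\beta}\equiv\gamma^{ab}(\LCM_\gamma K_{a\,\alpha})K_{b\,\beta}$, the pair $T_{\gamma\alpha\beta}+T_{\alpha\gamma\beta}$ collapses to $2\gamma^{ab}\Omega_a g_{\gamma\alpha}K_{b\,\beta}$, and similarly for the two other pairs obtained by holding $\alpha$ or $\gamma$ fixed. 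Summing yields $2\gamma^{ab}\Omega_a K_{b\,(\gamma}g_{\alpha\beta)}$.

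Combining the two contributions gives
\begin{equation*}
\LCM_{(\gamma}K_{\alpha\beta)} = \bigl[\LCM_{(\gamma}\alpha + 2\gamma^{ab}\Omega_a K_{b\,(\gamma}\bigr]g_{\alpha\beta)}
= \Omega_{(\gamma}g_{\alpha\beta)},
\end{equation*}
with $\Omega_\gamma=\LCM_\gamma\alpha+2\gamma^{ab}\Omega_a K_{b\,\gamma}$. Raising the free index recovers (\ref{eq:trivial_tensor_b}). The consistency check $\Omega_\gamma=(m+2)^{-1}\LCM_\alpha(2K^\alpha{}_\gamma+K\delta^\alpha_\gamma)$ follows either by tracing the identity just established, or by a direct divergence computation using the individual conformal Killing traces $\LCM_\alpha K_a{}^\alpha=m\Omega_a$.

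The only real obstacle is organizational: keeping the weight-one symmetrization convention straight so that one does not lose or double factors of two, and handling the internal $a,b$ indices carefully so that the symmetry of $\gamma^{ab}$ is exploited at the right moment. Everything else reduces to the CKV equation applied to each $K_a$ and metric compatibility.
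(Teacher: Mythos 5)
Your proof is correct and follows essentially the same route as the paper's: substitute the ansatz into the conformal Killing tensor equation, apply the conformal Killing vector equation $\LCM_{(\alpha}K_{a\,\beta)}=2\Omega_a g_{\alpha\beta}$ to each factor under the symmetrization, and read off $\Omega^\alpha=\LCM^\alpha\alpha+2\gamma^{ab}\Omega_a K_b{}^\alpha$ — you merely make explicit the permutation bookkeeping and the trace consistency check that the paper leaves implicit. Your observation about the typo ($K_b{}^{\alpha}$ should read $K_b{}^{\beta}$ in (\ref{eq:trivial_tensor_a})) is also correct.
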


\begin{proof}
The conformal Killing equation (\ref{eq:killing_equation_a}) applied to the tensor (\ref{eq:trivial_tensor_a}) gives the identity
\begin{align}
    \LCM_{(\alpha} K_{\beta\gamma)} =  \LCM_{(\alpha} (\alpha) g_{\beta\gamma)}+
    2 \gamma^{ab} \LCM_{(\alpha}(K_{a\beta})K_{b\gamma)} =\LCM_{(\alpha} (\alpha) g_{\beta\gamma)} +
    2 \gamma^{ab} \Omega_a g_{(\alpha\beta}K_{b\gamma)}.
\end{align}
Comparing this with Eq. (\ref{eq:killing_equation_b}) the corresponding vector $\Omega^{\alpha}$ reads as
\begin{align}
\Omega^{\alpha}=\LCM^\alpha \alpha+ 2 \gamma^{ab} \Omega_a K_b{}^{\alpha}.
\end{align}
\end{proof}
 
However, manifolds may have {\it irreducible} CKTs, providing new independent conserved quantities for null geodesics which refer to the hidden symmetries of geometries \cite{Frolov:2017kze}. The procedure of finding irreducible CKTs is not so straightforward as for reducible tensors and conformal Killing vectors. To simplify this problem, there exist various methods  based on the additional rigid restrictions  on the geometry of the manifold \cite{Krtous:2015ona,Garfinkle:2010er,Garfinkle:2013cha}. In this paper, we consider spacetimes with less restrictive conditions, namely manifolds with a codimension one foliation of general form that have additional arbitrary isometries.

In order to foliate a spacetime, we will use the following formalism and notations. First, we choose a foliation function $\Phi$ which can always be defined locally for any foliation. The vector normal to the foliation slices $S_s$ ($\Phi=s$) is
\begin{equation} \label{eq:hyper.splitting}
  n_\alpha=\pm\varphi\cdot\LCM_\alpha \Phi, \qquad n^\alpha n_\alpha=\epsilon,
\end{equation}
where $\varphi\equiv||\LCM_\alpha \Phi||^{-1}$ is the lapse function, and $\epsilon$ is $+1$ for timelike slices and $-1$ for spacelike slices. The induced slice metric reads
\begin{equation}
   h_{\alpha\beta} =
    g_{\alpha\beta}
    -\epsilon n_\alpha n_\beta,
\end{equation}
defining the projector operator $h^{\alpha}_{\beta}$ and the symmetric second fundamental form $\SFS_{\alpha\beta}$\footnote{The convention for second fundamental form $\chi$ is related to $\sigma$ in Ref. \cite{Kobialko:2021aqg} as follows: $\chi_{\alpha\beta} X^\alpha Y^\beta \equiv -\epsilon \sigma(X,Y)$.}
\begin{equation} \label{eq:form.projector}
    h^{\alpha}_{\beta} =
    \delta^{\alpha}_{\beta}
    -\epsilon n^\alpha n_\beta,\qquad
    \SFS_{\alpha\beta}
    \equiv
    h^{\lambda}_{\alpha}
    h^{\rho}_{\beta}
    \LCM_{\lambda} n_{\rho}.
\end{equation}
The corresponding tensor projections onto the tangent space of the slices read
\begin{equation} \label{eq:form.projector_2}
\T^{\beta\ldots}_{\gamma\ldots}=h^{\beta}_{\rho}\ldots  h_{\gamma}^{\tau}\ldots T^{\rho\ldots}_{\tau\ldots}, \qquad
    \LCS_\alpha \T^{\beta\ldots}_{\gamma\ldots} =
    h^{\lambda}_{\alpha} h^{\beta}_{\rho}\ldots  h_{\gamma}^{\tau}\ldots \LCM_\lambda \T^{\rho\ldots}_{\tau\ldots}, 
\end{equation}
where $\LCS_\alpha$ is a Levi-Civita connection in $S_s$. For a given foliation, we can additionally calculate the derivative of the normal vector along the slices flow
\begin{align} \label{eq:norm}
     n^\lambda \LCM_{\lambda} n_{\beta} =
    - \epsilon \LCS_\beta \ln \varphi.
\end{align}
In particular the commutator of the normal and tangent derivatives acting on a scalar is not zero (see App. \ref{auxiliary_identities})
\begin{align}
   &  [\varphi n^\alpha \nabla_\alpha, \LCS_\gamma] \omega =
    \left(
        n_\gamma \LCS^\alpha \varphi
        - \varphi \SFS^{\alpha}_{\gamma}
    \right)\LCS_\alpha \omega.   \label{eq:normal_tangent_commutator}
\end{align}
In what follows, the expression (\ref{eq:normal_tangent_commutator}) helps to reformulate the integrability condition for a system of first-order differential equations, which follows from the Frobenius theorem: 

\begin{proposition}\label{prop:integrability}
If there is an unknown scalar function $\omega$ such that 
\begin{equation} \label{eq:omega_eqs}
    n^\alpha \nabla_\alpha \omega = v,\qquad
    \LCS_\tau \omega = 0,
\end{equation}
then the integrability condition of this system of equations is
\begin{equation}
    \LCS_\alpha (\varphi v) = 0.
\end{equation}
\end{proposition}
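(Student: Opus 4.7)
The plan is to extract the integrability condition directly from the commutator identity \eqref{eq:normal_tangent_commutator}, exploiting the fact that $\omega$ is constant along the leaves of the foliation. First I would rewrite \eqref{eq:normal_tangent_commutator} in the form
\[
\LCS_\gamma(\varphi n^\alpha \LCM_\alpha \omega) = \varphi n^\alpha \LCM_\alpha(\LCS_\gamma \omega) - \bigl(n_\gamma \LCS^\alpha \varphi - \varphi \SFS^{\alpha}_{\gamma}\bigr)\LCS_\alpha \omega,
\]
and then substitute the hypothesized system \eqref{eq:omega_eqs}. The right-hand side vanishes identically: the second term is killed by $\LCS_\alpha\omega=0$, and the first term is killed because $\LCS_\gamma\omega\equiv 0$ everywhere, so any normal derivative of it is zero as well. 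The left-hand side then reads $\LCS_\gamma(\varphi v)$, which yields the asserted necessary condition $\LCS_\alpha(\varphi v)=0$ in one line of algebra.

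For the sufficiency direction I would invoke Frobenius by constructing $\omega$ explicitly. The tangential equation $\LCS_\tau\omega=0$ forces $\omega$ to be constant along each leaf $S_s$, so $\omega$ reduces to a function $f(\Phi)$ of the foliation parameter alone. Using $n_\alpha=\pm\varphi \LCM_\alpha \Phi$ from \eqref{eq:hyper.splitting}, I compute $n^\alpha \LCM_\alpha \Phi = \pm\epsilon/\varphi$, and the normal equation collapses to the ordinary differential equation $f'(\Phi) = \pm\epsilon\,\varphi v$. The assumption $\LCS_\alpha(\varphi v)=0$ is precisely what makes $\varphi v$ descend to a well-defined function of $\Phi$, so $f$ is recovered by a single quadrature and the construction is complete.

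The main obstacle, such as it is, is bookkeeping rather than substantive: the $\pm$ sign conventions from \eqref{eq:hyper.splitting} must cancel consistently in the sufficiency step (they do, since $\epsilon^2=1$), and one must assume the leaves are locally connected so that a function whose tangential gradient vanishes really is leaf-constant. Both points are automatic in the smooth codimension-one foliation framework set up in Sec.~\ref{sec:setup}, so no new geometric input is required beyond the commutator identity \eqref{eq:normal_tangent_commutator} itself.
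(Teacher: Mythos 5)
Your proof is correct, and its necessity half is exactly the paper's argument: the paper likewise evaluates the commutator $[\varphi n^\alpha \nabla_\alpha, \LCS_\gamma]\omega$ twice --- once via the identity (\ref{eq:normal_tangent_commutator}), where it vanishes because $\LCS_\alpha\omega=0$, and once directly from (\ref{eq:omega_eqs}), where it equals $-\LCS_\gamma(\varphi v)$ --- and equates the two. Where you genuinely differ is the sufficiency direction: the paper disposes of it with a one-line appeal to the Frobenius theorem, whereas you construct the solution outright, noting that $\LCS_\tau\omega=0$ forces $\omega=f(\Phi)$ on connected leaves, computing $n^\alpha\LCM_\alpha\Phi=\pm\epsilon/\varphi$ from (\ref{eq:hyper.splitting}) so that the normal equation collapses to $f'(\Phi)=\pm\epsilon\,\varphi v$ (your sign bookkeeping, using $\epsilon^2=1$, checks out), and observing that $\LCS_\alpha(\varphi v)=0$ is precisely what lets $\varphi v$ descend to a function of $\Phi$, after which $f$ is recovered by quadrature. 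Your route buys a more elementary and self-contained argument: it avoids Frobenius as a black box, makes explicit the two tacit hypotheses (local connectedness of the leaves and a fixed sign and causal character of $n_\alpha$), and exhibits the solution explicitly, unique up to a constant of integration. The paper's phrasing buys brevity and uniformity --- the identical commutator computation is reused verbatim for the later first-order systems, such as (\ref{eq:eq_psi}) for $\tilde\Psi$, the condition (\ref{eq:integrability_B}) for $\gamma^{ab}$, and the equations for $\alpha$ in Sec.~\ref{sec:KBG_theorem_non}, where re-deriving an explicit quadrature each time would be redundant.
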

\begin{proof}
A necessary and sufficient integrability condition on $\omega$ follows from the Frobenius theorem. From Eq. (\ref{eq:normal_tangent_commutator}) the vector field commutator acting on such function is zero. On the other hand, from Eq. (\ref{eq:omega_eqs}) follows another expression of the commutator
\begin{equation}
    [\varphi n^\alpha \nabla_\alpha, \LCS_\gamma] \omega =
    -\LCS_\gamma(\varphi n^\alpha \nabla_\alpha\omega) =
    -\LCS_\gamma(\varphi v),
\end{equation}
which should be zero. 
\end{proof}

\section{Conformal Killing tensors} 
\label{sec:conformal_killing_tensor}

The intended purpose of this section is to determine how the conformal Killing equations (\ref{eq:killing_equation}) split in general foliated spaces, generalizing the approach proposed in Refs. \cite{Garfinkle:2010er,Garfinkle:2013cha}.
In particular, we are interested in the appearance of an irreducible CKT lifted from a reducible one in a  foliation slice, as was the case with the KTs \cite{Kobialko:2021aqg}. This mechanism reflects the nature of hidden symmetries underlying the emergence of many well-known irreducible CKT fields. 

\subsection{Lift equations}
\label{sec:lift_equations}
\begin{proposition}\label{prop:projected_killing_equations}
Let $K^\alpha$ be a conformal Killing vector field in manifold $M$, foliated by $S_s$, with the normal $\zeta n^{\alpha}$ and tangent $\K^{\alpha}$ components, viz. $K^\alpha=\K^{\alpha} + \zeta n^{\alpha}$. Then, the Killing equations (\ref{eq:killing_equation_a}) split into the following parts (see App. \ref{proof_proposition_1}):
\begin{subequations}  \label{eq:kkv1}
\begin{align}
    & \label{eq:kkv1_1}
    \LCS_{(\alpha}\K_{\beta)} + 2 \zeta \SFS_{\alpha\beta} =
    2 \Omega h_{\alpha\beta},
    \\ &  \label{eq:kkv1_2}
    h^{\alpha}_{\rho} n^{\beta}\LCM_{\beta}\K_{\alpha} =
    \SFS^{\beta}_{\rho}\K_{\beta}+\epsilon \zeta  \LCS_{\rho} \ln (\varphi/\zeta),
    \\ & \label{eq:kkv1_3}
      \K^{\beta}\LCS_{\beta}\ln \varphi
    + n^{\alpha}\LCM_{\alpha}\zeta =
    \Omega.
\end{align}
\end{subequations}
\end{proposition}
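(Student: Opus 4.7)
The plan is to project the defining equation $\LCM_{\alpha}K_\beta+\LCM_\beta K_\alpha = 2\Omega g_{\alpha\beta}$ onto the three independent subspaces of symmetric two-tensors determined by the foliation: tangent--tangent ($h^{\mu}_{\alpha}h^{\nu}_{\beta}$), tangent--normal ($h^{\mu}_{\alpha}n^{\nu}$), and normal--normal ($n^{\mu}n^{\nu}$). Before projecting, I would substitute $K^{\alpha}=\K^{\alpha}+\zeta n^{\alpha}$ and then, in each projection, expand derivatives by the Leibniz rule, keeping in mind the elementary facts $n^{\nu}n_{\nu}=\epsilon$ (hence $n^{\nu}\LCM_{\mu}n_{\nu}=0$) and $\K^{\nu}n_{\nu}=0$ (hence $n^{\nu}\LCM_{\mu}\K_{\nu}=-\K_{\nu}\LCM_{\mu}n^{\nu}$). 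The right-hand side projects as $2\Omega h_{\alpha\beta}$, $0$, and $2\epsilon\Omega$ in the three cases respectively.

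For the tangent--tangent part, $h^{\mu}_{\alpha}h^{\nu}_{\beta}\LCM_{\mu}\K_{\nu}=\LCS_{\alpha}\K_{\beta}$ by definition of the induced connection, and $h^{\mu}_{\alpha}h^{\nu}_{\beta}\LCM_{\mu}(\zeta n_{\nu})=\zeta\, h^{\mu}_{\alpha}h^{\nu}_{\beta}\LCM_{\mu}n_{\nu}=\zeta\,\SFS_{\alpha\beta}$, since $h^{\nu}_{\beta}n_{\nu}=0$ kills the $\LCM_{\mu}\zeta$ term. Symmetrising yields Eq.~(\ref{eq:kkv1_1}) directly.

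For the mixed projection I would compute the two contributions to $h^{\mu}_{\alpha}n^{\nu}(\LCM_{\mu}K_{\nu}+\LCM_{\nu}K_{\mu})=0$ separately. The first gives $h^{\mu}_{\alpha}n^{\nu}\LCM_{\mu}\K_{\nu}=-\SFS^{\nu}{}_{\alpha}\K_{\nu}$ (using the two facts above) plus $h^{\mu}_{\alpha}n^{\nu}\LCM_{\mu}(\zeta n_{\nu})=\epsilon\LCS_{\alpha}\zeta$. The second contribution contains the Lie-type transport $h^{\mu}_{\alpha}n^{\nu}\LCM_{\nu}\K_{\mu}$, which is what we want to isolate, together with $\zeta\, h^{\mu}_{\alpha}n^{\nu}\LCM_{\nu}n_{\mu}=-\epsilon\zeta\LCS_{\alpha}\ln\varphi$ by the auxiliary identity (\ref{eq:norm}). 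Solving for the transport term and combining the $\LCS_\alpha\zeta$ and $\zeta\LCS_\alpha\ln\varphi$ pieces into $\epsilon\zeta\LCS_\alpha\ln(\varphi/\zeta)$ reproduces (\ref{eq:kkv1_2}). Finally, the normal--normal projection is a single equation: $n^{\mu}n^{\nu}\LCM_{\mu}(\K_{\nu}+\zeta n_{\nu})=\epsilon\Omega$; using $n^{\nu}\LCM_{\mu}\K_{\nu}=-\K_{\nu}\LCM_{\mu}n^{\nu}$ and once again (\ref{eq:norm}) to handle $n^{\mu}\LCM_{\mu}n^{\nu}$ converts the first term into $\epsilon\K^{\nu}\LCS_{\nu}\ln\varphi$, while the second term is $\epsilon n^{\mu}\LCM_{\mu}\zeta$, yielding (\ref{eq:kkv1_3}) after cancelling $\epsilon$.

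No step is genuinely difficult; the main hazard is bookkeeping, in particular keeping track of where $\epsilon$ factors appear and making sure the acceleration identity (\ref{eq:norm}) is used in the right sign whenever $n^{\nu}\LCM_{\nu}$ acts on $n_{\mu}$. The three projections are algebraically independent, so completeness is automatic: any symmetric two-tensor on $M$ decomposes uniquely into its tangent--tangent, tangent--normal and normal--normal parts with respect to the foliation, so recovering (\ref{eq:kkv1_1})--(\ref{eq:kkv1_3}) is equivalent to the original conformal Killing equation for $K^\alpha$.
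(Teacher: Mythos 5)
Your proof is correct and takes essentially the same route as the paper's Appendix~\ref{proof_proposition_1}: substitute $K^\alpha=\K^\alpha+\zeta n^\alpha$ into the conformal Killing equation and project with $h^\mu_\alpha h^\nu_\beta$, $h^\mu_\alpha n^\nu$ and $n^\mu n^\nu$, using $n^\nu\LCM_\mu n_\nu=0$, $\K^\nu n_\nu=0$ and the acceleration identity (\ref{eq:norm}). All $\epsilon$ factors and signs check out, including the recombination $-\epsilon\LCS_\rho\zeta+\epsilon\zeta\LCS_\rho\ln\varphi=\epsilon\zeta\LCS_\rho\ln(\varphi/\zeta)$ that yields (\ref{eq:kkv1_2}).
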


As a consequence of Eq. (\ref{eq:kkv1_1}), the conformal Killing vector field in $M$, tangent to all foliation slices $S_s$ (i.e., $\zeta=0$), is also a conformal Killing vector field in $S_s$ with the same function $\Omega$. In a general case of non-tangent vector fields, the projection $\mathcal{K}_\alpha$ is a conformal Killing vector in the foliation slices only if the second fundamental form $\chi_{\alpha\beta}$ is proportional to the induced metric $h_{\alpha\beta}$, i.e., slices are totally umbilic. Such totally umbilic slices arise if they are generated by a set of (conformal) Killing vector fields and/or the spacetime has a structure of the warped/twisted product \cite{Chen}. Therefore, the generation of the non trivial conformal Killing vector from low-dimensional one is possible only in the case of the totally umbilic foliation. As we will see further, the case of the CKTs is more intricate. 

\begin{proposition} \label{prop:killing_map}
Let $K_{\alpha\beta}$ be a CKT field in manifold $M$, foliated by $S_s$, with the normal $\zeta n_{\alpha}n_{\beta}$, tangent $\K_{\alpha\beta}$ and mixed $\beta_\alpha$ ($n^\alpha \beta_\alpha=0$) components, viz. $K_{\alpha\beta}=\K_{\alpha\beta} + \zeta n_{\alpha} n_{\beta} + \beta_{(\alpha}n_{\beta)}$. Then, the Killing equations (\ref{eq:killing_equation_b}) split into the following parts (see App. \ref{proof_proposition_2}):
\begin{subequations} \label{eq:killing_map_xy}
\begin{align}
    &
    \LCS_{(\alpha} \K_{\beta\gamma)}
    + 2 \beta_{(\alpha} \SFS_{\beta}{}_{\gamma)} =
    {}^\tau\Omega_{(\alpha} h_{\beta\gamma)},
   \label{eq:killing_map_xy_a} \\ &
    h^\beta_{(\sigma} h^\gamma_{\tau)} n^\alpha \LCM_{\alpha} \K_{\beta\gamma}
    - 2\SFS^\alpha_{(\sigma} \K_{\tau)\alpha}
    + 2 \epsilon \zeta \chi_{(\sigma\tau)}
    + 2 \epsilon \LCS_{(\sigma}\beta_{\tau)}
    - 2 \epsilon \beta_{(\sigma} \LCS_{\tau)} \ln \varphi =
    {}^n\Omega h_{(\sigma\tau)},
    \label{eq:killing_map_xy_b} \\ &
      h^\gamma_\tau n^\alpha \LCM_{\alpha}\beta_{\gamma}
    + \K^\beta_{\tau}\LCS_\beta \ln \varphi 
    + \epsilon\LCS_\tau (\zeta) /2
    - \epsilon \zeta \LCS_{\tau} \ln \varphi
    - \SFS^\alpha_\tau\beta_{\alpha} =
      {}^{\tau}\Omega_{\tau}/2,
    \label{eq:killing_map_xy_c} \\ &
      n^\alpha \LCM_{\alpha} \zeta
    + 2 \beta^{\beta} \LCS_\beta \ln \varphi =
    \epsilon \cdot {}^n\Omega, \label{eq:killing_map_xy_d}
\end{align} 
\end{subequations}
where
\begin{equation}
    {}^\tau\Omega_{\alpha}=h^\beta_\alpha\Omega_{\beta}, \quad 
    {}^n\Omega=n^\beta\Omega_{\beta}.
\end{equation}
\end{proposition}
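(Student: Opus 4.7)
The plan is to substitute the decomposition $K_{\alpha\beta} = \K_{\alpha\beta} + \zeta n_{\alpha} n_{\beta} + \beta_{(\alpha} n_{\beta)}$ into the conformal Killing equation $\LCM_{(\alpha} K_{\beta\gamma)} = \Omega_{(\alpha} g_{\beta\gamma)}$ and then project the resulting identity onto the normal--tangent decomposition of each free index. Since each of the three free indices can be contracted either with $h^{\cdot}_{\cdot}$ (tangential) or with $n^{\cdot}$ (normal), there are four inequivalent projections after accounting for the symmetrization, which will produce exactly the four equations (\ref{eq:killing_map_xy_a})--(\ref{eq:killing_map_xy_d}). Correspondingly, I decompose $\Omega_\alpha = {}^\tau\Omega_\alpha + \epsilon\,{}^n\Omega\, n_\alpha$, with ${}^\tau\Omega_\alpha = h^\beta_\alpha \Omega_\beta$ and ${}^n\Omega = n^\beta \Omega_\beta$.

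The key computational tool is to convert the ambient covariant derivative $\LCM$ acting on each piece into the slice connection $\LCS$ plus correction terms involving $\SFS_{\alpha\beta}$ and $\LCS_\alpha \ln \varphi$. Specifically, for a tangential tensor $T_{\alpha\beta\ldots}$ one has $h^{\cdots}\LCM T = \LCS T$ via (\ref{eq:form.projector_2}), while any $\LCM n$ produces a second fundamental form through (\ref{eq:form.projector}) when projected tangentially, and $-\epsilon\LCS_\beta\ln\varphi$ when contracted with $n$ via (\ref{eq:norm}). Schematically, I would first expand $\LCM_\alpha K_{\beta\gamma}$ into three blocks (the $\K$-block, the $\zeta nn$-block, and the $\beta n$-block), using the Leibniz rule on the $n$-factors, and then symmetrize in $(\alpha\beta\gamma)$.

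The four projections are then carried out in turn. The fully tangential projection $h^\alpha_{\alpha'}h^\beta_{\beta'}h^\gamma_{\gamma'}$ immediately gives (\ref{eq:killing_map_xy_a}), since the only $\LCM n$ term that survives is $\SFS_{\beta\gamma}\beta_\alpha$ (the $\zeta n n$ block is killed by all three projectors). The two-tangential, one-normal projection extracts $n^\alpha \LCM_\alpha \K_{\beta\gamma}$ together with the Weingarten-type terms $\SFS^\alpha_{(\sigma}\K_{\tau)\alpha}$, the $\zeta\chi_{\sigma\tau}$ contribution coming from differentiating $n_\beta n_\gamma$ along the normal, and the $\LCS_{(\sigma}\beta_{\tau)} - \beta_{(\sigma}\LCS_{\tau)}\ln\varphi$ terms from the mixed block, yielding (\ref{eq:killing_map_xy_b}). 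The one-tangential, two-normal projection isolates the normal transport of $\beta$ and the $\K^\beta_\tau\LCS_\beta\ln\varphi$ term that arises from the $\zeta nn$ and $\K$ blocks through (\ref{eq:norm}), producing (\ref{eq:killing_map_xy_c}); and the fully normal projection $n^\alpha n^\beta n^\gamma$ collapses to $n^\alpha\LCM_\alpha\zeta + 2\beta^\beta\LCS_\beta\ln\varphi = \epsilon\,{}^n\Omega$, which is (\ref{eq:killing_map_xy_d}).

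The main obstacle is purely bookkeeping: the symmetrization weight convention $T_{(\alpha\beta\gamma)} = T_{\alpha\beta\gamma}+\text{permutations}$ combined with the mixed block $\beta_{(\alpha}n_{\beta)}$ produces many cross terms, and one must be careful that the $n_\alpha \LCS_\beta(\cdot)$-type contributions do not actually survive a symmetric tangential projection (they cancel against partners coming from the Leibniz expansion of $\LCM$ on the normal factors). I expect that keeping track of the factors of $\epsilon$ arising each time two $n$'s are contracted, and of $\varphi$ arising through (\ref{eq:norm}), is where care must be taken; but no new geometric identity beyond (\ref{eq:form.projector}), (\ref{eq:form.projector_2}), and (\ref{eq:norm}) is required.
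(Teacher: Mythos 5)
Your proposal follows essentially the same route as the paper's own proof in Appendix~\ref{proof_proposition_2}: substitute the decomposition $K_{\alpha\beta}=\K_{\alpha\beta}+\zeta n_\alpha n_\beta+\beta_{(\alpha}n_{\beta)}$ into Eq.~(\ref{eq:killing_equation_b}), apply the four inequivalent projector combinations $hhh$, $nhh$, $nnh$, $nnn$ to both sides, and convert ambient derivatives using only Eqs.~(\ref{eq:form.projector}), (\ref{eq:form.projector_2}) and (\ref{eq:norm}), which is exactly the paper's computation (including your correct identification of where each term --- the $\SFS$ Weingarten terms, $\zeta\chi_{\sigma\tau}$, the $\LCS\beta$ and $\LCS\ln\varphi$ terms --- originates). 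The sketch is sound as it stands; the only minor imprecision is attributive (the $\K^\beta_\tau\LCS_\beta\ln\varphi$ term in Eq.~(\ref{eq:killing_map_xy_c}) comes solely from the $\K$-block via Eq.~(\ref{eq:norm}), while the $\zeta nn$-block supplies the $\LCS_\tau\zeta$ and $\zeta\LCS_\tau\ln\varphi$ terms), which does not affect the validity of the argument.
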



As before, Eq. (\ref{eq:killing_map_xy_a}) reduces to the conformal Killing equation in slices only under some additional assumptions. 

\begin{proposition} \label{prop:killing_mixed} 
Tangent projections $\K_{\alpha\beta}$ of the CKT $K_{\alpha\beta}$ is a CKT in slices if and only if at least one of the following conditions is fulfilled (see App. \ref{proof_proposition_3}):

(a) Zero mixed components, i.e., $\beta_{\alpha}=0$.

(b) Slices are totally umbilic, i.e., $\chi_{\alpha\beta}\sim h_{\alpha\beta}$.  

(c) Dimension of spacetime is $m=3$, $\beta_{\alpha}\beta^{\alpha}=0$ and
\begin{align} 
&h_{\alpha\beta}=\alpha_{(\alpha} \beta_{\beta)}/(\alpha_{\gamma}\beta^{\gamma}), \quad \sigma_{\alpha\beta}= \alpha_{(\alpha}   \alpha_{\beta)}/(\alpha_{\gamma}\beta^{\gamma}),
\end{align}
where $\alpha_\alpha$ is an arbitrary null vector such that $\alpha_{\gamma}\beta^{\gamma}\neq0$.
\end{proposition}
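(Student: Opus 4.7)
The plan is to reduce the proposition to a pointwise algebraic condition on each slice and then exhaust the cases. Comparing (\ref{eq:killing_map_xy_a}) with the defining equation of a slice CKT, $\LCS_{(\alpha}\K_{\beta\gamma)} = \tilde\Omega_{(\alpha} h_{\beta\gamma)}$, we see that $\K_{\alpha\beta}$ is a CKT on every slice if and only if at every point there exists a tangent vector $w_\alpha$ such that
\begin{equation} \label{eq:plan_algebra}
\beta_{(\alpha}\SFS_{\beta\gamma)} = w_{(\alpha} h_{\beta\gamma)}.
\end{equation}
From here on the problem is purely algebraic and can be treated in the tangent space of a slice at a fixed point.

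The ``if'' direction is immediate: in case (a) both sides vanish with $w=0$; in case (b), $\SFS_{\alpha\beta} = \lambda h_{\alpha\beta}$ gives $w_\alpha = \lambda \beta_\alpha$; and in case (c) a direct substitution in the null frame $(\alpha,\beta)$ verifies (\ref{eq:plan_algebra}) with $w \propto \alpha$. For the ``only if'' direction, I would assume $\beta \neq 0$ and $\SFS \not\propto h$ and deduce that we must be in case (c). Contracting (\ref{eq:plan_algebra}) with $h^{\beta\gamma}$ pins down $w$,
\begin{equation}
(d+2)\, w_\alpha = \beta_\alpha \SFS + 2\, \beta^\sigma \SFS_{\sigma\alpha}, \qquad d = m-1,
\end{equation}
with $\SFS = h^{\alpha\beta}\SFS_{\alpha\beta}$. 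Substituting back turns (\ref{eq:plan_algebra}) into a trace-free algebraic identity involving only $\beta$ and the traceless part $\tilde{\SFS}$ of $\SFS$.

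The core step is to contract this trace-free identity with $\beta^\alpha$ (and then $\beta^\alpha\beta^\beta$), which forces $\tilde{\SFS}$ into a rigid symmetric rank-one shape $\beta_{(\mu} u_{\nu)}$ with $u$ algebraically tied to $\beta$ and $\tilde{\SFS}$. If $\beta^2 \neq 0$, the tracelessness of $\tilde{\SFS}$ combined with this shape yields $\tilde{\SFS} = 0$, placing us in case (b) against assumption; hence $\beta^2 = 0$. In the null-$\beta$ branch, counting the independent tensor components in the trace-free identity rules out slice dimension $d \geq 3$, leaving only $d = 2$, i.e., $m = 3$. The remaining freedom is then solved in a null frame of the two-dimensional slice spanned by $\beta$ and an auxiliary null vector $\alpha$ with $\alpha \cdot \beta \neq 0$, producing exactly the expressions for $h_{\alpha\beta}$ and $\SFS_{\alpha\beta}$ (equivalently $\sigma_{\alpha\beta}$) stated in (c).

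The main obstacle I anticipate is the dimension-counting step that excludes $d \geq 3$ in the null-$\beta$ branch. Once the trace-free identity pushes $\tilde{\SFS}$ into its rank-one form, one must show that in $d \geq 3$ no non-trivial null configuration can simultaneously satisfy tracelessness and be compatible with arbitrary tangent directions in the orthogonal complement of $\beta$. This is where the peculiarity of two dimensions, in which the orthogonal complement of a null vector is one-dimensional and itself null, becomes decisive and underpins why case (c) is restricted to $m=3$.
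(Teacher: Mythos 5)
Up to the decisive step, your route is the paper's route (App.~\ref{proof_proposition_3}): your pointwise identity $\beta_{(\alpha}\SFS_{\beta\gamma)}=w_{(\alpha}h_{\beta\gamma)}$ is Eq.~(\ref{eq:appendix_1}), your trace formula for $w_\alpha$ is Eq.~(\ref{eq:appendix_2}) after passing to the trace-free part (Eq.~(\ref{eq:appendix_3})), and your $\beta^\alpha\beta_\alpha\neq0$ branch concluding $\tilde{\SFS}=0$ matches the paper's Case A, which runs the single, double and triple $\beta$-contractions (Eqs.~(\ref{eq:appendix_4_a})--(\ref{eq:appendix_4_c})) to force $\alpha_\gamma=0$ and then $\sigma_{\alpha\beta}=0$.

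The genuine gap is the null-$\beta$ branch, and it is exactly the step you flag as an obstacle. ``Counting independent tensor components'' cannot rule out slice dimension $d\geq3$, because nontrivial solutions with null $\beta_\alpha\neq0$ \emph{do} exist in every dimension --- namely the totally umbilic ones --- so the exclusion is only relative to your standing assumption $\SFS_{\alpha\beta}\not\propto h_{\alpha\beta}$ and must be derived, not counted. The paper's mechanism is explicit and does not involve counting: the double contraction, Eq.~(\ref{eq:appendix_4_b}), reads $(\beta^\beta\beta_\beta)\alpha_\gamma=\frac{3-m}{2m}(\beta^\alpha\alpha_\alpha)\beta_\gamma$, so for null $\beta$ the coefficient $(3-m)$ produces the dichotomy: either $m=3$ with $\beta^\alpha\alpha_\alpha\neq0$ (leading to case (c), where Eq.~(\ref{eq:appendix_4_a}) degenerates into the stated expression for $h_{\alpha\beta}$ itself), or $\beta^\alpha\alpha_\alpha=0$. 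Your sketch omits this second subcase entirely --- note it also occurs at $m=3$ --- and it is where the real work sits: there one deduces $\alpha_{(\beta}\beta_{\gamma)}=0$, hence $\alpha_\gamma=0$, leaving $\beta_{(\alpha}\sigma_{\beta\gamma)}=0$ together with $\beta^\beta\sigma_{\beta\gamma}=0$; the paper then kills every component of $\sigma_{\alpha\beta}$ by contracting with pairs and triples of a null frame $\{\beta,\beta',X_{(i)}\}$ built using the Lorentzian signature ($\beta'_\alpha\beta^\alpha=-1$), yielding $\sigma_{\alpha\beta}=0$, i.e., case (b) and thus a contradiction with your assumption when $m\geq4$. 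Your closing heuristic about the orthogonal complement of a null vector being one-dimensional in $d=2$ correctly locates \emph{why} $m=3$ is special, but it does not substitute for these two missing ingredients: the $(3-m)$ contraction identity that creates the dichotomy, and the frame computation that disposes of the $\beta^\alpha\alpha_\alpha=0$ subcase. As written, the proof is incomplete at precisely its load-bearing step, and the proposed counting strategy would not supply it.
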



In what follows we ignore the case (c) since we are mainly interested in spacetimes of higher dimensions $m>3$. The case of totally geodesic/umbilic slices (b) was considered, to some extent, in Refs. \cite{Krtous:2015ona,Garfinkle:2010er,Garfinkle:2013cha}. Since in the framework of the current paper we are mainly interested in non umbilic slices we consider only the first case (a), and Eq. (\ref{eq:killing_map_xy}) reduces to the following system
\begin{subequations} \label{eq:kkv2}
\begin{align}
    &
    \LCS_{(\alpha} \K_{\beta\gamma)} =
    {}^\tau\Omega_{(\alpha} h_{\beta\gamma)},
    \\ &
    h^\beta_{(\sigma} h^\gamma_{\tau)} n^\alpha \LCM_{\alpha} \K_{\beta\gamma}
    - 2\SFS^\alpha_{(\sigma} \K_{\tau)\alpha}
    + 2 \epsilon \zeta \chi_{(\sigma\tau)} =
    {}^n\Omega h_{(\sigma\tau)},
    \\ &
    \LCS_\tau \zeta =
      \epsilon\cdot{}^{\tau}\Omega_{\tau}
    + 2\zeta \LCS_{\tau} \ln \varphi
    - 2\epsilon \cdot\K^\beta_{\tau}\LCS_\beta \ln \varphi,
    \\ &
    n^\alpha \LCM_{\alpha} \zeta =
    \epsilon \cdot {}^n\Omega.   
\end{align} 
\end{subequations}
In the general case, the problem of finding a solution of the equations (\ref{eq:kkv2}) still remains very nontrivial. However, the situation can be simplified if the slices of the foliation have a number of additional internal symmetries. 

\subsection{Slice-reducible CKT}
\label{sec:slice_reducible}

Suppose that the manifold $M$ has a collection of $n\leq m-2$ linearly independent conformal Killing vector fields $K_a{}^\alpha$ tangent to the foliation slices $S_s$, i.e., $K_a{}^\alpha=\K_a{}^\alpha$. The Gram matrix $\G_{ab}\equiv \K_a{}_{\alpha} \K_b{}^{\alpha}$ is assumed to be invertable $\G^{ab} = \left(\G_{ab}\right)^{-1}$. Additionally, we will introduce a set of complementary vectors $\mathcal{X}_i^\alpha$ ($i=1,\ldots,m-n-1$) orthogonal to Killing vectors $\mathcal{K}_a{}_\alpha \mathcal{X}_i^\alpha  = 0$ such that $\mathcal{K}_a{}^\alpha$ and $\mathcal{X}_i^\alpha$ form a complete basis in the foliation slices\footnote{Such complete basis is well defined since one can naturally choose a set of linearly independent conformal Killing vectors such that their Gram matrix is not degenerate almost everywhere \cite{Kobialko:2021aqg}.}. As a result of proposition \ref{prop:projected_killing_equations}, such vectors $K_a{}^\alpha$ are also conformal Killing vectors in slices $S_s$, and a reducible CKT field of the form (\ref{eq:trivial_tensor}) is always defined
\begin{align} \label{eq:trivial_ansatz}
 \K_{\alpha\beta}=\alpha h_{\alpha\beta} + \gamma^{ab} \K_{a\alpha} \K_{b\beta}, \qquad
 \LCS_{\tau} \gamma^{ab}=0, \quad \gamma^{ab}=\gamma^{ba}.
\end{align} 
Substituting this reducible CKT (\ref{eq:trivial_ansatz}) into equations (\ref{eq:kkv2}), and using equations (\ref{eq:kkv1}) with $\zeta=0$ for Killing vector, we obtain the final system of equations
\begin{subequations} \label{eq:system_3}
\begin{align}
    &  \label{eq:system_3_2}
        \LCS_\tau \ln(\zeta-\epsilon \alpha)=\LCS_{\tau} \ln \varphi^2,
    \\ &  \label{eq:system_3_3}
        2\epsilon ( \zeta-\epsilon \alpha) \chi_{\sigma\tau} =
          \epsilon n^\alpha \LCM_{\alpha}( \zeta -\epsilon \alpha) h_{\sigma\tau}
        - n^\alpha \LCM_{\alpha}\gamma^{ab}\K_{a\sigma}\K_{b\tau},
    \\ &  \label{eq:system_3_1}
        {}^n\Omega= \epsilon n^\alpha \LCM_{\alpha} \zeta,
    \\ & \label{eq:system_3_4}
    {}^{\tau}\Omega_{\tau} =
    \LCS_\tau \alpha+ 2 \gamma^{ab}  \K_a{}_{\tau}  \K_b{}^{\alpha}\LCS_\alpha \ln \varphi,
\end{align} 
\end{subequations}
Equations (\ref{eq:system_3_1}), (\ref{eq:system_3_4}) give expressions for ${}^n\Omega$ and ${}^\tau\Omega_\tau$ through other quantities. There is always a reducible solution for these equations $\zeta=\epsilon \alpha$ and $\gamma^{ab}=\text{const}$ which corresponds to the reducible CKT in the bulk. However, in some cases it can also have irreducible solutions which corresponds to the "slice-reducible" CKT.

\begin{definition} 
Conformal Killing tensor $K_{\alpha\beta}$ are called slice-reducible if there exists at least one foliation of the manifold such that tangent projections  $\K_{\alpha\beta}$ in all slices are reducible.
\end{definition} 

Next, we will obtain  general explicit solutions for $\zeta$ and $\gamma^{ab}$, and find the corresponding necessary and sufficient integrability conditions. 

\subsection{Solution for $\zeta$}
\label{sec:solution_zeta}

Projection of Eq. (\ref{eq:system_3_3}) onto conformal Killing vectors $\K_{a}{}^\alpha$ and orthogonal vectors $\X_i^\alpha$ gives zero
\begin{equation} \label{eq:block_condition}
\SFS_{\alpha\beta}\K_{a}{}^\alpha \X_i^\beta=0,
\end{equation}
leading to the condition for the second fundamental form being block diagonal in the basis of vectors $\{\mathcal{K}_a, \mathcal{X}_i\}$. Contracting Eq. (\ref{eq:system_3_3}) with the orthogonal vectors $\X_{i}^\alpha$ only, one can find $\SFS_{\alpha\beta}$ to be partially umbilic \cite{Kobialko:2020vqf,Kobialko:2021aqg} in these directions
\begin{align} \label{eq:umbilic_condition}
\SFS_{\alpha\beta}\X_i^\alpha\X_j^\beta=\chi h_{\alpha\beta}\X_i^\alpha\X_j^\beta,
\end{align}
with a weighted trace of the second fundamental form over orthogonal directions $\mathcal{X}_i^\alpha$ \footnote{Such trace has the meaning of the principal or mean curvature of the given orthogonal subbundle. The quantity $\chi$ is related to $h{}^\Omega$ in Ref. \cite{Kobialko:2021aqg} as follows: $\chi = -\epsilon\; h{}^\Omega$}
\begin{align}\label{eq:chi_equation}
    \chi \equiv
    \frac{1}{m-n-1} \sum_{i} \frac{
        \mathcal{X}_i^\alpha \chi_{\alpha\beta} \mathcal{X}_i^\beta
    }{
        \mathcal{X}_i^\alpha h_{\alpha\beta} \mathcal{X}_i^\beta
    } =
    \frac{1}{2}n^\alpha \LCM_{\alpha} \ln(\zeta-\epsilon \alpha).
\end{align}      
Substituting $\zeta-\epsilon \alpha \equiv e^\Psi$ into Eq. (\ref{eq:chi_equation}) and (\ref{eq:system_3_2}), one can get
\begin{subequations}  \label{eq:kk_eq_psi}
\begin{align}
    &
    \label{eq:normal_eq_psi}
    n^\alpha \LCM_{\alpha}\Psi = 2\chi,
    \\ &
    \label{eq:tangent_eq_psi}
    \LCS_\tau \Psi=\LCS_{\tau} \ln \varphi^2.
\end{align}
\end{subequations}
The system can be simplified, introducing another associated quantity $\tilde\Psi\equiv \Psi - \ln \varphi^2 $ constant in slices
\begin{subequations} \label{eq:eq_psi}
\begin{align}
    &
    \label{eq:normal_eq_psi_t}
    n^\alpha \LCM_{\alpha}\tilde{\Psi} =
    2 \left(\chi - n^\alpha \LCM_{\alpha}\ln \varphi\right),
    \\ &
    \label{eq:tangent_eq_psi_t}
    \LCS_\tau \tilde{\Psi}=0.
\end{align}
\end{subequations}
Following proposition \ref{prop:integrability}, the system $(\ref{eq:eq_psi})$ is integrable if the following condition holds
\begin{align}
    \LCS_\gamma \left( \varphi \chi - \varphi n^\alpha \LCM_{\alpha} \ln \varphi \right)
    = 0.
\end{align}
Keeping in mind that function $\alpha$ is arbitrary in bulk, one can express $\zeta$ from the definition of $\Psi$
\begin{equation}
    \zeta =
    e^{\tilde\Psi} \varphi^2 + \epsilon \alpha.
\end{equation}

\subsection{Solution for $\gamma^{ab}$}
\label{sec:solution_gamma}

In order to find $\gamma^{ab}$ for arbitrary slices we need the following auxiliary statement which generalizes proposition 4.1 in Ref. \cite{Kobialko:2021aqg}.

\begin{proposition} \label{prop:second_form}
Let all conformal Killing vector fields $\K_a{}^\alpha$ be tangent to all slices $S_s$ of the foliation. Then the following identity holds (see App. \ref{proof_proposition_5}):
\begin{align}
 n^\alpha \nabla_\alpha \G_{ab}=2\SFS_{\alpha\beta} \mathcal{K}_{a}{}^{\alpha}\mathcal{K}_{b}{}^{\beta}.
\end{align} 
\end{proposition}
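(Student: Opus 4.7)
The plan is to differentiate the Gram matrix along $n^\alpha$ and use the conformal Killing equation together with the tangency condition $n^\alpha \K_a{}_\alpha = 0$ to swap a derivative into the second fundamental form. Concretely, starting from $\G_{ab}=\K_a{}^\alpha\K_b{}_\alpha$, I would write
\begin{equation}
n^\mu\LCM_\mu\G_{ab}
= n^\mu\K_b{}^\alpha\LCM_\mu\K_a{}_\alpha
+ n^\mu\K_a{}^\alpha\LCM_\mu\K_b{}_\alpha.
\end{equation}

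Next, I would apply the conformal Killing equation (\ref{eq:killing_equation_a}) for each vector in the form $\LCM_\mu\K_a{}_\alpha=-\LCM_\alpha\K_a{}_\mu+2\Omega_a g_{\mu\alpha}$. Contracting with $n^\mu\K_b{}^\alpha$, the trace piece $2\Omega_a n^\mu\K_b{}_\mu$ vanishes because $\K_b$ is tangent to the slices, and similarly for the $(a\leftrightarrow b)$ term. This is the only place where the conformal character actually enters: the $\Omega$ contribution is killed by tangency, so from here on the calculation proceeds exactly as in the exact Killing case (Prop. 4.1 of Ref. \cite{Kobialko:2021aqg}).

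Now I would exploit tangency a second time. Since $n_\alpha\K_a{}^\alpha=0$ identically on $M$, acting with $\K_b{}^\alpha\LCM_\alpha$ gives $\K_b{}^\alpha n_\mu\LCM_\alpha\K_a{}^\mu=-\K_b{}^\alpha\K_a{}^\mu\LCM_\alpha n_\mu$, and similarly with $a\leftrightarrow b$. Combining the two terms yields
\begin{equation}
n^\mu\LCM_\mu\G_{ab}
= 2\,\K_a{}^\mu\K_b{}^\alpha\LCM_{(\alpha} n_{\mu)}.
\end{equation}

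Finally, because the foliation is generated by $n_\alpha=\pm\varphi\LCM_\alpha\Phi$ (hypersurface-orthogonal), $\LCM_\alpha n_\mu$ is symmetric up to a piece proportional to $n$, which is annihilated by the tangent vectors $\K_a,\K_b$; equivalently, the projected expression equals $\SFS_{\alpha\mu}$ as defined in (\ref{eq:form.projector}). Thus $\K_a{}^\mu\K_b{}^\alpha\LCM_{(\alpha} n_{\mu)}=\SFS_{\alpha\beta}\K_a{}^\alpha\K_b{}^\beta$, giving the claimed identity. The only subtlety relative to the Killing-vector case is verifying that the trace term $\Omega_a$ drops out, and that is immediate from the tangency assumption, so no genuine obstacle is expected.
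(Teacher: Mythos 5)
Your proof is correct and is essentially the paper's own argument run in reverse: the same three ingredients --- the Leibniz rule on $\G_{ab}$, the conformal Killing equation with the $\Omega_a$ terms killed by tangency, and the tangency relation trading $n_\beta\LCM_\alpha\K_a{}^\beta$ for $-\K_a{}^\beta\LCM_\alpha n_\beta$ (i.e., the second fundamental form) --- appear in the paper's appendix proof, which simply starts from $-2\SFS_{\alpha\beta}\K_a{}^\alpha\K_b{}^\beta$ and ends at $-n^\beta\LCM_\beta\G_{ab}$. The only cosmetic discrepancy is that your intermediate display $2\K_a{}^\mu\K_b{}^\alpha\LCM_{(\alpha}n_{\mu)}$ implicitly uses weight-$1/2$ symmetrization, whereas the paper's convention is $T_{(\alpha\beta)}=T_{\alpha\beta}+T_{\beta\alpha}$; the final identity is unaffected.
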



Projecting Eq. (\ref{eq:system_3_3}) onto directions along conformal Killing vectors $\K_{c}{}^\sigma \K_{d}{}^\tau$, and using proposition \ref{prop:second_form}, one will get the equation for $\gamma^{ab}$
\begin{align} \label{eq:gamma_kkv}
    n^\alpha\LCM_{\alpha} \left(
          e^{\Psi} \G^{ab} - \epsilon \gamma^{ab}
    \right) = 0,
\end{align}
in addition to the constancy on slices $\LCS_\tau \gamma^{ab} = 0$. For the sake of convenience, we represent the solution in the form
\begin{align}
    \gamma^{ab} = \epsilon e^{\Psi} \G^{ab} + \nu^{ab},\qquad
    n^\alpha\LCM_{\alpha} \nu^{ab} = 0,
\end{align} 
where $\nu^{ab}$ is a new matrix. Following proposition \ref{prop:integrability}, the integrability condition for $\gamma^{ab}$ is
\begin{equation} \label{eq:integrability_B}
    \LCS_\gamma( \varphi n^\alpha \nabla_\alpha (e^\Psi \G^{ab})) = 0.
\end{equation}
Using the Leibniz rule and Eq. (\ref{eq:normal_eq_psi}), the condition (\ref{eq:integrability_B}) can be rewritten in the following $\Psi$-free form
\begin{equation}\label{eq:integrability_B_2}
    \LCS_\gamma\left(
          \G^{ab}
        + \frac{1}{2\chi}  n^\alpha \nabla_\alpha \G^{ab}
    \right) 
    =
    -\LCS_\gamma\ln\left(\chi \varphi^3\right)
    \left(
          \G^{ab}
        + \frac{1}{2\chi}  n^\alpha \nabla_\alpha \G^{ab}
    \right),
\end{equation}
reminding the integrability condition in Ref. \cite{Kobialko:2021aqg}. Though condition (\ref{eq:integrability_B}) is fair for any $\chi$, condition (\ref{eq:integrability_B_2}) does not work for $\chi=0$. It is so, since we implicitly assume $\chi\neq0$, deriving (\ref{eq:integrability_B_2}) from (\ref{eq:integrability_B}). Going through the same steps in the derivation of (\ref{eq:integrability_B_2}), but taking into account $\chi=0$, results in another condition
\begin{equation} \label{eq:integrability_B_3}
    \LCS_\gamma(n^\alpha \nabla_\alpha  \G^{ab})
    =
    - \LCS_\gamma\ln\varphi^3 \cdot n^\alpha \nabla_\alpha  \G^{ab}.
\end{equation}
Thus, we have obtained all unknown functions of the system (\ref{eq:system_3}) and can formulate the following theorem representing the key result.

\subsection{The theorem}
\label{sec:KBG_theorem}

\begin{theorem} 
\label{KBG}
Let the manifold $M$ contains a collection of $n\leq m-2$ conformal Killing vector fields $\K_a{}^\alpha$ with a non-degenerate Gram matrix $\G_{ab}=\K_a{}^\alpha\K_b{}_\alpha$, tangent to the foliation slices $S_s$ with the second fundamental form $\chi_{\alpha\beta}$ satisfying (\ref{eq:block_condition}), (\ref{eq:umbilic_condition}) and the integrability conditions\footnote{If $\chi=0$, Eq. (\ref{eq:integrability_G_b}) should be replaced by  (\ref{eq:integrability_B_3}).}
\begin{subequations} \label{eq:integrability_G}
\begin{align} \label{eq:integrability_G_a}
  &\LCS_\gamma \left( \varphi \chi - \varphi n^\alpha \LCM_{\alpha} \ln \varphi \right)
    = 0,\\ 
&  \label{eq:integrability_G_b} \LCS_\gamma\left(
          \G^{ab}
        + \frac{1}{2\chi}  n^\alpha \nabla_\alpha \G^{ab}
    \right) 
    =
    -\LCS_\gamma\ln\left(\chi \varphi^3\right)
    \left(
          \G^{ab}
        + \frac{1}{2\chi}  n^\alpha \nabla_\alpha \G^{ab}
    \right).
\end{align}
\end{subequations}
Then, there is a slice-reducible CKT on manifold $M$, which can be constructed as follows:

{\bf Step one}: Obtain $\Psi = \ln \varphi^2 + \tilde\Psi$ from the equations
\begin{align} \label{eq:equations_Psi}
     n^\alpha \LCM_{\alpha}\tilde{\Psi} =
    2 \left(\chi - n^\alpha \LCM_{\alpha}\ln \varphi\right), \quad \LCS_\tau \tilde \Psi = 0.
\end{align}

{\bf Step two}: Obtain $\gamma^{ab}$ from the equations
\begin{align}
    \gamma^{ab} = \epsilon e^{\Psi} \G^{ab} + \nu^{ab}, \quad
    n^\alpha\LCM_{\alpha} \nu^{ab} = 0, \quad \LCS_\tau \gamma^{ab} = 0.
\end{align} 

{\bf Step three}: Using the functions found in the previous steps, construct the corresponding CKT:
\begin{subequations}
\begin{align}
    \label{eq:theorem_tensor}
   & K_{\alpha\beta} =
    \alpha g_{\alpha\beta}
    + \gamma^{ab} \K_{a\alpha} \K_{b\beta}
    + e^{\Psi} n_{\alpha} n_{\beta},\\
   & \label{eq:theorem_factor}
   {}^n\Omega =n^\alpha \LCM_{\alpha} \alpha+ 
   2 \epsilon \chi  e^{\Psi}, \quad {}^{\tau}\Omega_{\tau}=\LCS_\tau \alpha+ 2 \gamma^{ab}  \K_a{}_{\tau}  \K_b{}^{\alpha}\LCS_\alpha \ln \varphi,
\end{align} 
\end{subequations}
where $\alpha$ is an arbitrary function.
\end{theorem}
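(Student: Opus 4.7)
The plan is to take the ansatz (\ref{eq:theorem_tensor}), which via $g_{\alpha\beta}=h_{\alpha\beta}+\epsilon n_\alpha n_\beta$ corresponds to the Proposition \ref{prop:killing_map} decomposition with $\K_{\alpha\beta}=\alpha h_{\alpha\beta}+\gamma^{ab}\K_{a\alpha}\K_{b\beta}$, $\beta_\alpha=0$, and $\zeta=\epsilon\alpha+e^{\Psi}$, and verify piece by piece that it solves the split CKT system (\ref{eq:killing_map_xy}). Since $\beta_\alpha=0$, case (a) of Proposition \ref{prop:killing_mixed} applies and the tangential equation (\ref{eq:killing_map_xy_a}) reduces to the CKT equation in each slice for $\K_{\alpha\beta}$. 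Because every $\K_a$ is tangent to the slices, Proposition \ref{prop:projected_killing_equations} with $\zeta=0$ identifies $\K_a$ as a conformal Killing vector in $S_s$, so the slicewise version of Proposition \ref{prop:trivial_tensor} makes (\ref{eq:killing_map_xy_a}) automatic and fixes the tangential conformal factor, giving the formula for ${}^\tau\Omega_\tau$ in (\ref{eq:theorem_factor}).

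\textbf{Reduction to first-order ODEs.} The remaining equations (\ref{eq:killing_map_xy_b})--(\ref{eq:killing_map_xy_d}) then collapse to the system (\ref{eq:system_3}), in which (\ref{eq:system_3_1}) and (\ref{eq:system_3_4}) merely read off ${}^n\Omega$ and ${}^\tau\Omega_\tau$, matching (\ref{eq:theorem_factor}). The real content sits in (\ref{eq:system_3_2})--(\ref{eq:system_3_3}). My strategy is to project (\ref{eq:system_3_3}) onto the three blocks of the frame $\{\K_a,\X_i\}$: the mixed $(\K_a,\X_i)$ projection forces the block-diagonality (\ref{eq:block_condition}); the $(\X_i,\X_j)$ projection together with its weighted trace yields the partial umbilicity (\ref{eq:umbilic_condition}) and the $\Psi$-equation (\ref{eq:normal_eq_psi}) through the definition (\ref{eq:chi_equation}) of $\chi$; and the $(\K_a,\K_b)$ projection, after substituting Proposition \ref{prop:second_form} for $n^\alpha\LCM_\alpha\G_{ab}$, produces the normal ODE (\ref{eq:gamma_kkv}) for $\gamma^{ab}$. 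Combined with the slice-constancy from (\ref{eq:trivial_ansatz}) and the tangential equation (\ref{eq:system_3_2}) rewritten as (\ref{eq:tangent_eq_psi}), this gives exactly the defining conditions for $\tilde\Psi$ and $\nu^{ab}$ listed in Steps one and two.

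\textbf{Integrability via Frobenius.} The main obstacle is to guarantee that these first-order systems actually admit solutions, and this is where the hypotheses (\ref{eq:integrability_G}) enter. I would apply Proposition \ref{prop:integrability} in each case: the normal derivative is prescribed and the tangential derivatives vanish, so solvability reduces to $\LCS_\alpha(\varphi v)=0$. For the $\tilde\Psi$-system the source is $2(\chi-n^\alpha\LCM_\alpha\ln\varphi)$, and the compatibility condition is precisely (\ref{eq:integrability_G_a}). For the $\gamma^{ab}$-system, after writing $\gamma^{ab}=\epsilon e^{\Psi}\G^{ab}+\nu^{ab}$, the source for $\nu^{ab}$ is $-\epsilon\, n^\alpha\LCM_\alpha(e^{\Psi}\G^{ab})$; the Frobenius condition is (\ref{eq:integrability_B}), and using the Leibniz rule together with (\ref{eq:normal_eq_psi}) to eliminate $\Psi$ converts it into the stated (\ref{eq:integrability_G_b}), with the $\chi=0$ case handled by (\ref{eq:integrability_B_3}) via the same argument carried out before dividing by $\chi$. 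Assembling $\K_{\alpha\beta}$, $\zeta=\epsilon\alpha+e^{\Psi}$, and $\beta_\alpha=0$ reproduces (\ref{eq:theorem_tensor}), and the conformal factors read off in the reduction step yield (\ref{eq:theorem_factor}), finishing the construction.
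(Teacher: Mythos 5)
Your proposal is correct and takes essentially the same route as the paper: it splits the CKT equation via Proposition \ref{prop:killing_map} with $\beta_\alpha=0$ (case (a) of Proposition \ref{prop:killing_mixed}), reduces to the system (\ref{eq:system_3}), projects (\ref{eq:system_3_3}) onto the $\{\K_a,\X_i\}$ blocks to obtain (\ref{eq:block_condition}), (\ref{eq:umbilic_condition}) and the normal ODEs for $\Psi$ and $\gamma^{ab}$ (the latter via Proposition \ref{prop:second_form}), and then invokes the Frobenius-type Proposition \ref{prop:integrability} to turn solvability into exactly (\ref{eq:integrability_G_a}) and (\ref{eq:integrability_B})/(\ref{eq:integrability_G_b}), with (\ref{eq:integrability_B_3}) covering $\chi=0$. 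This matches the paper's construction in Sections \ref{sec:lift_equations}--\ref{sec:KBG_theorem} step for step, so there is nothing to add.
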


Instead of this algorithm, one can suggest the following modification. First, we check the integrability condition (\ref{eq:integrability_G_a}) and find $\Psi$ from step one. This allows us to use the condition (\ref{eq:integrability_B}) instead of (\ref{eq:integrability_G_b}), which works for all $\chi$. And then, we go through steps two and three as before.
\subsection{Umbilic foliation}
\label{sec:umbilic_foliation}
If the foliation slices turn out to be totally umbilic or geodesic, conditions (\ref{eq:block_condition}) and (\ref{eq:umbilic_condition}) are obviously satisfied automatically with $\chi=\chi_\alpha{}^\alpha/(m-1)$.
From proposition \ref{prop:second_form} and totally umbilic condition follows the relation between $\G^{ab}$ and $\chi$
\begin{align}
 n^\alpha \nabla_\alpha \G^{ab}=-2\chi\G^{ab}.
\end{align}   
Then, condition (\ref{eq:integrability_G_b}) (or (\ref{eq:integrability_B_3}) if $\chi=0$) is fulfilled automatically insofar as
\begin{align}
    \G^{ab} + \frac{1}{2\chi}  n^\alpha \nabla_\alpha \G^{ab}=0
    \qquad \text{(or} \quad
    n^\alpha \nabla_\alpha \G^{ab}=0\text{)},
\end{align}  
and from Eq. (\ref{eq:gamma_kkv}) follows the constancy of $\gamma^{ab}$ along the normal direction
\begin{align} 
  n^\alpha \nabla_\alpha (e^\Psi \G^{ab})=n^\alpha \nabla_\alpha \Psi e^\Psi \G^{ab}-2\chi e^\Psi \G^{ab}=0  \quad \Rightarrow  \quad  n^\alpha\LCM_{\alpha} \gamma^{ab} = 0.
\end{align}
That is, there is only a trivial solution $\gamma^{ab}=\text{const}$. Nevertheless, function $e^{\Psi}$ is not trivial. As the result, the obtained CKT consists of two terms: the reducible part $\alpha g_{\alpha\beta} + \gamma^{ab} \K_{a\alpha} \K_{b\beta}$ and $e^\Psi n^\alpha n^\beta$. Without loss of generality, we can neglect the reducible part, so the CKT is $K^{\alpha\beta} = e^\Psi n^\alpha n^\beta$ with $\Omega_\alpha = 2 \chi e^{\Psi} n_\alpha$. However, even the remaining CKT turns out to be reducible as a consequence of the following proposition.

\begin{proposition} \label{prop:reducible_umbilic}
If $K^{\alpha\beta} = e^\Psi n^\alpha n^\beta$ is a CKT with $\Omega_\alpha = 2 \chi e^{\Psi} n_\alpha$ and equations (\ref{eq:kk_eq_psi}) are satisfied then $e^{\Psi/2} n^\alpha$ is a conformal Killing vector (see App. \ref{proof_proposition_4}) 
\begin{align}  \label{eq:conformal_reducible}
 \LCM_{(\alpha} (e^{\Psi/2} n_{\beta)})= \Omega g_{(\alpha\beta)}, \quad  \Omega = \chi e^{\Psi / 2}.
\end{align}
\end{proposition}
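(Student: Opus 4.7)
The plan is a direct computation: differentiate $e^{\Psi/2} n_\beta$, symmetrize, and match to $\chi e^{\Psi/2} g_{(\alpha\beta)}$, using the two structural inputs available in this subsection — equations (\ref{eq:kk_eq_psi}) for $\Psi$ and the totally umbilic property $\chi_{\alpha\beta}=\chi h_{\alpha\beta}$. The totally umbilic condition is in force by context, but as a sanity check I would also note that it can be recovered from the CKT hypothesis itself by contracting the cyclic identity $\nabla_{(\alpha}K_{\beta\gamma)}=\Omega_{(\alpha}g_{\beta\gamma)}$ with $n^{\gamma}$ and projecting the remaining free indices with $h$: after using $n^{\gamma}\nabla_{\alpha}(n_{\beta}n_{\gamma})$ in the Leibniz rule and the identity $n^{\lambda}\nabla_{\lambda}n_{\alpha}=-\epsilon\LCS_{\alpha}\ln\varphi$, all logarithmic-lapse terms cancel and one is left with $\chi_{\alpha\beta}=\chi h_{\alpha\beta}$.

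Next, I decompose the two ingredients. From (\ref{eq:normal_eq_psi})–(\ref{eq:tangent_eq_psi}) the full gradient of $\Psi$ splits as
\begin{equation}
\nabla_{\alpha}\Psi = \LCS_{\alpha}\Psi + \epsilon n_{\alpha}\, n^{\lambda}\nabla_{\lambda}\Psi = 2\LCS_{\alpha}\ln\varphi + 2\epsilon\chi\, n_{\alpha},
\end{equation}
and using $n^{\alpha}n_{\alpha}=\epsilon$ together with (\ref{eq:norm}) one finds
\begin{equation}
\nabla_{\alpha}n_{\beta}=\chi_{\alpha\beta}-n_{\alpha}\LCS_{\beta}\ln\varphi,
\end{equation}
with $\chi_{\alpha\beta}$ symmetric. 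Substituting these into
\begin{equation}
\nabla_{\alpha}(e^{\Psi/2}n_{\beta})=\tfrac{1}{2}e^{\Psi/2}(\nabla_{\alpha}\Psi)n_{\beta}+e^{\Psi/2}\nabla_{\alpha}n_{\beta},
\end{equation}
and symmetrizing in $(\alpha\beta)$ with weight one, the terms proportional to $n_{(\alpha}\LCS_{\beta)}\ln\varphi$ produced by $\nabla\Psi$ cancel exactly against those coming from $\nabla_{(\alpha}n_{\beta)}$, leaving only
\begin{equation}
\nabla_{(\alpha}(e^{\Psi/2}n_{\beta)}) = 2e^{\Psi/2}\bigl(\chi_{\alpha\beta}+\epsilon\chi\, n_{\alpha}n_{\beta}\bigr).
\end{equation}

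Invoking the totally umbilic condition $\chi_{\alpha\beta}=\chi h_{\alpha\beta}$, the bracket collapses to $\chi(h_{\alpha\beta}+\epsilon n_{\alpha}n_{\beta})=\chi g_{\alpha\beta}$, which is precisely $\tfrac{1}{2}\chi g_{(\alpha\beta)}$, giving (\ref{eq:conformal_reducible}). To identify $\Omega$ it remains to take the trace: using $\nabla_{\alpha}n^{\alpha}=h^{\alpha\beta}\chi_{\alpha\beta}=(m-1)\chi$ (again by umbilicity) together with $n^{\alpha}\nabla_{\alpha}\Psi=2\chi$, one gets $\nabla_{\alpha}(e^{\Psi/2}n^{\alpha})=m\chi e^{\Psi/2}$, so $\Omega=m^{-1}\nabla_{\alpha}K^{\alpha}=\chi e^{\Psi/2}$, matching the stated formula.

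The only step that requires care rather than routine bookkeeping is the precise cancellation of the $\LCS\ln\varphi$ contributions in the symmetrization; this is where the specific coefficient $1/2$ in the exponent $e^{\Psi/2}$ and the normalization in (\ref{eq:tangent_eq_psi}) are both essential, so I expect the sign/coefficient balance there to be the only place where one has to be careful. Everything else is mechanical use of the definitions in Sec.~\ref{sec:setup}.
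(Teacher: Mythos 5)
Your proof is correct, but it runs in the opposite direction from the paper's. The paper's proof (App.~F) takes the assumed rank-2 CKT equation $\nabla_{(\alpha}K_{\beta\gamma)}=\Omega_{(\alpha}g_{\beta\gamma)}$, contracts one index with $n^\gamma$, and observes that the residual vector term $\left(\tfrac{\epsilon}{2}\nabla_{(\alpha}\Psi+n^\lambda\nabla_\lambda n_{(\alpha}-\chi n_{(\alpha}\right)$ vanishes by the gradient identity $\nabla_\alpha\Psi=-2\epsilon n^\lambda\nabla_\lambda n_\alpha+2\epsilon\chi n_\alpha$ following from Eqs.~(\ref{eq:kk_eq_psi}) and (\ref{eq:norm}); the umbilic condition never appears explicitly, being carried implicitly inside the contracted CKT equation. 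You instead compute $\nabla_{(\alpha}(e^{\Psi/2}n_{\beta)})$ ab initio from the decompositions $\nabla_\alpha\Psi=2\mathcal{D}_\alpha\ln\varphi+2\epsilon\chi n_\alpha$ and $\nabla_\alpha n_\beta=\chi_{\alpha\beta}-n_\alpha\mathcal{D}_\beta\ln\varphi$, obtaining the intermediate identity $\nabla_{(\alpha}(e^{\Psi/2}n_{\beta)})=2e^{\Psi/2}(\chi_{\alpha\beta}+\epsilon\chi n_\alpha n_\beta)$, valid under (\ref{eq:kk_eq_psi}) alone, and then close with $\chi_{\alpha\beta}=\chi h_{\alpha\beta}$. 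Your sanity check that umbilicity follows from the hypotheses rather than from context is both correct and necessary — the proposition's literal hypotheses do not state umbilicity — and is exactly Eq.~(\ref{eq:killing_map_xy_b}) specialized to $\mathcal{K}_{\alpha\beta}=0$, $\beta_\alpha=0$, $\zeta=e^\Psi$, which with ${}^n\Omega=2\epsilon\chi e^\Psi$ forces $\chi_{\sigma\tau}=\chi h_{\sigma\tau}$. The paper's extraction is shorter given the machinery already in place; your synthesis buys a sharper structural statement, namely that granted the $\Psi$-equations the conformal Killing property of $e^{\Psi/2}n^\alpha$ is \emph{equivalent} to total umbilicity, thereby isolating precisely where the CKT hypothesis enters, and your trace computation $\nabla_\alpha(e^{\Psi/2}n^\alpha)=m\chi e^{\Psi/2}$ independently fixes $\Omega=\chi e^{\Psi/2}$ in accordance with the normalization $\Omega=m^{-1}\nabla_\alpha K^\alpha$ of Eq.~(\ref{eq:killing_equation_a}), a consistency check the paper omits. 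All coefficients and signs in your cancellation of the $n_{(\alpha}\mathcal{D}_{\beta)}\ln\varphi$ terms are right under the paper's weight-one symmetrization convention.
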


Keeping in mind possibilities for the mixed component $\beta^\alpha$, i.e., proposition \ref{prop:killing_mixed}, theorem \ref{KBG} gives the general form and conditions for the existence of a slice-reducible CKTs in spaces with non-totally umbilic foliation which can be irreducible in the bulk, i.e., for the first option $\beta_\alpha = 0$. In contrast to this, in spaces with totally umbilic foliation, such CKTs are always reducible. However, in the latter case slice-reducible CKTs with $\beta^\alpha\neq0$ can also exist and it may turn out to be irreducible in the bulk \cite{Popa:2007eq}. We leave the analysis of this option as one of the directions for further researches. 
\subsection{Weyl invariance}
\label{sec:weyl_invariance}
The existence of the CKTs is an invariant property under the Weyl transformations. To check the consistency of theorem \ref{KBG}, let us show the invariance.
Let us perform Weyl transformation $\bar{g}_{\alpha\beta}=e^{2\psi}g_{\alpha\beta}$, keeping conformal Killing vectors $\K_a{}^\alpha$, the length $n_\alpha n^\alpha  = \epsilon$ and the foliation function invariant. The bar above symbols denotes the transformed quantities. Then one can obtain the following transformation rules
\begin{align} 
    \bar{n}^\alpha=e^{-\psi}n^\alpha, \quad
    \bar{\varphi}=e^{\psi}\varphi, \quad
    \bar{\chi}= e^{-\psi} \left(\chi + n^\alpha \LCM_\alpha \psi\right),\qquad
    \bar{\G}^{ab} = e^{-2\psi}\G^{ab}.
\end{align} 
The system of equations (\ref{eq:eq_psi}) for $\tilde{\Psi}$ remains invariant, so we can leave $\bar{\tilde{\Psi}} = \tilde{\Psi}$ invariant as well. Using the relation between $\Psi$ and $\tilde{\Psi}$, one can get a new expression for $\bar{\Psi} = 2\psi + \Psi$, so the quantity $e^{\bar{\Psi}} \bar{\G}^{ab}$ is invariant either. Following these transformation rules, one can come up with the invariance of the integrability conditions under Weyl transformations, which is consistent with the idea of CKTs.  

\subsection{Reduction to exact KT}
\label{sec:KBG_theorem_non}

The resulting tensor is exact KT if ${}^n\Omega = 0$, ${}^\tau\Omega_\tau = 0$. In Ref. \cite{Kobialko:2021aqg}, a similar theorem was shown to exist for KTs of rank two if $\K_{a}{}^\alpha$ are non-conformal Killing vectors. In this case $\K_a{}^\alpha \LCS_\alpha \ln \varphi = 0$, and Eq. (\ref{eq:theorem_factor}) becomes a new system of equations on function $\alpha$
\begin{align}
   n^\alpha \LCM_{\alpha} \alpha = - 2 \epsilon \chi  e^{\Psi}, \quad
   \LCS_\tau \alpha = 0.
\end{align}
The condition of the slice-constancy of $\alpha$ makes tensor $\K_{\alpha\beta}$ be a KT in slices. Following proposition \ref{prop:integrability}, the integrability condition for $\alpha$ coincides with the ``compatibility condition'' in Ref. \cite{Kobialko:2021aqg}
\begin{equation} \label{eq:integrability_nonconformal}
    \LCS_\gamma (\chi\varphi^3) = 0.
\end{equation}
Using this condition, the second integrability condition (\ref{eq:integrability_G_b}) reduces to the result in Ref. \cite{Kobialko:2021aqg}
\begin{equation}
    \LCS_\gamma\left(
          \G^{ab}
        + \frac{1}{2\chi}  n^\alpha \nabla_\alpha \G^{ab}
    \right) 
    = 0.
\end{equation}

The new condition in Eq. (\ref{eq:integrability_nonconformal}) allows us to formulate the theorem from Ref. \cite{Kobialko:2021aqg} more precisely.

\begin{theorem} 
\label{KBG_non}
Let the manifold $M$ contains a collection of $n\leq m-2$ Killing vector fields $\K_a{}^\alpha$ with a non-degenerate Gram matrix $\G_{ab}=\K_a{}^\alpha\K_b{}_\alpha$, tangent to the foliation slices $S_s$ with the second fundamental form satisfying (\ref{eq:block_condition}), (\ref{eq:umbilic_condition}) and the integrability conditions
\begin{subequations}
\begin{align}
   & \LCS_\gamma \left( \varphi \chi - \varphi n^\alpha \LCM_{\alpha} \ln \varphi \right)
    = 0,\\
   & \LCS_\gamma (\chi\varphi^3) = 0, \quad \LCS_\gamma\left(
          \G^{ab}
        + \frac{1}{2\chi}  n^\alpha \nabla_\alpha \G^{ab}
    \right) =0.
\end{align}
\end{subequations}
Then, there is a slice-reducible KT on manifold $M$,  which can be constructed as follows:

{\bf Step one}: Obtain $\Psi = \ln \varphi^2 + \tilde\Psi$ from equations
\begin{align}
     n^\alpha \LCM_{\alpha}\tilde{\Psi} =
    2 \left(\chi - n^\alpha \LCM_{\alpha}\ln \varphi\right), \quad \LCS_\tau \tilde \Psi = 0.
\end{align}

{\bf Step two}: Obtain $\alpha$ and $\gamma^{ab}$ from equations
\begin{subequations}
    \begin{equation}
       n^\alpha \LCM_{\alpha} \alpha = - 2 \epsilon \chi  e^{\Psi}, \quad
       \LCS_\tau \alpha = 0.
    \end{equation}
    \begin{equation}
        \gamma^{ab} = \epsilon e^{\Psi} \G^{ab} + \nu^{ab}, \quad
        n^\alpha\LCM_{\alpha} \nu^{ab} = 0, \quad \LCS_\tau \gamma^{ab} = 0.
    \end{equation} 
\end{subequations}

{\bf Step three}: Using the functions found in the previous steps, construct the corresponding KT:
\begin{align}
   & K_{\alpha\beta} =
      \alpha g_{\alpha\beta}
    + \gamma^{ab} \K_{a\alpha} \K_{b\beta}
    + e^{\Psi} n_{\alpha} n_{\beta}.  
\end{align} 
\end{theorem}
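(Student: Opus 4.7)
The plan is to derive Theorem \ref{KBG_non} as a direct specialization of Theorem \ref{KBG}, exploiting (i) that exact Killing vectors are a special case of conformal Killing vectors with vanishing divergence, and (ii) the requirement that the resulting bulk tensor be a genuine Killing tensor, i.e.\ ${}^n\Omega = 0$ and ${}^\tau\Omega_\tau = 0$. All three steps of the construction and the first integrability condition (\ref{eq:integrability_G_a}) are inherited verbatim from Theorem \ref{KBG}; the task is to see how the remaining data reorganize once one imposes exactness on both the input vectors and the output tensor.

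First I would exploit the exactness of the $\K_a{}^\alpha$. Since each $\K_a{}^\alpha$ is a Killing (not merely conformal Killing) vector tangent to the slices, Proposition \ref{prop:projected_killing_equations} applied with $\zeta=0$ and $\Omega=0$ reduces Eq.\ (\ref{eq:kkv1_3}) to the identity $\K_a{}^\beta \LCS_\beta \ln \varphi = 0$. This is the crucial simplification unavailable in the purely conformal case. Next I would impose ${}^n\Omega=0$ and ${}^\tau\Omega_\tau=0$ in Eq.\ (\ref{eq:theorem_factor}). The tangent condition, combined with $\K_a{}^\beta \LCS_\beta \ln\varphi = 0$, collapses the second term of ${}^\tau\Omega_\tau$ and yields $\LCS_\tau\alpha=0$, i.e.\ $\alpha$ is constant on slices. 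The normal condition yields the evolution equation $n^\alpha \LCM_\alpha \alpha = -2\epsilon \chi e^\Psi$. Together, these form precisely the system for $\alpha$ appearing in Step two of the theorem.

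The key remaining point is to identify the new integrability content. Applying Proposition \ref{prop:integrability} to the system for $\alpha$ gives $\LCS_\gamma(\varphi\chi e^\Psi)=0$; using the decomposition $\Psi = \ln\varphi^2+\tilde\Psi$ from Step one together with $\LCS_\tau\tilde\Psi=0$, the factor $e^{\tilde\Psi}$ is constant on slices and can be stripped, producing exactly the new condition $\LCS_\gamma(\chi\varphi^3)=0$. This same condition then kills the right-hand side of the original condition (\ref{eq:integrability_G_b}), reducing it to the homogeneous form $\LCS_\gamma(\G^{ab}+\tfrac{1}{2\chi}\,n^\alpha\nabla_\alpha \G^{ab})=0$ stated in the theorem. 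The output tensor from Theorem \ref{KBG} is then automatically a Killing tensor because we have forced both components of $\Omega_\alpha$ to vanish. The main obstacle is bookkeeping rather than substance: one must check carefully that the $\chi\to 0$ subtlety of condition (\ref{eq:integrability_G_b}) is benign here (it either collapses to (\ref{eq:integrability_B_3}) or is absorbed once $\chi\varphi^3$ is slice-constant), and that the reduction genuinely preserves the slice-reducibility of $\K_{\alpha\beta}$ as required by the definition.
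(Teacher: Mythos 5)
Your proposal is correct and follows essentially the same route as the paper's own derivation in Sec.~\ref{sec:KBG_theorem_non}: specialize Theorem~\ref{KBG}, use $\K_a{}^\alpha\LCS_\alpha\ln\varphi=0$ (which the paper asserts and you properly trace back to Eq.~(\ref{eq:kkv1_3}) with $\zeta=0$, $\Omega=0$) to turn ${}^n\Omega={}^\tau\Omega_\tau=0$ into the system for $\alpha$, then obtain $\LCS_\gamma(\chi\varphi^3)=0$ from Proposition~\ref{prop:integrability} and use it to reduce (\ref{eq:integrability_G_b}) to its homogeneous form. No gaps; your treatment of the $e^{\tilde\Psi}$ stripping and the simplification of the second integrability condition matches the paper exactly.
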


Note that the existence of the exact KT imposes more restrictions on the geometry of the foliation and includes conditions for the existence of a CKT. In particular, a metric that admits the existence of the KT always admits the existence of an infinite family of CKTs obtained by adding the quantity $\alpha g_{\alpha\beta}$.
\subsection{The case of $n\geq m-1$ conformal Killing vectors}
\label{sec:many_vectors}
If a space contains $n=m-1$ conformal Killing vectors with an invertable Gram matrix $\G^{ab} = \left(\G_{ab}\right)^{-1}$, we can also consider another generation scheme. Proceeding as before, we can find the following equations for the tensor lift in this particular case
\begin{subequations}
\begin{align} 
 &  \Psi=\tilde{\Psi}+\ln \varphi^2, \quad  \LCS_\tau\tilde{\Psi}=0, \\
       &\gamma^{ab} = \epsilon e^{\Psi} \G^{ab} + \nu^{ab}, \quad
    n^\alpha\LCM_{\alpha} \nu^{ab} = 0, \quad \LCS_\tau \gamma^{ab} = 0.
    \\ & 
        {}^n\Omega=n^\alpha \LCM_{\alpha} \alpha+ \epsilon e^{\Psi} n^\alpha \LCM_{\alpha} \Psi, \quad 
    {}^{\tau}\Omega_{\tau} =
    \LCS_\tau \alpha+ 2 \gamma^{ab}  \K_A{}_{\tau}  \K_b{}^{\alpha}\LCS_\alpha \ln \varphi.
\end{align}
\end{subequations}
Since there is no any directions $\X_i^\alpha$, equations that are obtained from the contraction with $\X_i^\alpha$ (\ref{eq:block_condition}), (\ref{eq:umbilic_condition}),  (\ref{eq:normal_eq_psi_t}) and (\ref{eq:integrability_G_a}) should be completely ignored. The only integrability condition left is (\ref{eq:integrability_B}).

If a space contains $n > m-1$ conformal Killing vectors, we can still use theorem \ref{KBG} just taking into consideration only $m-2$ or less elements with invertable Gram matrix or the aforementioned scheme with $m-1$ vectors. Choosing different subsets from the full set of conformal Killing vectors, the resulting slice-reducible CKTs are not guaranteed to be the same.
\section{Connection with photon submanifolds and shadows}
\label{sec:photon}
Formation of shadows and relativistic images of stationary black holes and other ultracompact objects is closely related to {\em photon regions} \cite{Grenzebach,Grenzebach:2015oea,GrenzebachSBH}, which are defined as compact domains where photons can travel endlessly without escaping to infinity or disappearing at the event horizon.
Indeed, the boundary of the gravitational shadow corresponds to the set of light rays that inspiral asymptotically onto the part of the spherical surfaces in photon regions \cite{Wilkins:1972rs,Teo:2020sey,Dokuchaev:2019jqq}.

Spherical surfaces in the photon region are just as important for determining the shadow of a stationary black hole as the {\em photon surfaces} \cite{Claudel:2000yi,Gibbons:2016isj} in the static case\footnote{For recent review of strong gravitational lensing and shadows see \cite{Perlick:2021aok,Cunha:2018acu,Dokuchaev:2019jqq}. }. Recall that an important property of the photon surfaces is established by the theorem asserting that these are timelike  totally umbilic hypersurfaces $S$ in spacetime. This property can serve as a constructive definition for analyzing photon surfaces instead of solving geodesic equations and plays a decisive role in in the analysis of the black hole uniqueness \cite{Cederbaum,Yazadjiev:2015hda,Yazadjiev:2015mta,Yazadjiev:2015jza,Yoshino:2016kgi,Yazadjiev:2021nfr,Koga:2020gqd,Rogatko,Cederbaumo,Cederbaum:2019rbv} and area bounds \cite{Shiromizu:2017ego,Feng:2019zzn,Yang:2019zcn}. It is especially useful in the cases when the geodesic equations are non-separable, and their analytic solution can not be found \cite{Cornish:1996de,Cunha:2016bjh,Semerak:2012dw, Shipley:2016omi,Cunha:2018gql,Cunha:2017eoe}.
 
However, in  rotating spacetimes such as Kerr, the surfaces $r=const$ in the photon  region do not fully satisfy the umbilic condition and may have a boundary.  
Such surfaces usually  form a family, specified by the value of the azimuthal impact parameter $\rho=L/E$, where $L,E$ are the integrals of motion corresponding to the timelike and azimuthal Killing vector fields \cite{Galtsov:2019bty,Galtsov:2019fzq}. To describe these surfaces and the photon  region geometrically, the concept of {\em partially umbilic} submanifolds that weaken the  umbilic  condition was introduced. Namely, it is possible to impose the umbilic condition not on {\em all} vectors from the tangent space $TS$, but only on some subset of $TS$ specified by the azimuthal impact parameter. In addition, we must specify the boundary conditions for the submanifolds so that the photon does not escape through them. Together, this leads to the definition of {\em fundamental photon  submanifolds} (FPS) \cite{Kobialko:2020vqf}. The slices consisting of the  fundamental photon surfaces form generalized photon regions. 
 
Here we will attempt to generalize   the relationship between fundamental photon surfaces and slice-reducible KTs found in  \cite{Kobialko:2021aqg}, generalizing the ideas of Ref. \cite{Koga:2020akc} where the relationship between KTs and ordinary photon surfaces was stated. 

\subsection{Photon submanifolds}
\label{sec:photon_submanifold} 
Recall the main ideas underlying the concept of FPS \cite{Kobialko:2020vqf,Kobialko:2021uwy}. Consider the case of a manifold with two conformal Killing vectors spanning a timelike surface ($\epsilon=+1$, $\det (\mathcal G_{ab})<0$). Let us define a conformal Killing vector field  
\begin{equation}\label{FPS1}
 \rho^\alpha=\rho^a\K_a{}^\alpha, \quad \rho^a=(\rho,1), 
\end{equation}
where $\rho$ is a constant and $\rho^a$ is not timelike anywhere ($\rho^\alpha\rho_\alpha \geq 0$). Consider an arbitrary affinely parameterized null geodesic $\gamma$ with the conserved quantity $\rho_\alpha\dot{\gamma}^\alpha$ equal to zero. From the definition (\ref{FPS1}) we find that $\rho=-\mathcal{K}_2{}_\alpha\dot{\gamma}^\alpha/\mathcal{K}_1{}_\alpha\dot{\gamma}^\alpha$ and $\rho$ can be called the \textit{generalized impact parameter} (see Ref. \cite{Kobialko:2020vqf} for details).
However, one can choose an arbitrary parametrization of $\rho^\alpha$ up to the norm. In addition, we introduce the orthogonal vector $\tau^a$ such that $\tau^a \G_{ab} \rho^b = 0$ (for simplicity we will use the definition $\tau^a \equiv \sqrt{-\G} \G^{ab}\epsilon_{bc} \rho^c $, where $\epsilon_{12}=-\epsilon_{21}=1$ is an antisymmetric tensor density).

\begin{definition}
The fundamental photon surface is a partially umbilic surface with a second fundamental form satisfying (\ref{eq:block_condition}), (\ref{eq:umbilic_condition}) such that for some $\rho^a\neq0$, $\tau_a\LCS_\alpha\rho^a=0$ (i.e., the direction of $\rho^a$ is constant at each surface) the following master equation is satisfied
\begin{align}
    \rho^a \mathcal{M}_{ab} \rho^b = 0,
    \qquad
    \mathcal{M}_{ab} \equiv
      \frac{1}{2 \chi}\cdot n^\alpha \LCM_\alpha \mG_{ab} - \frac{1}{2 \chi} n^\alpha \LCM_\alpha\ln \mathcal G\cdot\mG_{ab}
    + \mG_{ab}.
\label{FPS3}
\end{align}
\end{definition} 

Every FPS is characterized by its own $\rho$. The key property of FPS with a given $\rho$ is that any null geodesic with this $\rho$ on such a surface is a null geodesic in the bulk \cite{Kobialko:2020vqf}. If the photon surface is totally umbilic, the operator $\mathcal{M}_{\alpha\beta}$ is identically zero.

Consider the timelike foliation generating a slice-reducible CKT in accordance with the theorem \ref{KBG}. The question is whether such foliation slices form a set of fundamental photon surfaces. If the slices are totally umbilic, they represent photon surfaces by definition. Now consider the non-umbilic slices. One can easily find the solution of Eq. (\ref{FPS3}) with respect to $\rho$ and check the condition $\rho_\alpha \rho^\alpha\geq0$
\begin{subequations}
\begin{align}\label{PR1a}
    \rho^a = \left(\frac{-\mathcal M_{12}\pm \sqrt{-\mathcal M}}{\mathcal M_{11}},1\right),
\end{align}
\begin{align}\label{PR1b}
\pm 2(\mathcal G_{12}\mathcal M_{11}-\mathcal G_{11}\mathcal M_{12})\sqrt{-\mathcal M}-2\mathcal G_{11} \cdot \mathcal M+\mathcal M_{11} \cdot\mathcal G \cdot {\rm Tr}(\mathcal M) \geq0,
\end{align}
\end{subequations}
where 
\begin{equation}
\mathcal M\equiv \det (\mathcal M_{ab}) = \mathcal{M}_{11}\mathcal{M}_{22} - \mathcal{M}_{12}^2,\qquad
{\rm Tr}(\mathcal M)\equiv\mathcal M_{ab} \mathcal G^{ab}=2-(2\chi)^{-1}n^\alpha \LCM_\alpha\ln\mG.
\end{equation}
The inequality (\ref{PR1b}) defines the so-called photon region \cite{Grenzebach,Grenzebach:2015oea}, which arises as a flow of fundamental photon surfaces \cite{Kobialko:2020vqf}. Nevertheless, the constancy of the direction $\rho^a$ in each slice has not been proven yet.

Let us act on Eq. (\ref{FPS3}) with $\LCS_\tau$
\begin{align}
\LCS_\tau \mathcal{M}_{ab}\rho^a\rho^b+2\mathcal{M}_{ab}\rho^a \LCS_\tau\rho^b=0. 
\label{FPS4}
\end{align}
The expression from the second term $\LCS_\tau\rho^b$ consists of two parts: the change of the vector length $\left(\LCS_\tau\rho^b\right)_l\equiv(\rho_a\LCS_\tau\rho^a) / (\rho^a \rho_a) \cdot \rho^b$ directed along $\rho^b$, and the change of the vector direction $\left(\LCS_\tau\rho^b\right)_d\equiv(\tau_a\LCS_\tau\rho^a) / (\tau^a \tau_a) \cdot \tau^b$ directed along $\tau^b$. Since the master equation (\ref{FPS3}) holds, the change of the vector length $\left(\LCS_\tau\rho^b\right)_l$ does not contribute to Eq. (\ref{FPS4}). The term $\mathcal{M}_{ab}\rho^a \LCS_\tau\rho^b$ can be zero for non-zero $\left(\LCS_\tau\rho^b\right)_d$ only if $\mathcal{M}_{ab}\rho^a \tau^b = \mathcal{M}_{ab}\rho^a \rho^b = 0$, which is possible if and only if the matrix $\mathcal{M}_{ab}$ is degenerate.
The case of a degenerate matrix $\mathcal{M}_{ab}$ imposes a very strict constraint on $\G_{ab}$, $\chi$ and $n^\alpha \nabla_\alpha \G_{ab }$, which is not is of interest in this section. Thus, the slice-constancy condition for the direction $\rho^a$ is equivalent to the requirement $\LCS_\tau \mathcal{M}_{ab}\rho^a\rho^b=0$. 
Further, we will prove that this requirement is satisfied if the integrability condition (\ref{eq:integrability_B_2}) is satisfied. 

First, we rewrite the integrability condition (\ref{eq:integrability_B_2}) lowering the indices 
\begin{align}
    &
    \LCS_\gamma\left(
        \frac{1}{2\chi}  n^\alpha \nabla_\alpha \G_{ab}
    \right) 
    =
    \LCS_\gamma\ln\left(\chi \varphi^3\right) \cdot \left(
          \G_{ab}
        - \frac{1}{2\chi} n^\alpha \nabla_\alpha \G_{ab}
    \right)
    - \\\nonumber &
    - \LCS_\gamma\G_{ab}
    + \frac{1}{2\chi} \left(
          \LCS_\gamma\G_{ca}\cdot n^\alpha \nabla_\alpha \G_{bd}
        +\LCS_\gamma\G_{cb}\cdot n^\alpha \nabla_\alpha \G_{ad}
    \right)\G^{cd},
\end{align}
and plug it into the first term of Eq. (\ref{FPS4}) to get the following expression
\begin{align} \label{eq:dM}
&
\LCS_\tau \mathcal{M}_{ab}=\LCS_\tau  \left(\frac{1}{2 \chi}\cdot n^\alpha \LCM_\alpha \mG_{ab} - \frac{1}{2 \chi} n^\alpha \LCM_\alpha\ln \mathcal G\cdot\mG_{ab}
    + \mG_{ab}\right) \nonumber\\&
    =\LCS_\tau  \left(\frac{1}{2 \chi} n^\alpha \LCM_\alpha \mG_{ab}\right) - \LCS_\tau \left(\frac{1}{2 \chi}  n^\alpha  \LCM_\alpha \mathcal G_{cd}\right) \mathcal G^{cd} \mG_{ab}- \frac{1}{2 \chi}   n^\alpha  \LCM_\alpha \mathcal G_{cd} \cdot\LCS_\tau \left(\mathcal G^{cd}\mG_{ab}\right)
    + \LCS_\tau \mG_{ab}\nonumber\\&
    = \frac{1}{2\chi} \mathcal{N}_{ab}
    - \LCS_\tau\ln\left(\chi \varphi^3/\G\right) \mathcal{M}_{ab},
\end{align}
where we used the identities $n^\alpha  \LCM_\alpha \ln \mathcal G=\mathcal G^{ab}\cdot n^\alpha  \LCM_\alpha\mathcal G_{ab}$,  $\LCS_\tau\mathcal G_{ac}\cdot\mathcal G^{cb}=-\mathcal G_{ac}\cdot \LCS_\tau \mathcal G^{cb}$ and defined a new matrix $\mathcal{N}_{ab}$
\begin{align}
    &
    \mathcal{N}_{ab} \equiv
       n^\alpha  \LCM_\alpha \mathcal G_{cd} \cdot\LCS_\tau \mathcal G^{cd} \cdot \mG_{ab}
    - n^\alpha \nabla_\alpha \G_{bp} \cdot \LCS_\tau \G^{lp} \cdot \G_{la}
    - n^\alpha \nabla_\alpha \G_{ap} \cdot \LCS_\tau \G^{lp} \cdot \G_{lb}
        + \\\nonumber &
    + \LCS_\tau\ln\G \cdot n^\alpha \LCM_\alpha\ln \mathcal G\cdot\mG_{ab}
    -  n^\alpha  \LCM_\alpha \ln \mathcal G \cdot \LCS_\tau \mG_{ab}
    - \LCS_\tau\ln\G \cdot n^\alpha \LCM_\alpha \mG_{ab}.
\end{align}
The matrix $\mathcal{N}_{ab}$ is the same as in Ref. \cite{Kobialko:2021aqg}, which was proven to be identically zero. The second term of Eq. (\ref{eq:dM}) is equal to zero when contracted with $\rho^a\rho^b$. 
Thus, the vector $\rho^a$ maintains direction in each slice when the integrability condition (\ref{eq:integrability_B_2}) is satisfied. This allows us to formulate the following theorem. 
 
\begin{theorem}  \label{KBGFPS}
Let timelike foliation of the manifold $M$ satisfy all conditions of the theorem \ref{KBG} for $\text{dim} \{\mathcal{K}_\alpha\}=2$ and all slices $S_s$ have a compact spatial section. Then maximal subdomain $U_{PS} \subseteq S_s$ such that the inequality $\rho_\alpha \rho^\alpha \geq 0$ ($\rho_\alpha$ is determined for a given slice by the master equation (\ref{FPS3})) holds for all points in $U_{PS}$ is a fundamental photon surface\footnote{In the case of not compact spatial section, the slice is not a fundamental photon surface as defined in Ref. \cite{Kobialko:2020vqf}. However, the theorem can be generalized for such not compact surfaces too.}.
\end{theorem}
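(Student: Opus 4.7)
The plan is to verify, piece by piece, that the maximal subdomain $U_{PS}$ satisfies the three structural requirements of the FPS definition from Sec.~\ref{sec:photon_submanifold}: (i) partial umbilicity of the second fundamental form in the required block-diagonal form; (ii) existence of a non-trivial solution $\rho^a$ to the master equation (\ref{FPS3}) with $\rho_\alpha\rho^\alpha\geq 0$ throughout $U_{PS}$; and (iii) slice-constancy of the direction of $\rho^a$, i.e.\ $\tau_a \LCS_\alpha \rho^a = 0$. Compactness of the spatial section of $S_s$ will then upgrade this to a genuine FPS in the strict sense of \cite{Kobialko:2020vqf}; dropping compactness only yields the generalization flagged in the footnote.

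Step (i) is immediate: conditions (\ref{eq:block_condition}) and (\ref{eq:umbilic_condition}) of Theorem~\ref{KBG} are literally the block-diagonality and partial-umbilicity requirements on $\chi_{\alpha\beta}$ in the FPS definition. For step (ii) I would treat (\ref{FPS3}) as a quadratic in the ratio $\rho^1/\rho^2$, obtaining the explicit pair of solutions (\ref{PR1a}); these are real precisely where $\mathcal{M}\equiv\det(\mathcal{M}_{ab})\le 0$, and the remaining algebraic inequality (\ref{PR1b}) is, by construction, equivalent to $\rho_\alpha\rho^\alpha\ge 0$. Since $U_{PS}$ is defined as the maximal subdomain on which the latter holds, a non-timelike $\rho^a$ satisfying the master equation exists pointwise throughout $U_{PS}$.

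The crux is step (iii), and here I would lean on the decomposition already laid out above Eq.~(\ref{FPS4}): the component of $\LCS_\tau\rho^b$ along $\rho^b$ drops out of $\mathcal{M}_{ab}\rho^a\LCS_\tau\rho^b$ by virtue of the master equation, while the component along $\tau^b$ can be forced to vanish only if $\rho^a\rho^b\LCS_\tau\mathcal{M}_{ab}=0$ (the alternative requires $\mathcal{M}_{ab}$ to be degenerate, which, as remarked in the main text, imposes strong independent constraints and is not relevant here). I would then insert the integrability condition (\ref{eq:integrability_B_2}), combined with $\LCS_\tau(\G_{ac}\G^{cb})=0$, into the expression for $\LCS_\tau\mathcal{M}_{ab}$ to obtain the clean rewriting $\LCS_\tau\mathcal{M}_{ab} = \tfrac{1}{2\chi}\mathcal{N}_{ab} - \LCS_\tau\ln(\chi\varphi^3/\G)\,\mathcal{M}_{ab}$ displayed in Eq.~(\ref{eq:dM}). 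The second term annihilates $\rho^a\rho^b$ by the master equation, and the matrix $\mathcal{N}_{ab}$ is precisely the object shown to vanish identically in the KT analysis of Ref.~\cite{Kobialko:2021aqg}, so it can be imported verbatim.

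The main obstacle is really only the bookkeeping in step (iii): verifying $\mathcal{N}_{ab}\equiv 0$ directly from the Leibniz rule and $\LCS_\tau(\G_{ac}\G^{cb})=0$ is routine but index-heavy. Once that identity is in hand, the chain master equation $\Rightarrow$ slice-constancy of $\rho^a$ $\Rightarrow$ FPS closes automatically, and compactness enters only at the final step to identify $U_{PS}$ as an FPS in the strict sense rather than its non-compact generalization.
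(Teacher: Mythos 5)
Your proposal follows essentially the same route as the paper: the same identification of conditions (\ref{eq:block_condition})--(\ref{eq:umbilic_condition}) with the FPS definition, the same solution (\ref{PR1a})--(\ref{PR1b}) of the master equation, and the same key step of establishing slice-constancy of the direction of $\rho^a$ via the decomposition of $\LCS_\tau\rho^b$, the rewriting (\ref{eq:dM}) using the integrability condition (\ref{eq:integrability_B_2}), and the vanishing of $\mathcal{N}_{ab}$ imported from Ref.~\cite{Kobialko:2021aqg}. The argument is correct as stated.
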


In particular, the region $U_{PR}\subseteq M$, such that the inequality (\ref{PR1b}) holds for all points in $U_{PR}$, is a photon region.

\subsection{Black holes shadows}
\label{sec:shadow}
As is known, $m=4$ spacetime with $n=2$ conformal Killing vectors and one irreducible CKT corresponds to a completely integrable dynamical system (for null geodesics). If the structure of the photon  region is known, then it is possible to obtain a general analytical expression for the shadow boundary \cite{Grenzebach,Grenzebach:2015oea,GrenzebachSBH,Perlick:2021aok,Konoplya:2021slg}. In this section, we consider the case of a slice-reducible CKT obtained using the theorem \ref{KBG}. 
To do that, consider an observer with four-velocity $v^\alpha=v^a \K_{a}{}^{\alpha}$ (stationary observer \cite{Pugliese:2018hju}) and all possible null geodesics  through it. The vectors tangent to these null geodesics are read
\begin{align}
    \dot{\gamma}^\alpha = 
         \mathfrak{N}( -v^{-1}v^\alpha
        + \sin(\Phi)\sin(\Theta) v^{-1} u^\alpha 
        + \cos(\Phi)\sin(\Theta) \X^{-1}\X^\alpha- \cos(\Theta) n^\alpha), \\\nonumber
        v=\sqrt{-v_\alpha v^\alpha}, \quad  \X=\sqrt{\X_\alpha \X^\alpha} \quad u^a=(-\G)^{1/2}\G^{ab}\epsilon_{bc} v^c,
\end{align}
where the coordinates $\Phi$, $\Theta$ encode the celestial sphere of the observer, the vector $\X$ is the only direction orthogonal to the conformal Killing vectors and tangent to the slices, and $\mathfrak{N}$ is some function.
The function $\mathfrak{N}$ and the corresponding impact parameter $\rho^\alpha$ are found from the conditions $\rho_\alpha \dot{\gamma}^\alpha=0$ (from the definition of $\rho$) and $\mathfrak{N}_\alpha \dot {\gamma}^\alpha=\Q_{\mathfrak{N}}$, where $\mathfrak{N}^\alpha$ is an arbitrary nonzero conformal Killing vector directed not along $\rho^\alpha$ and $\Q_{\mathfrak{N}}$ is the corresponding integral of motion: 
\begin{align}\label{eq:sinsin}
    &
        \sin(\Phi) \sin(\Theta) = \frac{\rho_{a} v^a}{\rho_{a} u^a},
    \quad
        \mathfrak{N}=v \Q_{\mathfrak{N}} \left(
              \frac{\rho_{a} v^a}{\rho_{a} u^a} u^a\mathfrak{N}_a
            - v^a \mathfrak{N}_a  
        \right)^{-1}.
\end{align}
Carter's constant \cite{Carter:1968ks} corresponding to the slice-reducible CKT has the form
\begin{align} \label{eq:carter_Q}
    &
        \Q_{(2)} \equiv
        K_{\alpha\beta} \dot{\gamma}^\alpha \dot{\gamma}^\beta =
        \mathfrak{N}^2 \left(
              e^{\Psi} \cos^2(\Theta)
            + v^{-2}\gamma^{ab} \Q_{ab}
        \right),
    \\\nonumber &
        \Q_{ab}\equiv v_a v_b
        -  \frac{\rho_{c } v^c}{\rho_{c} u^c} (v_a u_b+u_a v_b)
        +  \frac{(\rho_{c } v^c)^2}{(\rho_{c} u^c)^2} u_a u_b.
\end{align}
On the other hand, the shadow boundary corresponds to null geodesics asymptotically tangent to FPS \cite{Grenzebach,Grenzebach:2015oea,GrenzebachSBH} or ordinary photon surfaces \cite{Perlick:2021aok}. For such geodesics, $\dot{\gamma}^\alpha$ can be decomposed into a tetrad $\{n^\alpha, \X^\alpha, \rho^{\alpha} , \tau^\alpha\}$ (making the substitutions $v^\alpha\to\tau^\alpha$ and $u^\alpha\to\rho^\alpha$ in (\ref{eq:carter_Q})) which is always well-defined for FPS 
\begin{subequations}\label{eq:Q_FPS}
\begin{align}
\label{eq:Q_FPS_a}
 &\Q_{(2)}^{PS}= \Q^2_{\mathfrak{N}}(\tau_a \mathfrak{N}^a)^{-2}\gamma^{ab}\tau{}_{a}\tau_{a} \Big|_{PS}= \Q^2_{\mathfrak{N}}(\epsilon_{ab} \mathfrak{N}^a \rho^b)^{-2}\gamma^{ab}\epsilon_{ad} \epsilon_{bc} \rho^d_{PS}\rho^c_{PS}\Big|_{PS}, 
 \\\label{eq:Q_FPS_b} &\rho^{a}_{PS}=\left(\frac{-\mathcal M_{12}\pm \sqrt{-\mathcal M}}{\mathcal M_{11}},1\right)\Big|_{PS},
\end{align}
\end{subequations}
where $\Q_{(2)}^{PS}$ is Carter's constant computed for a given FPS. Comparing (\ref{eq:sinsin}), (\ref{eq:carter_Q}), (\ref{eq:Q_FPS}) and taking into account ${\Q_{(2)} = \Q_{(2)}^{PS}}$, the boundary of the shadow reads
\begin{subequations}\label{eq:general_boundary}
\begin{align}
 \label{eq:general_boundary_a}
 &
 \cos^2(\Theta_{SH}) =
 \left.\frac{e^{-\Psi} v^a v_a}{(\epsilon_{ab}v^a\rho_{PS}^b)^2}\right|_{O}
 \left(
      \left.\mathcal{J}_{ab} \right|_{O}
    - \left.\mathcal{J}_{ab} \right|_{PS}
 \right)
 \rho^a_{PS} \rho^b_{PS},
 \\& \label{eq:general_boundary_b}
 \sin(\Phi_{SH})\sin(\Theta_{SH}) =
 \left.\frac{v_a \rho_{PS}^{a}}{u_a\rho_{PS}^{a}}\right|_{O},
 \\& \label{eq:general_boundary_c}
   \mathcal J_{ab} \equiv
 - \epsilon_{ac}\gamma^{cd}\epsilon_{db} =
 \mathcal{J}^{(1)}_{ab} +\mathcal{J}^{(2)}_{ab},\qquad
 \mathcal{J}^{(1)}_{ab} \equiv e^\Psi \G_{ab}/\G,\qquad
 \mathcal{J}^{(2)}_{ab} \equiv - \epsilon_{ac}\nu^{cd}\epsilon_{db},
\end{align} 
\end{subequations}
where $\big|_{O}$ and $\big|_{PS}$ indicate that the expressions should be taken at the position of the observer and the FPS respectively. 

The expressions obtained specify the boundary as a function of the FPS foliation parameter and depend only on the geometry and four-velocity of the observer. 
These expressions are a generalization of similar analytical expressions for the shadow boundary from  Refs. \cite{Grenzebach,Grenzebach:2015oea,GrenzebachSBH,Perlick:2021aok}. 
According to the theorem \ref{KBG}, $\gamma^{ab}$ is constant for every FPS.
Therefore, $\mathcal{J}_{ab}$ is independent of the FPS slice point and can be computed anywhere in the corresponding slice. Although the FPS can only represent a subdomain of the entire slice of the foliation due to the condition $\rho_\alpha \rho^\alpha \geq 0$, the constancy of $\mathcal{J}_{ab}$ is true for the whole slice. For clarity, we introduce the coordinate $r$ corresponding to the foliation parameter and the coordinate $\theta$ along the direction $\X$. 
Also, according to the theorem \ref{KBG}, $\nu^{ab}$ does not depend on $r$
\begin{equation}
    \mathcal{J}_{ab}(r) = \mathcal{J}^{(1)}_{ab}(r,\theta) + \mathcal{J}^{(2)}_{ab}(\theta).
\end{equation}
Finally, to find $\mathcal{J}_{ab}$ we can use any value of $\theta$ in the expressions $\mathcal{J}^{(1,2)}_{ab}$. If so, $\left.\mathcal{J}_{ab}\right|_O$ and $\left.\mathcal{J}_{ab}\right|_{PS}$ can be computed at the same $\theta$
\begin{equation}\label{eq:JJ}
      \left.\mathcal{J}_{ab}\right|_O
    - \left.\mathcal{J}_{ab}\right|_{PS} =
      \mathcal{J}^{(1)}_{ab}(r_O,\theta)
    - \mathcal{J}^{(1)}_{ab}(r_{PS},\theta),
\end{equation}
where the final expression is a function of $r_O$ and $r_{PS}$, and one can substitute any value of $\theta$ in practical calculations (e.g., $\theta=\pi/2$). Thus, to determine the shadow, we need to know only the conformal Killing vector fields, the normal component of the slice-reducible CKT and the corresponding foliation with slices representing FPS.

\subsection{Shadows for asymptotically distant observers}
\label{sec:asymptotic_shadow}
Let us assume that manifold $M$ is a stationary locally asymptotically flat spacetime (with NUT $ N$)
\begin{align} \label{SolNUT_asymptotic}
   & \G_{ab} dx^a dx^b =
     - A(r,\theta)\left(dt - C(r,\theta) d\varphi\right)^2 + B(r,\theta) r^2\sin^2\theta d\varphi^2,
     \\\nonumber &
     \left.A\right|_{r\to\infty} \to 1,\qquad
     \left.B\right|_{r\to\infty} \to 1,\qquad
     \left.C\right|_{r\to\infty} \to C_\infty(\theta) = -2N(\cos\theta + C_{N}),\qquad
\end{align}
and the foliation is defined by the slices $r=\text{const}$ with $g_{rr}\to1$. An arbitrary stationary observer \cite{Pugliese:2018hju} 
\begin{align}\label{eq:alt_observer}
v^\alpha=v^t\delta^\alpha_t+v^\phi\delta^\alpha_\phi,\qquad
\left.v^t\right.|_{r\rightarrow \infty}\to 1, \quad  \left.r^2 v^\phi\right|_{r\rightarrow \infty}\to w(\theta_O),
\end{align}
floats at an asymptotically distant point $\theta=\theta_O$, $r=r_O\to\infty$ and have finite asymptotic angular momentum 
\begin{align}
\Q_{\varphi}|_{r\to\infty}\to \Q^{\infty}_{\varphi}\equiv C_\infty(\theta_O)+w(\theta_O)\sin^2\theta_O,
\end{align}
even for the case with no NUT. In particular, the asymptotically distant observer has zero angular momentum for $w(\theta_O)=-C_\infty(\theta_O)\sin^{-2}\theta_O$ \cite{Bardeen:1973tla,Pugliese:2018hju}. Applying all together and expanding the expressions in powers of $r_O^{-1}$, Eqs. (\ref{eq:general_boundary}) will look like 
\begin{subequations} \label{eq:flat_boundary}
\begin{align}
 &\cos^2(\Theta_{SH}) =1-\frac{R^2_{SH}}{r^2_O}+\mathcal{O}(r_O^{-3}),\quad \sin(\Phi_{SH})\sin(\Theta_{SH}) =\frac{\rho_{\mathcal N}}{r_O\sin\theta_O}+\mathcal{O}(r_O^{-2}),\label{eq:flat_boundary_a} \\&
\rho_{\mathcal N}\equiv \rho_{PS} -\Q^{\infty}_{\varphi}, \quad
R^2_{SH}\equiv\sin^{-2}\theta_O\cdot\rho^2_{\mathcal N}-\mathcal{J}^{(1)}_{ab}(r_{PS},\theta_O)\rho^a_{PS} \rho^b_{PS}, \label{eq:flat_boundary_b}
\end{align} 
\end{subequations}
where the expression (\ref{eq:JJ}) is evaluated with $\theta = \theta_O$ and $\rho_{PS}\equiv\rho^1_{PS}$. The limit of expressions taken at the position of the observer ($r_O\to\infty$) does not depend on the specific form of the functions $A,B,C$ and $v^t,v^\varphi$, except for their asymptotic values. The functions $A,B,C$ play an important role only when calculating $\rho_{PS}$ and $\mathcal{J}_{ab}^{(1)}(r_{PS}, \theta)$  or the higher order corrections. Since $\cos^2(\Theta_{SH})$ is close to 1, one can use the cosine expansion to get the following asymptotic solution 
\begin{align}
 &\Theta_{SH} =\frac{R_{SH}}{r_O}, \quad \Phi_{SH}=\arcsin{\left(\frac{\rho_{\mathcal N}(\theta_O)}{R_{SH}\sin\theta_O}\right)}.
\end{align}
The shadow is located near the zenith of the observer, where the coordinates $\Theta,\,\Phi$ can be locally considered as a polar coordinate system. The image of the shadow, parametrized by the Cartesian coordinate system, has the form 
\begin{subequations}
\label{eq:xy_shadow}
\begin{align}
    &
    X = -\frac{R_{SH}}{r_O}\sin(\Phi_{SH})
    = -\frac{\rho_{\mathcal N}}{r_O \sin\theta_O},
    \label{eq:xy_shadow_a}\\&
    Y=\pm\frac{R_{SH}}{r_O}\cos(\Phi_{SH})
    =\pm\frac{
        \sqrt{R_{SH}^2 \sin^2\theta_O - \rho_{\mathcal N}^2}
    }{r_O \sin\theta_O} 
    =\pm\frac{1
        }{r_O}\sqrt{-\mathcal{J}^{(1)}_{ab}(r_{PS},\theta_O)\rho^a_{PS} \rho^b_{PS}}
    .     \label{eq:xy_shadow_b}
\end{align}
\end{subequations}

To construct a shadow boundary with respect to some observer with a specific $\theta_O$, the following steps can be performed. Take some coordinate $r_{PS}$ and calculate the impact parameter $\rho_{PS}$ and $\mathcal{J}^{(1)}_{ab}(r_{PS},\theta_O)$ using ( \ref{eq:Q_FPS_b}) and (\ref{eq:general_boundary_c}). Substitute these values and the observer's azimuth $\theta_O$ into $\bar{X}=r_O X$, $\bar{Y}=r_O Y$ from (\ref{eq:xy_shadow}). Repeat these steps for all $r_{PS}$ determined from the condition $\left.\rho_{PS}^a \G_{ab} \rho_{PS}^b \right|^{r=r_{PS}}_{\theta=\theta_O}\geq0$ (subset in the full photon region surfaces family). 

From Eq. (\ref{eq:xy_shadow}) it is clear that changing the asymptotic value $w\to w'=w+\Delta w$ (and, accordingly, the asymptotically distant observer's angular momentum $\Q^{\infty}_{\varphi}$) does not change the shape of the shadow, but only shifts it  as
\begin{align}
\bar{Y} \rightarrow  \bar{Y},\quad \bar{X}\rightarrow \bar{X}+\Delta w(\theta_O) \sin \theta_O.
\end{align}
This shift can be interpreted as a different choice of the origin of the observer's celestial sphere \cite{Perlick:2021aok}. If the origin of the coordinate system is given by the principal null rays \cite{Grenzebach,Grenzebach:2015oea,GrenzebachSBH,Perlick:2021aok}, e.g., for Plebanski-Demianski solution $w(\theta)=a$ reproduces the result (53) from Ref. \cite{Perlick:2021aok} exactly. Thus, we have obtained a direct generalization of a number of expressions obtained in Refs. \cite{Perlick:2021aok,Konoplya:2021slg}.

\section{Applications}
\label{sec:examples}
Our framework opens up a way to combine most of the previously known results  in a unique algorithm that shows their relationship to the photon structure of manifolds. This includes the most general type $D$ vacuum and electrovacuum solutions as well as type $I$ black holes of ${\cal N}=2,\, 4,\, 8$ supergravities. 
\subsection{Ansatz with $m-2$ commuting Killing vectors}
Let us fix our notation for a general $m$-dimensional metric with $m-2$ commuting Killing vectors  \cite{Konoplya:2018arm}) presented as
\begin{equation}
    ds^2 = \G_{ab}dy^a dy^b + \lambda_{ij} dx^i dx^j,
\end{equation}
where $y^a$ are coordinates along the Killing vectors and $i,j=1,2$. Since there are only two coordinates in addition to the Killing directions, one can choose the coordinates $x^i$ such that $x_1 = \text{const}$ gives foliation slices (we use subscripts  for notational appeal). Then the transformation $x_2 \to x_2 - \int dx_1 (\lambda_{12}/\lambda_{22})$ removes the mixed component $\lambda_{12}$ leaving the foliation condition invariant. So, further we will consider the case $\lambda_{11} \equiv \lambda_1$, $\lambda_{22} \equiv \lambda_2$, $\lambda_{12} = 0$, leading  the following expressions
\begin{equation}
    n^\alpha \nabla_\alpha = \lambda_1^{-1/2} \partial_1,\qquad
    \chi = \frac{1}{2} \lambda_1^{-1/2} \partial_1 \ln \lambda_2,\qquad
    \varphi = \lambda_1^{1/2},
\end{equation}
and, from Eq. (\ref{eq:eq_psi}),
\begin{equation}
    \Psi = \ln \lambda_1 + \tilde{\Psi}(x_1),
    \qquad
    \partial_1 \tilde{\Psi}(x_1) = \partial_1 \ln (\lambda_2/\lambda_1).
\end{equation}
The integrability conditions are
\begin{align}
    \partial_2 \partial_1 \ln (\lambda_2/\lambda_1) = 0,\qquad
    \partial_2 \partial_1 (e^{\Psi}\G^{ab}) = 0.
\end{align}
The first condition means that $\lambda_1$ and $\lambda_2$ can be represented as a function of only one variable multiplied by a common factor, depending on both coordinates. Such coordinates can be easily transformed to the form $\lambda_{ij} dx^i dx^j = \lambda(x_1, x_2)(dx_1^2 + dx_2^2)$. In this case the curve $x_1=\text{const}$ is constructed along the \textit{isothermal} coordinate. Then the second condition takes the form $\partial_2 \partial_1 (\lambda \G^{ab}) = 0$ and the metric along the Killing vectors is
\begin{equation}
\G_{ab} = (\mathcal{F}^{-1})_{ab} \lambda,
\end{equation}
where the matrix $\mathcal{F}$ is a matrix of functions of the form $\mathcal{F}^{ab}(x_1, x_2) = \mathcal{X}_1^{ab}(x_1) + \mathcal{X}_2^{ab}(x_2)$. The final structure of the metric is
\begin{equation}
    ds^2 / \lambda = (\mathcal{F}^{-1})_{ab}dy^a dy^b + dx_1^2 + dx_2^2.
\end{equation}
The resulting metric expectedly provides the separability of the massless Hamilton-Jacobi equation. To make the form of the metric less restrictive, one can perform transformations of the form $x'^i = x'^i(x^i)$ (where $i$ is fixed)
\begin{equation} \label{SM}
    ds^2 / \lambda = (\mathcal{F}^{-1})_{ab}dy^a dy^b + f_1(x_1) dx_{1}^2 + f_2(x_2) dx_2^2,
\end{equation}
with the corresponding CKT
\begin{align} \label{SMK}
   & K^{\alpha\beta} =
    \alpha(x_1,x_2) g^{\alpha\beta}
    + \mathcal{X}_1^{ab}(x_1) \delta_{a}{}^\alpha \delta_{b}{}^\beta
    +f_1(x_1)^{-1}\delta^{\alpha}_{x_1} \delta^{\beta}_{x_1},\quad
   \Omega_\alpha =  \partial_\alpha \alpha(x_1,x_2)+ 
   \delta^{x_1}_{\alpha}\cdot \partial_{x_1} \lambda.
\end{align} 
In order to check the integrability conditions for the CKT, one should make sure that the metric can be represented in the form (\ref{SM}). This form of metric and foliation fulfils condition (\ref{eq:block_condition}) automatically, and equation (\ref{eq:umbilic_condition}) is not a condition but an expression for $\chi$ (since there is only one equation with one unknown function $\chi$). The metric form (\ref{SM}) has a formal symmetry $x_1 \leftrightarrow x_2$, therefore if the foliation $x_1 = \text{const}$ generates a CKT, then the foliation $x_2 = \text{const}$ will generate as well.

Additionally, if we are interested in exact KTs, the third integrability condition $\LCS_\gamma (\chi\varphi^3) = 0$ gives $\partial_2 \partial_1 \lambda = 0$, and the factor $\lambda$ must be a function of the form $\lambda = \mathcal{F}^{\lambda}(x_1, x_2) = \mathcal{X}_1^{\lambda}(x_1) + \mathcal{X}_2^{\lambda}(x_2)$ and $\alpha = - \mathcal{X}_1^\lambda(x_1)$ (for comparison see Refs. \cite{benenti,Papadopoulos:2018nvd,Papadopoulos:2020kxu, Carson:2020dez}, where a similar form of the metric was proposed from other considerations).

\subsection{ Plebanski-Demianski solution}
Consider the concrete case of the general Plebansky-Demyansky \cite{Demianski} class of stationary axially symmetric solutions of type $D$ to the Einstein-Maxwell equations with a cosmological constant. The $ds^2$ metric reads from the conformally transformed line element in Boyer-Lindquist coordinates 
\begin{align}  
\Omega^2ds^2&=\Sigma\left(\frac{dr^2}{\Delta_r}+\frac{d\theta^2}{\Delta_\theta}\right)+\frac{1}{\Sigma}
\left((\Sigma+a\chi)^2\Delta_\theta\sin^2\theta-\Delta_r\chi^2\right)d\phi^2 \nonumber \\
\label{Sol222}
&+\frac{2}{\Sigma}\left(\Delta_r\chi-a(\Sigma+a\chi)\Delta_\theta\sin^2\theta\right)dt d\phi-\frac{1}{\Sigma}
\left(\Delta_r-a^2\Delta_\theta\sin^2\theta \right)dt^2, 
\end{align}
where we have defined the following functions
\begin{subequations}
\begin{align}
\Delta_\theta &=1-a_1\cos\theta-a_2\cos^2\theta, \qquad \Delta_r=b_0+b_1r+b_2r^2+b_3r^3+b_4r^4\,,\\
\Omega &=1-q(N+a \cos\theta)r, \quad \Sigma=r^2+(N+a\cos\theta)^2\,,\quad
\chi =a \sin^2\theta-2N(\cos \theta+C)\,,
\end{align}
\end{subequations}
with constant coefficients for $\Delta_\theta$ and $\Delta_r$:
\begin{subequations}
\begin{align}
a_1 &=2aMq-4aN\left(q^2(k+\beta)+\frac{\Lambda}{3}\right), \quad a_2=-a^2\left(q^2(k+\beta)+\frac{\Lambda}{3}\right),\\ 
b_0&=k+\beta, \quad 
b_1 =-2M,\quad
b_2=\frac{k}{a^2-N^2}+4MNq-(a^2+3N^2)\left(q^2(k+\beta)+\frac{\Lambda}{3}\right),\\
b_3 &=-2q\left(\frac{kN}{a^2-N^2}-(a^2-N^2)\left(Mq-N\left(q^2(k+\beta)+\frac{\Lambda}{3}\right)\right)\right),\quad
b_4 =-\left(q^2k+\frac{\Lambda}{3}\right),\\ 
k &=\frac{1+2MN  q-3N^2\left(q^2\beta+\frac{\Lambda}{3}\right)}{1+3q^2N^2(a^2-N^2)}(a^2-N^2), \quad 
q=\frac{A}{\omega}, \quad \omega=\sqrt{a^2+N^2}\,.
\end{align}
\end{subequations}
Seven independent parameters $M,N,a,A,\beta,\Lambda,C$ can be interpreted as the physical charges in the following way: $M, N$ are the mass and the magnetic mass (NUT parameter), $a$ is the Kerr-like rotation parameter, $\beta=e^2+g^2$ comprises the electric $e$ and magnetic $g$ charges, $A$ is the acceleration parameter, $\Lambda$ is the cosmological constant, and the constant $C$ defines the location of the Misner string. 

First, we extract the general conformal factor $\lambda$ from the metric (\ref{Sol222}) such that $g_{rr}=f_r(r)$ and $g_{\theta\theta}=f_\theta(\theta)$ are functions of only the corresponding coordinate: $\lambda=\Sigma /\Omega^2$. Metric (\ref{Sol222}) takes the separable form (\ref{SM}) with
\begin{align}
\mathcal{F}^{ab}=\underbrace{-\Delta^{-1}_r \begin{pmatrix}
     S^2  & aS  \\
      aS  & a^2 \\
\end{pmatrix}}_{\mathcal{X}_r^{ab}(r)} + \underbrace{\Delta^{-1}_\theta \sin^{-2}\theta \begin{pmatrix}
    \chi^2   & \chi  \\
      \chi  & 1 \\
\end{pmatrix}}_{\mathcal{X}_\theta^{ab}(\theta)}, \quad f_r(r) = \Delta^{-1}_r, \quad f_\theta(\theta) = \Delta^{-1}_\theta,
\end{align}
where
\begin{align}
S\equiv\Sigma + a\chi= r^2 + a^2 - 2 a C N + N^2.
\end{align}
Following Eq. (\ref{SMK}), the irreducible CKT reads (which was obtained in Ref. \cite{Kubiznak:2007kh})
 \begin{align} \label{conformalKillingPD}
    &
    K^{\alpha\beta} =
    \alpha g^{\alpha\beta}
    -\Delta^{-1}_r S^\alpha S^\beta
    + \Delta_r \delta_r^\alpha \delta_r^\beta, \quad
    \Omega_\alpha=\partial_\alpha \alpha+ \delta^r_\alpha (\Sigma/\Omega^2)',
    \\\nonumber &
    S^\alpha \equiv (r^2 + a^2 - 2 a C N + N^2) \delta^\alpha_t + a \delta^\alpha_\varphi. 
\end{align}
In the special case of zero acceleration $A=0$ we get $\Omega=1$ and function $\lambda=r^2+(N+a \cos)^2$ is separable, i.e., this CKT can be reduced to the exact KT by choosing $\alpha=-r^2$. 

In accordance with theorem \ref{KBGFPS}, there are always FPSs among slices $r=const$. They form a photon region described by the inequality (\ref{PR1b}) which can be written as \cite{Grenzebach,Grenzebach:2015oea}
 \begin{align} 
4  a^2 \sin^2 \theta \Delta_r \Delta_\theta \Sigma'^2 \geq (2\Delta_r \Sigma' -\Sigma \Delta'_r)^2.
\end{align}
Using general formulas (\ref{eq:general_boundary}) for a static (not asymptotically distant) observer $v^\alpha=\delta^\alpha_t$ we can find a compact expression for the shadow boundary
\begin{subequations}
\begin{align}
 &\cos^2(\Theta_{SH}) =-\frac{\Delta_r -a^2 \sin^2\theta_O \Delta_\theta}{\Delta_r\Sigma^2} \left(-\big((S-\hat{S})+2 \hat{\Delta}_r(\hat{\Sigma}'/\hat{\Delta}'_r)\big)^2+4 \Delta_r \hat{\Delta}_r(\hat{\Sigma}'/\hat{\Delta}'_r)^2\right),\\&
 \sin(\Phi_{SH})\sin(\Theta_{SH}) = \frac{1}{a \Sigma \sqrt{\Delta_r\Delta_\theta} \sin \theta_O}\Big(a^2 \sin^2\theta_O\Delta_\theta (S-\hat{S}) + \Delta_r (\hat{S}-a \chi) \nonumber\\& - 2 \hat{\Delta}_r(\hat{\Sigma}'/\hat{\Delta}'_r )\big(\Delta_r -a^2 \sin^2\theta_O \Delta_\theta\big) \Big),
\end{align}
\end{subequations}
where functions with hats are calculated at FPS, and functions without hats are calculated at the observer's location. From the general Eqs. (\ref{eq:flat_boundary}), a shadow captured by an asymptotic observer (with $w(\theta_O)=a$ similarly to Refs. \cite{Grenzebach,Grenzebach:2015oea,GrenzebachSBH,Perlick:2021aok}) in asymptotically flat spacetime ($A=0$, $\Lambda=0$) has the following boundary
\begin{subequations}
\begin{align}
& R^2_{SH} =   \frac{4r^2_{PS}(r^2_{PS}-2 M r_{PS}+a^2+\beta - N^2)}{(r_{PS}-M)^2},\\
&\rho_{\mathcal N} =-\frac{r^3_{PS}-3 M r^2_{PS}+(a^2-3 N^2+2 \beta)r_{PS}+M(a^2 +N^2)}{a(r_{PS}-M)}+2 N \cos \theta_O- a \sin^2\theta_O.
\end{align}
\end{subequations}

\subsection{NUT wormholes}
Metrics endowed with the Newman, Tamburino and Unti (NUT) parameter \cite{Newman:1963yy} may describe wormholes \cite{Clement:2015aka} violating the energy conditions only in distributional sense. Here we discuss the non-rotating NUT wormhole as an example of a metric with more than two irreducible Killing vectors, from which one can choose different commuting pairs.  The line element depending with mass $M$, NUT charge $N$, combined electric-magnetic charge $e$ and the Misner string \cite{Misner:1963fr} parameter $C$  reads
\begin{equation} \label{SolNUT}
    ds^2 =
    - \frac{\Delta }{\Sigma} \left(dt - \omega d\varphi\right)^2
    + \Sigma \left(
        \frac{dr^2}{\Delta} + d\theta^2 + \sin^2\theta d\varphi^2
    \right),
\end{equation}
where functions $\Delta$, $\omega$, $\Sigma$ are defined as follows
\begin{align}
   & \Delta = r^2-2M r-N^2+ e^2, \quad \Sigma = r^2 + N^2, \quad \omega =- 2 N (\cos \theta + C).
\end{align}
This metric has four Killing vectors
\begin{subequations}
\begin{align}
  & K_{(t)}^\alpha=\delta^\alpha_t,\\
  & K_{(x)}^\alpha=\frac{2N(1+C \cos \theta) \cos\phi}{\sin\theta}\delta^\alpha_t-\sin \phi \delta^\alpha_\theta-\cos \phi \cot \theta \delta^\alpha_\phi,\\
  & K_{(y)}^\alpha=\frac{2N(1+C \cos \theta) \sin\phi}{\sin\theta}\delta^\alpha_t+\cos \phi \delta^\alpha_\theta-\sin \phi \cot \theta \delta^\alpha_\phi,\\
  & K_{(z)}^\alpha=\delta^\alpha_\phi -2 N C \delta^\alpha_t.
\end{align}
\end{subequations}
All of them are tangent to the slices $r=\text{const}$ and constitute $\mathfrak{s}\mathfrak{o}(3)$ algebra. The polar axes contain the singularity of the Misner string, which can be removed in the north or south pole, choosing $C=\pm 1$. If time is a periodic coordinate $\Delta t = 8\pi N$, the Misner string can be removed from the spacetime \cite{Misner:1963fr}, turning the slice $r=\text{const}$ into $S^3$ in terms of a Hopf fibration.

For the foliation $r=\text{const}$ one can get $n^\alpha=\sqrt{\Delta/\Sigma}\delta^\alpha_r$ and $\varphi = \sqrt{\Sigma/\Delta}$.
The vector orthogonal to the chosen pair of Killing vectors will be denoted as $\X^\alpha$. For pairs of the form $(K_{(t)}^\alpha,K_{(i)}^\alpha)$ (where $i=x,\,y,\,z$) we find
\begin{align} \label{NUTchi}
 \chi=\frac{1}{2}\sqrt{\Delta/\Sigma} (\ln \Sigma)',  \quad \SFS_{\alpha\beta}K_{(t)}^\alpha \X^\beta=\SFS_{\alpha\beta}K_{(i)}^\alpha \X^\beta=0,
 \end{align}
and for pairs of the form $(K_{(i)}^\alpha,K_{(j)}^\alpha)$ with $i \neq j$ the off-diagonal block reads
\begin{align}
 \SFS_{\alpha\beta}K_{(i)}^\alpha \X^\beta=-N Q_{(i)}(r,\theta,\phi) (\Sigma \Delta'-2 \Delta \Sigma'),
\end{align}
where $Q_{(i)}$ is some function. The block diagonal condition (\ref{eq:block_condition}) is satisfied identically only for pairs of the form $(K_{(t)}^\alpha,K_{(i)}^\alpha)$. In the case $N\neq 0$, other pairs fulfil condition (\ref{eq:block_condition}) only at one slice distinguished by the condition $\Sigma \Delta'-2 \Delta \Sigma'=0$, i.e., at the photon surface \cite{Grenzebach,Grenzebach:2015oea}. Pairs without $K_{(t)}^\alpha$ will be not considered further. Using the expression (\ref{NUTchi}) for $\chi$, one can find that the condition (\ref{eq:integrability_G_a}) is satisfied and $\Psi= \ln \Sigma$. The second integrability condition in form  (\ref{eq:integrability_B}) is also satisfied for such pairs, and we obtain
\begin{align}
\mathcal{F}^{ab}=\underbrace{-\Delta^{-1} \begin{pmatrix}
     \Sigma^2  & 0  \\
      0 & 0 \\
\end{pmatrix}}_{\mathcal{X}_r^{ab}(r)} + \underbrace{H^{-1}_{(i)}(\theta,\phi)\begin{pmatrix}
    N^2P^2_{(i)}(\theta,\phi)   & NP_{(i)}(\theta,\phi)  \\
     NP_{(i)}(\theta,\phi)  & 1 \\
\end{pmatrix}}_{\mathcal{X}_{(i)}^{ab}(\theta,\phi)},
\end{align}
where  $H_{(i)}(\theta,\phi)$ and  $P_{(i)}(\theta,\phi)$ are some functions of the corresponding coordinates possessing the symmetry along the Killing vector $K_{(i)}^\alpha$. All three pairs gives the same irreducible CKT
\begin{align}
    K^{\alpha\beta} =
    \alpha  g^{\alpha\beta}
    -\Delta^{-1} S^\alpha S^\beta
    + \Delta \delta_r^\alpha \delta_r^\beta, \quad \Omega_\alpha=\partial_\alpha \alpha+ 2r \delta^r_\alpha,\quad
    S^\alpha = \left(r^2 + N^2\right) \delta^\alpha_t,
\end{align}
which is a special case of (\ref{conformalKillingPD}). The result with the same CKT from three different pairs is expected due to (local) spherical symmetry.
\subsection{EMDA black holes}
The stationary charged Einstein-Maxwell-dilaton-axion (EMDA) black hole solutions with NUT, relevant to ${\cal N}=4$ supergravity or the heterotic string theory,  depending  on seven parameters: mass $M$,  electric and magnetic charges $Q,P$, rotation parameter $a$,   NUT $N$ and asymptotic values of the dilaton and axion fields (irrelevant for the metric) was obtained in \cite{Galtsov:1994pd}. Black hole solution without NUT was previously derived by A. Sen \cite{Sen:1992ua} and now is commonly referred as Kerr-Sen metric. Non-rotating solutions with NUT were independently obtained by Kallosh et al. \cite{Kallosh:1994ba}  and Johnson and Myers \cite{Johnson:1994nj}. The Kerr-Sen metric is often considered as a deformed Kerr in modelling deviations from the standard picture of black holes \cite{Zhang:2020pay}. The line element of the solution can be written in the Kerr-like form in Boyer-Lindquist coordinates
\begin{equation} \label{SolGK}
    ds^2 =
    - \frac{\Delta - a^2 \sin^2\theta}{\Sigma} \left(dt - \omega d\varphi\right)^2
    + \Sigma \left(
        \frac{dr^2}{\Delta} + d\theta^2 + \frac{\Delta \sin^2\theta}{\Delta - a^2 \sin^2\theta} d\phi^2
    \right),
\end{equation}
where the functions $\Delta$, $\omega$, $\Sigma$ are redefined as follows
\begin{subequations}
\begin{align}
   & \Delta = (r - r_{-}) (r - 2M) + a^2 - (N-N_{-})^2, \\
   & \Sigma = r(r-r_{-}) + (a\cos\theta + N)^2 - N_{-}^2, \\
    &\omega = \frac{-2W}{\Delta - a^2 \sin^2\theta},\quad
     W =  N \Delta \cos\theta
        + a \sin^2\theta \left( M(r-r_{-}) + N(N - N_{-}) \right)
\end{align}
\end{subequations}
with abbreviations for constants
\begin{equation}
    r_{-} = \frac{M |\mathcal{Q}|^2 }{|\mathcal{M}|^2},\quad
    N_{-} = \frac{N |\mathcal{Q}|^2 }{2|\mathcal{M}|^2},\quad
    \mathcal{M} = M + i N,\quad
    \mathcal{Q} = Q - i P.
\end{equation}
The full solution also contains Maxwell, dilaton $\phi$ and axion $\kappa$ fields (whose form can be found in \cite{Galtsov:1994pd}). The solution reduces to Kerr-NUT for $Q=P=0$. It has been shown to belong to the general Petrov type $I$ \cite{Garcia:1995qz,Burinskii:1995hk}, in contrast to the similar solution in Einstein-Maxwell model, Kerr-Newman-NUT, which belongs to the type $D$.

To construct the CKT, first of all, similarly to the previous example, we extract the conformal factor $\lambda$ from the metric $\lambda=\Sigma$.
After that, metric (\ref{SolGK}) takes the separable form (\ref{SM}) with
\begin{align}
\mathcal{F}^{ab}=\underbrace{-\Delta^{-1} \begin{pmatrix}
     S^2  & aS  \\
      aS  & a^2 \\
\end{pmatrix}}_{\mathcal{X}_r^{ab}(r)} + \underbrace{ \sin^{-2}\theta \begin{pmatrix}
    \chi ^2   & \chi  \\
      \chi   & 1 \\
\end{pmatrix}}_{\mathcal{X}_\theta^{ab}(\theta)}, \quad f_r(r) = \Delta^{-1}, \quad f_\theta(\theta) = 1,
\end{align}
where
\begin{align}
&\chi \equiv  a \sin^2 \theta - 2 N \cos \theta,  \quad S\equiv r(r-r_{-}) + a^2 + N^2 - N_{-}^2.
\end{align}
The irreducible CKT (\ref{SMK}) is
\begin{align}
    &
    K^{\alpha\beta} =
    \alpha  g^{\alpha\beta}
    -\Delta^{-1} S^\alpha S^\beta
    + \Delta \delta_r^\alpha\delta_r^\beta, \quad \Omega_\alpha=\partial_\alpha \alpha+ (2r-r_-) \delta^r_\alpha,
    \\\nonumber &
    S^\alpha = \left(r(r-r_{-}) + a^2 + N^2 - N_{-}^2\right) \delta^\alpha_t + a \delta^\alpha_\varphi.
\end{align}
Since $\lambda$ is separable, we can always get the KT by choosing $\alpha=-r(r-r_{-})$ \cite{Kobialko:2021qat}. For Kerr-Sen solution see also Ref. \cite{Houri:2010fr}.

\subsection{Rotating asymptotically flat STU black holes}
STU black holes represent general ${\cal N}=8$ supergravity solutions \cite{Chow:2014cca}.  In this subsection we focus on rotating electrically charged black holes in ungauged STU supergravity, which are characterised by the mass $M$, the angular momentum $J$ and four electric charges parameterized by constants $s_i, i=1,2,3,4$. Employing solution-generating techniques, this family of metrics was first obtained in \cite{Cvetic:1996kv} (see also Ref. \cite{Chong:2004na,Cvetic:2017zde}) and reads
\begin{equation} \label{SolSTU}
    ds^2 =
    - \frac{\Delta - a^2 \sin^2\theta}{\Sigma} \left(dt - \omega d\varphi\right)^2
    + \Sigma \left(
        \frac{dr^2}{\Delta} + d\theta^2 + \frac{\Delta \sin^2\theta}{\Delta - a^2 \sin^2\theta} d\phi^2
    \right),
\end{equation}
where functions $\Delta$, $\omega$, $\Sigma$ are defined as follows
\begin{subequations}
\begin{align}
   &
    \Delta = r^2 -2 M r+ a^2, \quad
    \omega = -\frac{2M a W \sin^2\theta}{\Delta - a^2 \sin^2\theta}, \quad
    W = (\Pi_c-\Pi_s)r+2 M \Pi_s,
    \\ &
    \Sigma^2 =
    \prod^4_i (r+ 2 M s^2_i)+a^4 \cos^4 \theta
    +2 a^2 R^2 \cos^2 \theta,
    \\ &
    R^2 = r^2+ M r \sum^4_i s^2_i +4 M^2(\Pi_c-\Pi_s)\Pi_s-2 M^2 \sum^4_{i<j<k}s^2_is^2_js^2_k,
\end{align}
\end{subequations}
with constants 
\begin{align}
\Pi_s= \prod^4_i s_i, \quad \Pi_c = \prod^4_i \sqrt{1+s^2_i}.
\end{align}
 
Similarly, we extract the general conformal factor $\lambda$ from the metric $\lambda=\Sigma$ and find the separable form of the metric (\ref{SolSTU})
\begin{align}
\mathcal{F}^{ab}=\underbrace{-\Delta^{-1} \begin{pmatrix}
     S^2  & aS  \\
      aS  & a^2 \\
\end{pmatrix}-\begin{pmatrix}
     Q  & 0  \\
      0  & 0 \\
\end{pmatrix}}_{\mathcal{X}_r^{ab}(r)} + \underbrace{ \begin{pmatrix}
    -a^2 \cos^{2}\theta & 0  \\
      0   & \sin^{-2}\theta \\
\end{pmatrix}}_{\mathcal{X}_\theta^{ab}(\theta)}, \quad f_r(r) = \Delta^{-1},
\end{align}
where
\begin{align}
&Q \equiv 2R^2 - r(r - 2M), \quad S \equiv 2 M W.
\end{align}
The irreducible CKT (\ref{SMK}) is
\begin{align}
    &
    K^{\alpha\beta} =
    \alpha  g^{\alpha\beta}
    -\Delta^{-1} S^\alpha S^\beta-\left( r^2 +2 Mr \left(1+  \sum^4_i s^2_i\right)\right)\delta_t^\alpha\delta_t^\beta
    + \Delta \delta_r^\alpha \delta_r^\beta, \\ &\nonumber \Omega_\alpha=\partial_\alpha \alpha+ \Sigma' \delta^r_\alpha,
    \quad
    S^\alpha = 2 M \left((\Pi_c-\Pi_s)r+2 M \Pi_s\right) \delta^\alpha_t + a \delta^\alpha_\varphi.
\end{align}
The CKT can be reduced to the KT only if $\Sigma$ is separable. The mixed derivative $\partial_r\partial_\theta\Sigma$ has the form of the fraction with polynomial numerator. Zeroing the coefficients at all powers $r^a \cdot \cos^b\theta$ of the numerator, we find that $\Sigma$ is separable if and only if all the following conditions are satisfied (for $M\neq0$ and $a\neq0$)
\begin{align}
s_i^2=s_j^2, \quad s_k^2 = s_l^2, \quad s_i s_k = s_j s_l,
\end{align}
where $(i, j, k, l)$ is any permutation of $(1,2,3,4)$. Consider a few special cases \cite{Cvetic:2017zde}. 
\begin{enumerate}
    \item Einstein-Maxwell black holes with $s_i=s$. Conformal factor $$\lambda=(r+2 M s^2)^2 +a^2 \cos^2 \theta$$ is separable, and there is a KT with $ \alpha =-(r+2 M s^2)^2$.
    \item Kerr-Sen black holes with $s_1=s_3=s$, $s_2=s_4=0$. Conformal factor  $$\lambda=\Sigma=r(r+2 M s^2) +a^2 \cos^2 \theta$$ is separable, and there is a KT with $\alpha =-r(r+2 M s^2)$.
    \item Black holes with pairwise-equal charges $s_1=s_3=s$, $s_2=s_4=s'$. Conformal factor $$\lambda=\Sigma=(r+2 M s^2)(r+2 M s'^2) +a^2 \cos^2 \theta$$ is separable, and there is a KT with $\alpha =-(r+2 M s^2)(r+2 M s'^2)$. 
    \item Rotating Kaluza-Klein charged black holes with $s_1=s$, $s_2=s_3=s_4=0$. Conformal factor $$\lambda=\Sigma=\sqrt{(r^2 +a^2 \cos^2 \theta)(r^2+2 Ms^2r+a^2 \cos^2 \theta)}$$ is not separable, and there is only a CKT.
\end{enumerate}

Following Eq. (\ref{eq:flat_boundary}) the boundary of the STU black hole shadow captured by an asymptotically distant static observer ($w(\theta_O)=0$) is governed by the following expression
\begin{subequations}
\begin{align}
& R^2_{SH}= \frac{(a \rho_{PS} -2 M W_{PS} )^2}{\Delta_{PS}}-\frac{4 M^2 W^2_{PS}-\Sigma^2(r_{PS},\theta_O)}{\Delta_{PS}-a^2 \sin^2\theta_O}, \quad \rho_{\mathcal N}=\rho_{PS},\\
&\rho_{PS}=\frac{1}{a(M-r_{PS})}\Big( M \big((r_{PS}-2M)^2 \Pi_s -r^2_{PS}\Pi_c\big)+a^2 M (\Pi_c-\Pi_s) \\ &\pm \Delta_{PS} \big(r^2_{PS}+M r_{PS}\sum^4_i s^2_i +M^2 (2\Pi_s(\Pi_s-\Pi_c)+\sum^4_{i<j}s^2_is^2_j +\sum^4_{i<j<k}s^2_is^2_js^2_k  ) \big)^{1/2}\Big).\nonumber 
\end{align}
\end{subequations}

\subsection{General black holes in $D=4,\;\mathcal{N}=2$ supergravity}
The most general asymptotically flat stationary non-extremal dyonic black hole in $D=4$ $\mathcal{N}=2$ supergravity coupled to 3 vector multiplets was presented in Ref. \cite{Chow:2013tia} (for more details see Ref. \cite{Chow:2014cca}). It describes the low-energy solution of the STU model. This family of solutions depends on 11 independent parameters: mass $M$, NUT $N$, rotation parameter $a$, four electric $Q_i$ and four magnetic $P_i$ charges. The line element reads
\begin{equation} \label{SolGSTU}
    ds^2 =
    - \frac{\Delta_r - \Delta_u}{\Sigma} \left(dt - \omega d\varphi\right)^2
    + \Sigma \left(
        \frac{dr^2}{\Delta_r} + \frac{du^2}{\Delta_u} + \frac{\Delta_r \Delta_u}{a^2(\Delta_r - \Delta_u)} d\phi^2
    \right),
\end{equation}
where the functions $\Delta_r$, $\Delta_u$, $\omega$, $\Sigma$ are defined as follows
\begin{subequations}
\begin{align}
   & \Delta_r = r^2 -2 M r+ a^2-N^2, \quad \Delta_u=a^2-(u-N)^2, \\ 
   & \omega = -\frac{2 N_{ph}(u-N)\Delta_r +(W_r+2 N_{ph} N)\Delta_u}{a(\Delta_r - \Delta_u)}, \\
   & \Sigma^2=(\Delta_r - \Delta_u)^2+2(\Delta_r - \Delta_u)(2 M r+W_u)+(2 N_{ph} u + W_r)^2,
\end{align}
\end{subequations}
where $W_r=a_r r+b_r$ and $W_u=a_u u+b_u$ are linear functions of the corresponding coordinates. The physical charges correspond to these parameters in a complicated way. In particular, physical mass and NUT charge are
\begin{align}
  M_{ph}=\mu_1 M +\mu_2 N \quad N_{ph}=\nu_1 M +\nu_2N.
\end{align}
The complete description of the solution is given in Ref. \cite{Chow:2013tia}.

The conformal factor to be extracted from the metric (\ref{SolGSTU}) for CKT generation is $\lambda=\Sigma$. After that, the metric (\ref{SolGSTU}) takes the separable form (\ref{SM}) with
\begin{align}
&\mathcal{F}^{ab}=\underbrace{-\Delta^{-1}_r \begin{pmatrix}
     S^2  & aS  \\
      aS  & a^2 \\
\end{pmatrix}-\begin{pmatrix}
     S'  & 0  \\
      0 & 0 \\
\end{pmatrix}}_{\mathcal{X}_r^{ab}(r)} + \underbrace{ \Delta^{-1}_u \begin{pmatrix}
    Q^2 & a Q  \\
      a Q   & a^2 \\
\end{pmatrix}+\begin{pmatrix}
    Q' & 0  \\
      0   &  0 \\
\end{pmatrix}}_{\mathcal{X}_u^{ab}(u)}, \\ \nonumber
&f_r(r) = \Delta^{-1}_r, \quad f_u(u) = \Delta^{-1}_u
\end{align}
where
\begin{subequations}
\begin{align}
&S \equiv W_r +2 N_{ph} N, \quad S'= \Delta_r+4 M r,\\
&Q \equiv 2 N_{ph}(N-u), \quad Q'= \Delta_u -2 W_u.
\end{align}
\end{subequations}
The corresponding irreducible CKT (\ref{SMK}) reads
\begin{align}
    &
    K^{\alpha\beta} =
    \alpha  g^{\alpha\beta}
    -\Delta^{-1}_r S^\alpha S^\beta-\left(\Delta_r+4 M r\right)\delta_t^\alpha \delta_t^\beta
    + \Delta_r \delta_r^\alpha \delta_r^\beta, \\ & \Omega_\alpha=\partial_\alpha\alpha+ \Sigma' \delta^r_\alpha,
    \quad
    S^\alpha = \left(W_r +2 N_{ph} N\right) \delta^\alpha_t + a \delta^\alpha_\varphi.
\end{align}
The CKT can be reduced to an exact KT if and only if $\Sigma$ is separable. Computing the mixed derivative $\partial_r\partial_u\Sigma$ and zeroing the coefficients at all powers $u^a r^b$, one finds that $\Sigma$ is separable if
\begin{align}
a_r=\pm2 M, \quad a_u= \pm2 N_{ph}, \quad b_r=\pm b_u.
\end{align}
The form of the conformal factor becomes separable
\begin{align}
\lambda=\Sigma=r^2+ b_u + u (u-2 N\pm 2 N_{ph}),
\end{align}
and the exact KT exists with $\alpha =-r^2$.

\section{Conclusions}
We have presented a purely geometric method of generating a CKT of the second rank in spacetimes with foliation of codimension one and additional isometries. 
For spacetimes foliated into arbitrary slices, we have obtained the general  lift equations  (\ref{eq:killing_map_xy}) relating  CKT to its slice projections  and the orthogonal complement. Using these equations one can try to raise a reducible CKT, defined in such slices, to the CKT in the bulk that may turn out to be irreducible. Integrability conditions (\ref{eq:integrability_G}) that ensures a successful lifting for the given foliation were established. The resulting generation method was presented in the theorem \ref{KBG}.  The ``slice-reducible'' CKT obtained with the help of this algorithm, being projected onto the corresponding slices, is reducible in them. Apart from the projection, the full tensor has a normal component but no mixed components. We show how this construction is reduced to the similar construction \ref{KBG_non} for exact KTs, providing a more elegant and clear statement of the previous results \cite{Kobialko:2021aqg}. 
 
For the particular case of totally umbilic foliations, the resulting CKT is reducible in the bulk. However, even in this case, the integrability conditions guarantee the existence of a new conformal Killing vector (\ref{eq:conformal_reducible}) that is normal to slices and provides a new independent integral of motion. In particular, a geometry with $n=m-2$ conformal Killing vectors and the second fundamental form (\ref{eq:block_condition}), (\ref{eq:umbilic_condition}) that satisfies the integrability conditions (\ref{eq:integrability_G}) always corresponds to a completely integrable dynamical system for null geodesics. It is worth noting that in the case of totally umbilical slices, slice-reducible CKTs with non-zero mixed components $\beta^\alpha$, which may turn out to be irreducible, can also arise. Although for the completely geodesic foliation considered in Ref. \cite{Garfinkle:2010er} no examples of irreducible KT have been obtained. 

Finding a foliation that satisfies the integrability conditions can be a difficult task. However, we succeeded in proving (theorem \ref{KBGFPS}) that slices satisfying the integrability conditions must contain fundamental photon  surfaces if the photon region inequalities hold. 
Thus, the presence of an FPS is a necessary condition that does not yet guarantee the existence of a CKT, but serves as an indication that such a tensor can exist. 
Therefore, it is recommended to check the integrability conditions using the fundamental photon surfaces as slices in the CKT generating technique. This generalizes and clarifies the relationship between FPS and CKT discussed earlier in Refs. \cite{Koga:2020akc,Kobialko:2021aqg,Pappas:2018opz}. It is tempting to conjecture that the existence of fundamental photon surfaces implies the existence of CKT if the slices are equipotential (ADM lapse function $N= \text{const}$ in each slice) or spherical \cite{Cederbaum:2019rbv} or the photon region is one-sheeted in the sense of Ref. \cite{Kobialko:2020vqf}.   
 
Since a geometry with $n=m-2$ conformal Killing vectors that satisfies the integrability conditions (\ref{eq:integrability_G})  corresponds to an integrable dynamical system for null geodesics, one can obtain a fully analytical description of the gravitational shadows of the corresponding ultra-compact objects. 
We have found a general analytical expression (\ref{eq:general_boundary}) for the shadow boundary in an arbitrary geometry with a slice-reducible CKT and its limit for a distant observer. These equations provide a simple basis for shadow analysis, requiring only knowledge of two conformal Killing vectors, slice-reducible CKT (irreducible in the bulk), and the structure of fundamental photon surfaces associated with this CKT. The developed formalism is easily applicable in practical calculations and, at a new level, reveals a deep connection between photon surfaces and the optical properties of a gravitating object. 

We have applied the developed methods to the Plebansky-Demyansky, EMDA, STU and some other solutions where slice-reducible CKTs arise purely algebraically without solving any differential equations. We analyzed whether CKT can be reduced to exact KT in these examples and gave analytical expressions for the boundaries of the gravitational shadows. We also considered an example of a metric with an excess number of Killing vectors and showed that all possible pairs suitable for generation, as expected, generate the same irreducible CKT. 

\begin{acknowledgments}
The work was supported by the Russian Foundation for Basic Research on the project 20-52-18012 Bulg-a, and the Scientific and Educational School of Moscow State University “Fundamental and Applied Space Research”. I.B. is also grateful to the Foundation for the Advancement of Theoretical Physics and Mathematics ``BASIS'' for support.
\end{acknowledgments}

\appendix

\section{Proof for commutator identities}
\label{auxiliary_identities}

First, the commutator $[n^\alpha \nabla_\alpha, \LCS_\gamma]$ acting on a scalar function is
\begin{align}
    [n^\alpha \nabla_\alpha, \LCS_\gamma] \omega & = 
    [n^\alpha \nabla_\alpha, h^\beta_\gamma \nabla_\beta] \omega = 
      n^\alpha (\nabla_\alpha h^\beta_\gamma) \nabla_\beta \omega
    - h^\beta_\gamma (\nabla_\beta n^\alpha) \nabla_\alpha \omega
     \\\nonumber & =
      n^\alpha \left(
        - \epsilon \SFS_{\alpha}^{\beta} n_\gamma
        - \epsilon n^\beta \LCM_\alpha n_\gamma
        - n_\alpha n_\gamma n^\lambda \LCM_{\lambda} n^{\beta}
      \right) \nabla_\beta \omega
    - h^\beta_\gamma \left(
          \SFS^{\alpha}_{\beta}
        + \epsilon n_\beta n^\lambda \LCM_{\lambda} n^\alpha
    \right) \nabla_\alpha \omega
     \\\nonumber & =
    - \epsilon \left(
          n^\beta n^\lambda \LCM_{\lambda} n_\gamma
        + n_\gamma n^\lambda \LCM_{\lambda} n^{\beta}
      \right) \nabla_\beta \omega
    - \SFS^{\alpha}_{\gamma}\nabla_\alpha \omega
     \\\nonumber & =
    - \epsilon n^\lambda \LCM_{\lambda} n_\alpha \left(
          n^\beta h_\gamma^\alpha 
        + n_\gamma h^{\beta\alpha}
      \right) \nabla_\beta \omega
    - \SFS^{\alpha}_{\gamma}\nabla_\alpha \omega
     \\\nonumber & =
    \LCS_\alpha \ln \varphi \left(
          n^\beta h_\gamma^\alpha 
        + n_\gamma h^{\beta\alpha}
      \right) \nabla_\beta \omega
    - \SFS^{\alpha}_{\gamma}\nabla_\alpha \omega.
\end{align}
 Using this, let us find the expression for the following commutator
\begin{align}
    [ \varphi n^\alpha \nabla_\alpha, \LCS_\gamma] \omega &=
      \varphi n^\alpha \nabla_\alpha \LCS_\gamma \omega
    - \LCS_\gamma( \varphi n^\alpha \nabla_\alpha \omega) =
    \varphi \left(
          [n^\alpha \nabla_\alpha, \LCS_\gamma] \omega
        - \LCS_\gamma \ln \varphi \cdot n^\alpha \nabla_\alpha \omega
    \right)
     \\\nonumber & =
    \varphi \left(
        \LCS_\alpha \ln \varphi \left(
              n^\beta h_\gamma^\alpha 
            + n_\gamma h^{\beta\alpha}
        \right) \nabla_\beta \omega
        - \SFS^{\alpha}_{\gamma}\nabla_\alpha \omega
        - \LCS_\gamma \ln \varphi \cdot n^\alpha \nabla_\alpha \omega
    \right)
     \\\nonumber & =
    \left(
        n_\gamma \LCS^\alpha \varphi
        - \varphi \SFS^{\alpha}_{\gamma}
    \right)\LCS_\alpha \omega.
\end{align}

\section{Proof of proposition \ref{prop:projected_killing_equations}}
\label{proof_proposition_1}

Substitute the split form of the vector $K^\alpha=\K^{\alpha} + \zeta n^{\alpha}$ into the left hand side of Eq. (\ref{eq:killing_equation_a}) and act with different projectors 
\begin{subequations} 
\begin{align}
h^{\alpha}_\rho h^{\beta}_\lambda \LCM_{(\alpha} K_{\beta)}&=h^{\alpha}_{(\rho} h^{\beta}_{\lambda)} (\LCM_{\alpha}\K_{\beta} + \LCM_{\alpha}(\zeta)n_{\beta}+\zeta\LCM_{\alpha} n_{\beta})\\&=\LCS_{(\rho}\K_{\lambda)}+\zeta h^{\alpha}_{(\rho} h^{\beta}_{\lambda)}\LCM_{\alpha} n_{\beta}=\LCS_{(\rho}\K_{\lambda)} + 2 \zeta \SFS_{\rho\lambda},\nonumber
\end{align}
\begin{align}
h^{\alpha}_\rho n^{\beta} \LCM_{(\alpha} K_{\beta)}&=h^{\alpha}_{\rho} n^{\beta} (\LCM_{(\alpha}\K_{\beta)} + \LCM_{(\alpha}(\zeta)n_{\beta)}+\zeta\LCM_{(\alpha} n_{\beta)})\\&=h^{\alpha}_{\rho} n^{\beta}\LCM_{(\alpha}\K_{\beta)} + h^{\alpha}_{\rho} n^{\beta}\LCM_{(\alpha}(\zeta)n_{\beta)}+h^{\alpha}_{\rho} n^{\beta}\zeta\LCM_{(\alpha} n_{\beta)}\nonumber\\&=-\SFS^{\beta}_{\rho}\K_{\beta}+h^{\alpha}_{\rho} n^{\beta}\LCM_{\beta}\K_{\alpha} + \epsilon  \LCS_{\rho}(\zeta)-\epsilon \zeta  \LCS_{\rho} \ln \varphi,\nonumber
\end{align}
\begin{align}
n^{\alpha} n^{\beta} \LCM_{(\alpha} K_{\beta)}&=n^{\alpha} n^{\beta} (\LCM_{(\alpha}\K_{\beta)} + \LCM_{(\alpha}(\zeta)n_{\beta)}+\zeta\LCM_{(\alpha} n_{\beta)})\\&=2 n^{\alpha} n^{\beta}\LCM_{\alpha}\K_{\beta} +2\epsilon n^{\alpha}\LCM_{\alpha}(\zeta).\nonumber 
\end{align}
\end{subequations}
The projections of the right hand side of Eq. (\ref{eq:killing_equation_a}) follows from the definition of $h_{\alpha\beta}$.

\section{Proof of proposition \ref{prop:killing_map}}
\label{proof_proposition_2}
Similarly to the previous proof, substitute the split form of the tensor $K_{\alpha\beta}=\K_{\alpha\beta} + \zeta n_{\alpha} n_{\beta} + \beta_{(\alpha}n_{\beta)}$ into the left hand side of Eq. (\ref{eq:killing_equation_b}) and act with different projectors
\begin{subequations} 
\begin{align}
    h^\alpha_\rho h^\beta_\sigma h^\gamma_\tau
    \LCM_{(\alpha} K_{\beta\gamma)} & = 
    h^\alpha_\rho h^\beta_\sigma h^\gamma_\tau
    \LCM_{(\alpha} \left(
      \K_{\beta\gamma)}
    + \zeta n_{\beta} n_{\gamma)}
    + 2\beta_{\beta}n_{\gamma)}
    \right)
     \\\nonumber & = 
    h^\alpha_\rho h^\beta_\sigma h^\gamma_\tau
    \left(
      \LCM_{(\alpha} \K_{\beta\gamma)}
    + \LCM_{(\alpha} (\zeta n_{\beta} n_{\gamma)})
    + 2\LCM_{(\alpha}(\beta_{\beta}n_{\gamma)})
    \right)
     \\\nonumber  & = 
    h^\alpha_\rho h^\beta_\sigma h^\gamma_\tau
    \left(
      \LCM_{(\alpha} \K_{\beta\gamma)}
    + 2\beta_{(\alpha}\LCM_{\beta}n_{\gamma)}
    \right)
     \\\nonumber  & = 
      \LCS_{(\rho} \K_{\sigma\tau)}
    + 2 \beta_{(\rho} \SFS_{\sigma}{}_{\tau)},
\end{align}
\begin{align}
    n^\alpha h^\beta_\sigma h^\gamma_\tau
    \LCM_{(\alpha} K_{\beta\gamma)} &  = 
    n^\alpha h^\beta_\sigma h^\gamma_\tau
    \LCM_{(\alpha} \left(
      \K_{\beta\gamma)}
    + \zeta n_{\beta} n_{\gamma)}
    + 2\beta_{\beta}n_{\gamma)}
    \right)
    \\\nonumber  & = 
    n^\alpha h^\beta_\sigma h^\gamma_\tau
    \left(
      \LCM_{(\alpha} \K_{\beta\gamma)}
    + \LCM_{(\alpha} (\zeta n_{\beta} n_{\gamma)})
    + 2\LCM_{(\alpha}(\beta_{\beta}n_{\gamma)})
    \right)
    \\\nonumber  &= 
      n^\alpha h^\beta_\sigma h^\gamma_\tau
      \LCM_{(\alpha} \K_{\beta\gamma)}
    + 2 n^\alpha h^\beta_\sigma h^\gamma_\tau\left(\zeta 
      n_{(\alpha} \LCM_{\beta} n_{\gamma)}
    +   
      n_{(\alpha} \LCM_{\beta}\beta_{\gamma)}
    + 
      \beta_{(\alpha} \LCM_{\beta}n_{\gamma)}\right)
     \\\nonumber & = 
      n^\alpha h^\beta_\sigma h^\gamma_\tau
      \LCM_{(\alpha} \K_{\beta\gamma)}
    + 2 \epsilon \zeta h^\beta_\sigma h^\gamma_\tau
      \LCM_{(\beta} n_{\gamma)}
    + 2 \epsilon h^\beta_\sigma h^\gamma_\tau 
      \LCM_{(\beta}\beta_{\gamma)}
    + 2  h^\beta_\sigma h^\gamma_\tau
      (n^\alpha\LCM_{\alpha}n_{(\beta}) \beta_{\gamma)}
    \\\nonumber & = 
      n^\alpha h^\beta_\sigma h^\gamma_\tau
      \LCM_{(\alpha} \K_{\beta\gamma)}
    + 2 \epsilon \zeta \chi_{(\sigma\tau)}
    + 2 \epsilon \LCS_{(\sigma}\beta_{\tau)}
    - 2 \epsilon \beta_{(\sigma} \LCS_{\tau)} \ln \varphi  \\\nonumber  & = 
      h^\beta_{(\sigma} h^\gamma_{\tau)}
      n^\alpha \LCM_{\alpha} \K_{\beta\gamma}- 2\SFS^\alpha_{(\sigma} 
      \K_{\tau)\alpha}
    + 2 \epsilon \zeta \chi_{(\sigma\tau)}
    + 2 \epsilon \LCS_{(\sigma}\beta_{\tau)}
    - 2 \epsilon \beta_{(\sigma} \LCS_{\tau)} \ln \varphi,
\end{align}
\begin{align}
    n^\alpha n^\beta h^\gamma_\tau
    \LCM_{(\alpha} K_{\beta\gamma)}   & = 
    n^\alpha n^\beta h^\gamma_\tau
    \LCM_{(\alpha} \left(
      \K_{\beta\gamma)}
    + \zeta n_{\beta} n_{\gamma)}
    + 2\beta_{\beta}n_{\gamma)}
    \right)
     \\\nonumber & = 
    n^\alpha n^\beta h^\gamma_\tau
    \left(
      \LCM_{(\alpha} \K_{\beta\gamma)}
    + \LCM_{(\alpha} (\zeta n_{\beta} n_{\gamma)})
    + 2\LCM_{(\alpha}(\beta_{\beta}n_{\gamma)})
    \right)
     \\\nonumber & =   n^\alpha n^\beta h^\gamma_\tau \LCM_{(\alpha} \K_{\beta\gamma)}+n^\alpha n^\beta h^\gamma_\tau \LCM_{\gamma} (\zeta)n_{(\alpha} n_{\beta)}+2\zeta n^\alpha n^\beta h^\gamma_\tau\LCM_{(\alpha} ( n_{\beta}) n_{\gamma)}\\\nonumber 
    &+ 2  n^\alpha n^\beta h^\gamma_\tau\LCM_{(\alpha}(\beta_{\beta})n_{\gamma)} + 2  n^\alpha n^\beta h^\gamma_\tau\beta_{(\alpha}\LCM_{\beta}(n_{\gamma)}) \\\nonumber  & =  n^\alpha n^\beta h^\gamma_\tau \LCM_{(\alpha} \K_{\beta\gamma)}+2 \LCS_\tau (\zeta)-4 \zeta \LCS_{\tau} \ln \varphi+4 \epsilon h^\gamma_\tau n^\alpha \LCM_{\alpha}(\beta_{\gamma})-4 \epsilon \SFS^\alpha_\tau\beta_{\alpha}\\\nonumber  & = 4 \epsilon  \K^\beta_{\tau}\LCS_\beta \ln \varphi+2 \LCS_\tau (\zeta)-4 \zeta \LCS_{\tau} \ln \varphi+4 \epsilon h^\gamma_\tau n^\alpha \LCM_{\alpha}(\beta_{\gamma})-4 \epsilon \SFS^\alpha_\tau\beta_{\alpha},
\end{align}
\begin{align}
 n^\alpha n^\beta n^\gamma
    \LCM_{(\alpha} K_{\beta\gamma)} &=6 
    n^\alpha n^\beta n^\gamma
    \LCM_{\alpha} \left(
      \K_{\beta\gamma}
    + \zeta n_{\beta} n_{\gamma}
    + 2\beta_{\beta}n_{\gamma}
    \right)
    = 6  n^\alpha \LCM_{\alpha} \zeta+ 12 \epsilon 
    n^\alpha n^\beta 
    \LCM_{\alpha}(\beta_{\beta}) \\\nonumber & = 
    6  n^\alpha \LCM_{\alpha} \zeta+12 
    \beta^{\beta} \LCS_\beta \ln \varphi,
\end{align}
\end{subequations} 
where we have used the following identities
\begin{subequations} 
\begin{align} 
 n^\alpha h^\beta_\sigma h^\gamma_\tau
      \LCM_{(\alpha} \K_{\beta\gamma)}&= h^\beta_{(\sigma} h^\gamma_{\tau)}
      n^\alpha \LCM_{\alpha} \K_{\beta\gamma}- 2\LCM_{\beta} (n^\alpha) h^\beta_\sigma 
      \K_{\alpha\tau}-2\LCM_{\gamma}(n^\alpha)  h^\gamma_\tau
      \K_{\alpha\sigma}\\ &= h^\beta_{(\sigma} h^\gamma_{\tau)}
      n^\alpha \LCM_{\alpha} \K_{\beta\gamma}- 2\SFS^\alpha_{(\sigma} 
      \K_{\tau)\alpha}\nonumber,\\
 n^\alpha n^\beta h^\gamma_\tau \LCM_{(\alpha} \K_{\beta\gamma)}&=4 n^\alpha n^\beta h^\gamma_\tau \LCM_{\alpha} \K_{\beta\gamma}=4 \epsilon  \K^\beta_{\tau}\LCS_\beta \ln \varphi.
 \end{align}
 \end{subequations} 
The projections of the right hand side of Eq. (\ref{eq:killing_equation_b}) can be obtained as follows
 \begin{subequations} 
 \begin{align} 
 h^\alpha_\rho h^\beta_\sigma h^\gamma_\tau
    \Omega_{(\alpha} g_{\beta\gamma)}= {}^\tau\Omega_{(\rho} h_{\sigma\tau)}, \quad  n^\alpha h^\beta_\sigma h^\gamma_\tau \Omega_{(\alpha} g_{\beta\gamma)}={}^n\Omega h_{(\sigma \tau )}, \\
     n^\alpha n^\beta h^\gamma_\tau \Omega_{(\alpha} g_{\beta\gamma)}=2\epsilon \cdot{}^\tau\Omega_\tau, \quad  n^\alpha n^\beta n^\gamma \Omega_{(\alpha} g_{\beta\gamma)}=6 \epsilon\cdot{}^n\Omega.
\end{align}
 \end{subequations}

\section{Proof of proposition \ref{prop:killing_mixed}}
\label{proof_proposition_3}

If projection $\K_{\alpha\beta}$ is a CKT in slices, Eq. (\ref{eq:killing_map_xy_a}) results in the following equation
\begin{equation} \label{eq:appendix_1}
 \beta_{(\alpha} \SFS_{\beta}{}_{\gamma)} = \alpha'_{(\alpha} h_{\beta}{}_{\gamma)}.
\end{equation}
where $\alpha'^\alpha$ is some vector. This equation has a trivial solution $\beta_{\alpha}=\alpha'_\alpha=0$ with an arbitrary second fundamental form $\SFS_{\beta}{}_{\gamma}$, which constitutes case (a) of the proposition. Consider the case $\beta_{\alpha}\neq0$. First of all, we divide $\SFS_{\beta}{}_{\gamma}$ into a trace and a trace-free parts
\begin{equation} 
\SFS_{\beta}{}_{\gamma}=\frac{\SFS_{\alpha}{}^{\alpha}}{m-1} h_{\beta\gamma} + \sigma_{\beta}{}_{\gamma}.
\end{equation}
If $\sigma_{\alpha\beta}=0$, the given slices are umbilic. Then, instead of Eq. (\ref{eq:appendix_1}) we get
\begin{equation} \label{eq:appendix_3}
 \beta_{(\alpha} \sigma_{\beta}{}_{\gamma)} = \alpha_{(\alpha} h_{\beta}{}_{\gamma)}, \quad \alpha_\alpha\equiv \alpha'_\alpha-\frac{1}{m-1}\SFS_{\alpha}{}^{\alpha}\beta_\alpha,
\end{equation}
Form the trace of Eq. (\ref{eq:appendix_3}) one gets the expression for $\alpha_\gamma$
\begin{equation} \label{eq:appendix_2} 
 \alpha_{\gamma} = \frac{2}{m+1} \beta^{\alpha} \sigma_{\alpha}{}_{\gamma}.
\end{equation}
On the other hand, contraction of Eq. (\ref{eq:appendix_3}) with $\beta^{\alpha}$ once, twice and thrice gives
\begin{subequations}\label{eq:appendix_4}
\begin{align}  
&(\beta^{\alpha} \beta_{\alpha}) \sigma_{\beta}{}_{\gamma}= (\beta^{\alpha}\alpha_{\alpha}) h_{\beta}{}_{\gamma} -\frac{m-1}{2}\alpha_{(\beta} \beta_{\gamma)},\label{eq:appendix_4_a}\\
  &(\beta^{\beta}\beta_{\beta})\alpha_{\gamma} = \frac{3-m}{2m}(\beta^{\alpha}\alpha_{\alpha}) \beta_{\gamma},\label{eq:appendix_4_b}\\
 & (m-1)(\beta^{\beta}\beta_{\beta})(\beta^{\gamma}\alpha_{\gamma} )=0,\label{eq:appendix_4_c}
\end{align}
\end{subequations}
where we have used Eq. (\ref{eq:appendix_2}).

\textbf{Case A.} If $\beta^{\alpha}\beta_{\alpha}\neq 0$, there is a condition $\beta^{\gamma}\alpha_{\gamma} =0$ from Eq. (\ref{eq:appendix_4_c}), implying a condition $\alpha_{\gamma} =0$ from Eq. (\ref{eq:appendix_4_b}), and $\sigma_{\beta}{}_{\gamma}=0$ from Eq. (\ref{eq:appendix_4_a}). This means that slices are totally umbilic, $\SFS_{\alpha\beta}=\SFS_{\gamma}{}^\gamma h_{\alpha\beta}/(m-1)$. 

\textbf{Case B.} If $\beta^{\alpha}\beta_{\alpha}=0$, Eq. (\ref{eq:appendix_4_b}) reduces to $(3-m)\beta^{\alpha}\alpha_{\alpha} = 0$, which can be satisfied in the following two subcases:

\textbf{Case BI.} If $\beta^{\alpha}\alpha_{\alpha}=0$, Eq. (\ref{eq:appendix_4_a}) implies $\alpha_{(\beta} \beta_{\gamma)}=0$ and contracting this with $\alpha^\beta$, vector $\alpha^\beta$ is null $\alpha_\alpha \alpha^\alpha=0$.
That is, $\beta^{\alpha}$ and $\alpha^{\alpha}$ are orthogonal to each other, null and linearly independent, i.e., $\alpha^{\alpha}=0$ since the manifold is Lorentzian. From Eq. (\ref{eq:appendix_2}) and Eq. (\ref{eq:appendix_3}) follows
\begin{equation} \label{eq:app_A_case_BII_sigma}
    \beta^{\beta}\sigma_{\beta}{}_{\alpha} = 0, \qquad
    \beta_{(\alpha} \sigma_{\beta}{}_{\gamma)} = 0.
\end{equation}
Let us introduce a complementary null vector ${\beta'}^\alpha$ such that $\beta'_{\alpha}\beta^{\alpha}=-1$, $\beta'^{\alpha}\beta'_{\alpha}=0$ and other orthogonal complementary vectors $X_{(i)}^\alpha$ such that $X^\alpha_{(i)}\beta_{\alpha}=X^\alpha_{(i)}\beta'_{\alpha}=0$, $X^\alpha_{(i)}X_{(j)\alpha} = \delta_{ij}$. Contraction of the second equation in (\ref{eq:app_A_case_BII_sigma}) with any pair or triplet of vectors from the set (and using the first one) leaves us with a conclusion that any component of $\sigma_{\alpha\beta}$ is zero. Thus, the slices are totally umbilic $\sigma_{\alpha}{}_{\beta}=0$. 

\textbf{Case BII.} The case $m=3$, $\beta^{\alpha}\alpha_{\alpha}\neq0$ solves Eq. (\ref{eq:appendix_4_b}), and from Eq. (\ref{eq:appendix_4_a}) follows
\begin{align} 
    h_{\beta}{}_{\gamma} = 
    \alpha_{(\beta} \beta_{\gamma)}/(\beta^{\alpha}\alpha_{\alpha})
    \qquad\Rightarrow\qquad
    \alpha_{\gamma} = 
    \beta_{\gamma} (\alpha^{\beta}\alpha_{\beta})/(\beta^{\alpha}\alpha_{\alpha})+\alpha_{\gamma},
\end{align}
which is possible only if $\alpha^{\beta}\alpha_{\beta}=0$. Contracting Eqs. (\ref{eq:appendix_3}), (\ref{eq:appendix_2}) with a certain combinations of $\alpha^{\alpha}$ and $\beta^{\alpha}$, all independent components of $\sigma_{\alpha\beta}$ can be extracted
\begin{align} 
\alpha^{\beta}\sigma_{\beta}{}_{\gamma}\alpha^{\gamma}=0,\qquad
\alpha^{\beta}\sigma_{\beta}{}_{\gamma}\beta^{\gamma}=0,\qquad
\beta^{\alpha} \sigma_{\alpha}{}_{\gamma} \beta^{\gamma}= 2 \alpha_{\gamma}\beta^{\gamma}.
\end{align}
In the holonomic basis restrictions on $\sigma_{\beta}{}_{\gamma}$, $h_{\beta}{}_{\gamma}$ reads
\begin{align}  \label{eq:appendix_kk}
&h_{\alpha\beta}=\alpha_{(\alpha} \beta_{\beta)}/F, \quad \sigma_{\alpha\beta}= 2\alpha_{\alpha} \alpha_{\beta}/F,
\end{align}
where $F$ is an arbitrary function defining the scalar product $\alpha_\gamma \beta^\gamma = F$. It is easy to check that this solution does satisfy (\ref{eq:appendix_3}) and the slice is not umbilic ($\sigma_{\alpha\beta}\neq0$). Note that in higher dimensions $m>3$ we can define the same formal solution, but the corresponding metric will be degenerate (such that the rank of the matrix $h_{\alpha\beta}$ remains equal to two). Since the slice has the signature $(1,1)$ and vectors $\alpha^\alpha$, $\beta^\alpha$ should be null and not unidirectional, the direction of $\alpha^\alpha$ is uniquely defined through $\beta^\alpha$ if $h_{\alpha\beta}$ is known. In addition, $h_{\alpha\beta}$ and $\chi_{\alpha\beta}$ must be geometrically consistent.

\section{Proof of proposition \ref{prop:second_form}}
\label{proof_proposition_5}

Contracting the second fundamental form $\SFS_{\alpha\beta}$ with conformal Killing vectors $\mathcal{K}_{a}{}^{\alpha}$, we find
\begin{align}
    &
   -2\SFS_{\alpha\beta} \mathcal{K}_{a}{}^{\alpha}\mathcal{K}_{b}{}^{\beta}=  n_\beta \mathcal{K}_a{}^{\alpha} \nabla_\alpha \mathcal{K}_b{}^{\beta}
    + n_\beta \mathcal{K}_b{}^{\alpha} \nabla_\alpha \mathcal{K}_a{}^{\beta}
    = \\\nonumber & =
      2\Omega_b n_\beta \mathcal{K}_a{}^{\alpha} h_{\alpha}^{\beta}
    + 2\Omega_a n_\beta \mathcal{K}_b{}^{\alpha} h_{\alpha}^{\beta}
    - n_\beta \mathcal{K}_a{}^{\alpha} \nabla^\beta \mathcal{K}_b{}_{\alpha}
    - n_\beta \mathcal{K}_b{}^{\alpha} \nabla^\beta \mathcal{K}_a{}_{\alpha}
    = \\\nonumber & = 
    - n_\beta \left(
      \mathcal{K}_a{}^{\alpha} \nabla^\beta \mathcal{K}_b{}_{\alpha}
    + \mathcal{K}_b{}^{\alpha} \nabla^\beta \mathcal{K}_a{}_{\alpha}
    \right)
    = - n_\beta \nabla^\beta \mathcal{K}_a{}^{\alpha}\mathcal{K}_b{}_{\alpha}
    = - n_\beta \nabla^\beta \G_{ab},
\end{align}
where in the first row we have used the property of the second fundamental form, in the second row we have used the definition of the CKTs, in the third row we have used the orthogonality between the Killing vectors and the normal vector, and the Leibniz rule.

\section{Proof of proposition \ref{prop:reducible_umbilic}}
\label{proof_proposition_4}

Substituting CKT $K^{\alpha\beta} = e^\Psi n^\alpha n^\beta$ with $\Omega_\alpha = 2 \chi e^{\Psi} n_\alpha$ back into Killing equations (\ref{eq:killing_equation_b}) and contracting only one index with $n^\gamma$, one can get
\begin{align}
     &  n^\gamma \LCM_{(\alpha}  (e^{\Psi / 2} n_{\beta} ) n_{\gamma)}= \chi e^{\Psi / 2} n^\gamma  n_{(\alpha}g_{\beta\gamma)} \quad \Rightarrow\\
       & \epsilon\LCM_{(\alpha}  (e^{\Psi / 2} n_{\beta)} )+ e^{\Psi / 2} n^\lambda \LCM_{\lambda} n_{(\alpha}  n_{\beta)} +\epsilon   \LCM_{(\alpha}  (e^{\Psi / 2})  n_{\beta)}+ e^{\Psi / 2} n^\gamma \LCM_{\gamma} (\Psi) n_{\alpha}  n_{\beta} \nonumber\\&= 4\chi e^{\Psi / 2}   n_{\alpha}n_{\beta}+2\epsilon\chi e^{\Psi / 2} g_{\alpha\beta} \quad \Rightarrow\nonumber\\
        &\epsilon\LCM_{(\alpha}  ( e^{\Psi / 2} n_{\beta)} )+e^{\Psi / 2}\underbrace{\left(\frac{\epsilon}{2}  \LCM_{(\alpha}  \Psi +  n^\lambda \LCM_{\lambda} n_{(\alpha}-\chi   n_{(\alpha}\right)}_{0}n_{\beta)} =  2\epsilon\chi e^{\Psi / 2} g_{\alpha\beta}.\nonumber
\end{align}
The expression in brackets is zero since the following identity follows from Eq. (\ref{eq:kk_eq_psi})
\begin{align}
\LCM_{\alpha}  \Psi=\LCS_{\alpha}  \Psi+ \epsilon n_\alpha n^\beta \LCM_{\beta}\Psi=-2\epsilon n^\lambda \LCM_{\lambda} n_{\alpha} +  2 \epsilon \chi n_\alpha.
\end{align}


\begin{thebibliography}{lab}

\bibitem{Carter:1968ks}
B.~Carter,
``Hamilton-Jacobi and Schrodinger separable solutions of Einstein's equations,''
Commun. Math. Phys. \textbf{10}, no.4, 280-310 (1968).

\bibitem{Walker:1970un}
M.~Walker and R.~Penrose,
``On quadratic first integrals of the geodesic equations for type [22] spacetimes,''
Commun. Math. Phys. \textbf{18}, 265-274 (1970).

\bibitem{Hughston:1972qf}
L.~P.~Hughston, R.~Penrose, P.~Sommers and M.~Walker,
``On a quadratic first integral for the charged particle orbits in the charged kerr solution,''
Commun. Math. Phys. \textbf{27}, 303-308 (1972).

\bibitem{sommers}
P.~Sommers,
``On Killing tensors and constants of motion''
JMP \textbf{14}, 787 (1973).

\bibitem{Carter:1977pq}
B.~Carter,
``Killing Tensor Quantum Numbers and Conserved Currents in Curved Space,''
Phys. Rev. D \textbf{16}, 3395-3414 (1977).

\bibitem{benenti}
S. Benenti and M. Francaviglia, in ``General relativity and gravitation'', Vol. 1, Plenum, New York-London, 393–439 (1980).
 
\bibitem{Stephani:2003tm}
H.~Stephani, D.~Kramer, M.~A.~H.~MacCallum, C.~Hoenselaers and E.~Herlt,
``Exact solutions of Einstein's field equations,'' Cambridge University Press (2003).


\bibitem{Gibbons:2011hg}
G.~W.~Gibbons, T.~Houri, D.~Kubiznak and C.~M.~Warnick,
``Some Spacetimes with Higher Rank Killing-Stackel Tensors,''
Phys. Lett. B \textbf{700}, 68-74 (2011)
[arXiv:1103.5366 [gr-qc]].

\bibitem{Cariglia:2014ysa}
M.~Cariglia,
``Hidden Symmetries of Dynamics in Classical and Quantum Physics,''
Rev. Mod. Phys. \textbf{86}, 1283 (2014)
[arXiv:1411.1262 [math-ph]].
 
\bibitem{Frolov:2017kze}
V.~Frolov, P.~Krtous and D.~Kubiznak,
``Black holes, hidden symmetries, and complete integrability,''
Living Rev. Rel. \textbf{20}, no.1, 6 (2017)
[arXiv:1705.05482 [gr-qc]].
 
\bibitem{Papadopoulos:2018nvd}
G.~O.~Papadopoulos and K.~D.~Kokkotas,
``Preserving Kerr symmetries in deformed spacetimes,''
Class. Quant. Grav. \textbf{35}, no.18, 185014 (2018)
[arXiv:1807.08594 [gr-qc]].

\bibitem{Carson:2020dez}
Z.~Carson and K.~Yagi,
``Asymptotically flat, parameterized black hole metric preserving Kerr symmetries,''
Phys. Rev. D \textbf{101}, no.8, 084030 (2020)
[arXiv:2002.01028 [gr-qc]].

\bibitem{Papadopoulos:2020kxu}
G.~O.~Papadopoulos and K.~D.~Kokkotas,
``On Kerr black hole deformations admitting a Carter constant and an invariant criterion for the separability of the wave equation,''
Gen. Rel. Grav. \textbf{53}, no.2, 21 (2021)
[arXiv:2007.12125 [gr-qc]].
 
\bibitem{Coley:2004jv}
A.~Coley, R.~Milson, V.~Pravda and A.~Pravdova,
``Classification of the Weyl tensor in higher dimensions,''
Class. Quant. Grav. \textbf{21}, L35-L42 (2004)
[arXiv:gr-qc/0401008 [gr-qc]].

\bibitem{Coley:2007tp}
A.~Coley,
``Classification of the Weyl Tensor in Higher Dimensions and Applications,''
Class. Quant. Grav. \textbf{25}, 033001 (2008)
[arXiv:0710.1598 [gr-qc]].

\bibitem{Keeler:2012mq}
C.~Keeler and F.~Larsen,
``Separability of Black Holes in String Theory,''
JHEP \textbf{10}, 152 (2012)
[arXiv:1207.5928 [hep-th]].

\bibitem{Chow:2014cca}
D.~D.~K.~Chow and G.~Comp\`ere,
``Black holes in N=8 supergravity from SO(4,4) hidden symmetries,''
Phys. Rev. D \textbf{90}, no.2, 025029 (2014)
[arXiv:1404.2602 [hep-th]].

\bibitem{Krtous:2015ona}
P.~Krtous, D.~Kubiznak and I.~Kolar,
``Killing-Yano forms and Killing tensors on a warped space,''
Phys. Rev. D \textbf{93}, no.2, 024057 (2016)
[arXiv:1508.02642 [gr-qc]].

\bibitem{Garfinkle:2010er}
D.~Garfinkle and E.~N.~Glass,
``Killing Tensors and Symmetries,''
Class. Quant. Grav. \textbf{27}, 095004 (2010)
[arXiv:1003.0019 [gr-qc]].

\bibitem{Garfinkle:2013cha}
D.~Garfinkle and E.~N.~Glass,
``Killing-Yano tensors in spaces admitting a hypersurface orthogonal Killing vector,''
J. Math. Phys. \textbf{54}, 032501 (2013)
[arXiv:1302.6207 [gr-qc]].

\bibitem{Koutras}
A.~Koutras,
``Killing tensors from conformal Killing vectors,''
Class. Quantum Grav. 9, 1573-1580 (1992).


\bibitem{Barnes:2002pm}
R.~Rani, S.~B.~Edgar and A.~Barnes,
``Killing tensors and conformal Killing tensors from conformal Killing vectors,''
Class. Quant. Grav. \textbf{20}, 1929-1942 (2003)
[arXiv:gr-qc/0301059 [gr-qc]].

\bibitem{Pappas:2018opz}
G.~Pappas and K.~Glampedakis,
``On the connection of spacetime separability and spherical photon orbits,''
[arXiv:1806.04091 [gr-qc]].

\bibitem{Glampedakis:2018blj}
K.~Glampedakis and G.~Pappas,
``Modification of photon trapping orbits as a diagnostic of non-Kerr spacetimes,''
Phys. Rev. D \textbf{99}, no.12, 124041 (2019)
[arXiv:1806.09333 [gr-qc]].

\bibitem{Kobialko:2021aqg}
K.~Kobialko, I.~Bogush and D.~Gal'tsov,
``Killing tensors and photon surfaces in foliated spacetimes,''
Phys. Rev. D \textbf{104}, no.4, 044009 (2021)
[arXiv:2104.02167 [gr-qc]].

\bibitem{Koga:2020akc}
Y.~Koga, T.~Igata and K.~Nakashi,
``Photon surfaces in less symmetric spacetimes,''
Phys. Rev. D \textbf{103}, no.4, 044003 (2021)
[arXiv:2011.10234 [gr-qc]].

\bibitem{Claudel:2000yi} C.~M.~Claudel, K.~S.~Virbhadra and G.~F.~R.~Ellis,
``The Geometry of photon surfaces,'' 
J.\ Math.\ Phys.\ {\bf 42}, 818 (2001).

\bibitem{Yoshino:2019mqw}
  H.~Yoshino, K.~Izumi, T.~Shiromizu and Y.~Tomikawa,
  ``Formation of dynamically transversely trapping surfaces and the stretched hoop conjecture,''
  PTEP \textbf{2020}, no.5, 053E01 (2020)
  [arXiv:1911.09893 [gr-qc]].

\bibitem{Yoshino1}
  H.~Yoshino, K.~Izumi, T.~Shiromizu and Y.~Tomikawa,
  ``Extension of photon surfaces and their area: Static and stationary spacetimes,''
  PTEP {\bf 2017}, no. 6, 063E01 (2017)
  [arXiv:1704.04637 [gr-qc]].

\bibitem{Yoshino:2019dty}
  H.~Yoshino, K.~Izumi, T.~Shiromizu and Y.~Tomikawa,
  ``Transversely trapping surfaces: Dynamical version,''
  PTEP \textbf{2020}, no.2, 023E02 (2020)
  [arXiv:1909.08420 [gr-qc]].

\bibitem{Cao:2019vlu}
L.~M.~Cao and Y.~Song,
``Quasi-local photon surfaces in general spherically symmetric spacetimes,''
Eur. Phys. J. C \textbf{81}, no.8, 714 (2021)
[arXiv:1910.13758 [gr-qc]].
  
\bibitem{Teo} 
  E.~Teo,
  ``Spherical photon orbits around a Kerr black hole,''
  Gen.\ Rel.\ Grav {\bf 35}, 1909 (2003).  
  
\bibitem{Galtsov:2019fzq} 
  D.~V.~Gal'tsov and K.~V.~Kobialko,
  ``Photon trapping in static axially symmetric spacetime,''
  Phys.\ Rev.\ D {\bf 100}, no. 10, 104005 (2019)
  [arXiv:1906.12065 [gr-qc]].
  
\bibitem{Galtsov:2019bty} 
  D.~V.~Gal'tsov and K.~V.~Kobialko,
  ``Completing characterization of photon orbits in Kerr and Kerr-Newman metrics,''
  Phys.\ Rev.\ D {\bf 99}, no. 8, 084043 (2019)
  [arXiv:1901.02785 [gr-qc]].

\bibitem{Koga:2022dsu}
Y.~Koga, N.~Asaka, M.~Kimura and K.~Okabayashi,
``Dynamical photon sphere and time evolving shadow around black holes with temporal accretion,''
[arXiv:2202.00201 [gr-qc]].

\bibitem{Lee:2020pre}
K.~Lee, T.~Shiromizu, H.~Yoshino, K.~Izumi and Y.~Tomikawa,
``Loosely trapped surface and dynamically transversely trapping surface in Einstein\textendash{}Maxwell systems,''
PTEP \textbf{2020}, no.10, 103E03 (2020)
[arXiv:2007.03139 [gr-qc]].

\bibitem{Kobialko:2020vqf}
K.~V.~Kobialko and D.~V.~Gal'tsov,
``Photon regions and umbilic conditions in stationary axisymmetric spacetimes: Photon Regions,''
Eur. Phys. J. C \textbf{80}, no.6, 527 (2020)
[arXiv:2002.04280 [gr-qc]].

\bibitem{Kobialko:2021uwy}
K.~Kobialko and D.~Gal'tsov,
``Photon regions in stationary axisymmetric spacetimes and umbilic conditions,''
[arXiv:2110.04610 [gr-qc]].

\bibitem{Grover:2017mhm}
J.~Grover and A.~Wittig,
``Black Hole Shadows and Invariant Phase Space Structures,''
Phys. Rev. D \textbf{96}, no.2, 024045 (2017)
[arXiv:1705.07061 [gr-qc]].

\bibitem{Cunha:2018acu}
P.~V.~P.~Cunha and C.~A.~R.~Herdeiro,
``Shadows and strong gravitational lensing: a brief review,''
Gen. Rel. Grav. \textbf{50}, no.4, 42 (2018)
[arXiv:1801.00860 [gr-qc]].

\bibitem{Lan:2018lyj}
X.~G.~Lan and J.~Pu,
``Observing the contour profile of a Kerr\textendash{}Sen black hole,''
Mod. Phys. Lett. A \textbf{33}, no.17, 1850099 (2018).

\bibitem{Shipley:2019kfq}
  J.~O.~Shipley,
  ``Strong-field gravitational lensing by black holes,''
  [arXiv:1909.04691 [gr-qc]].

\bibitem{Grenzebach}
  A.~Grenzebach, V.~Perlick and C.~Lammerzahl,
 ``Photon Regions and Shadows of Kerr-Newman-NUT Black Holes with a Cosmological Constant,''
  Phys.\ Rev.\ D {\bf 89}, no. 12, 124004 (2014)
  [arXiv:1403.5234 [gr-qc]].
 
\bibitem{Grenzebach:2015oea} 
  A.~Grenzebach, V.~Perlick and C.~Lammerzahl,
  ``Photon Regions and Shadows of Accelerated Black Holes,''
  Int.\ J.\ Mod.\ Phys.\ D {\bf 24}, no. 09, 1542024 (2015)
  [arXiv:1503.03036 [gr-qc]].

\bibitem{GrenzebachSBH}
A.~Grenzebach, The Shadow of Black Holes: An Analytic Description, Springer, Heidelberg (2016). 
 
\bibitem{Perlick:2021aok}
V.~Perlick and O.~Y.~Tsupko,
``Calculating black hole shadows: Review of analytical studies,''
Phys. Rept. \textbf{947}, 1-39 (2022)
[arXiv:2105.07101 [gr-qc]].
  
\bibitem{Konoplya:2021slg}
R.~A.~Konoplya and A.~Zhidenko,
``Shadows of parametrized axially symmetric black holes allowing for separation of variables,''
Phys. Rev. D \textbf{103}, no.10, 104033 (2021)
[arXiv:2103.03855 [gr-qc]].
  
\bibitem{Tsukamoto:2020iez}
N.~Tsukamoto,
``Deflection angle of a light ray reflected by a general marginally unstable photon sphere in a strong deflection limit,''
Phys. Rev. D \textbf{102}, no.10, 104029 (2020)
[arXiv:2008.12244 [gr-qc]].

\bibitem{Tsukamoto:2020bjm}
N.~Tsukamoto,
``Gravitational lensing in the Simpson-Visser black-bounce spacetime in a strong deflection limit,''
Phys. Rev. D \textbf{103}, no.2, 024033 (2021)
[arXiv:2011.03932 [gr-qc]].

\bibitem{Tsukamoto:2021fsz}
N.~Tsukamoto,
``Gravitational lensing by a photon sphere in a Reissner-Nordstr\"om naked singularity spacetime in strong deflection limits,''
Phys. Rev. D \textbf{104}, no.12, 124016 (2021)
[arXiv:2107.07146 [gr-qc]].

\bibitem{Tsukamoto:2021lpm}
N.~Tsukamoto,
``Retrolensing by light rays slightly inside and outside of a photon sphere around a Reissner-Nordstr\"om naked singularity,''
Phys. Rev. D \textbf{105}, no.2, 024009 (2022)
[arXiv:2109.00495 [gr-qc]].
  
\bibitem{Cederbaum}
  C.~Cederbaum and G.~J.~Galloway,
  ``Uniqueness of photon spheres in electro-vacuum spacetimes,''
  Class.\ Quant.\ Grav.\  {\bf 33}, 075006 (2016)
  [arXiv:1508.00355 [math.DG]].


\bibitem{Yazadjiev:2015hda} 
  S.~S.~Yazadjiev,
 ``Uniqueness of the static spacetimes with a photon sphere in Einstein-scalar field theory,''
  Phys.\ Rev.\ D {\bf 91}, no. 12, 123013 (2015)
  [arXiv:1501.06837 [gr-qc]].


\bibitem{Yazadjiev:2015jza}
  S.~Yazadjiev and B.~Lazov,
  ``Uniqueness of the static Einstein–Maxwell spacetimes with a photon sphere,''
  Class.\ Quant.\ Grav.\  {\bf 32}, 165021 (2015)
  [arXiv:1503.06828 [gr-qc]].
   
\bibitem{Yazadjiev:2015mta} 
  S.~Yazadjiev and B.~Lazov,
  ``Classification of the static and asymptotically flat Einstein-Maxwell-dilaton spacetimes with a photon sphere,''
  Phys.\ Rev.\ D {\bf 93}, no. 8, 083002 (2016)
  [arXiv:1510.04022 [gr-qc]].
 
\bibitem{Rogatko}
  M.~Rogatko,
  ``Uniqueness of photon sphere for Einstein-Maxwell-dilaton black holes with arbitrary coupling constant,''
  Phys.\ Rev.\ D {\bf 93}, no. 6, 064003 (2016)
  [arXiv:1602.03270 [hep-th]].

\bibitem{Cederbaumo}
  C.~Cederbaum,
  ``Uniqueness of photon spheres in static vacuum asymptotically flat spacetimes,''
  [arXiv:1406.5475 [math.DG]].
  
\bibitem{Yoshino:2016kgi} 
  H.~Yoshino,
 ``Uniqueness of static photon surfaces: Perturbative approach,''
  Phys.\ Rev.\ D {\bf 95}, no. 4, 044047 (2017)
  [arXiv:1607.07133 [gr-qc]].

\bibitem{Koga:2020gqd}
Y.~Koga,
``Photon surfaces as pure tension shells: Uniqueness of thin shell wormholes,''
Phys. Rev. D \textbf{101}, no.10, 104022 (2020)
[arXiv:2003.10859 [gr-qc]].

\bibitem{Yazadjiev:2021nfr}
S.~Yazadjiev,
``Classification of static asymptotically flat spacetimes with a photon sphere in Einstein-multiple-scalar field theory,''
Phys. Rev. D \textbf{104}, no.12, 124070 (2021)
[arXiv:2109.02945 [gr-qc]].

\bibitem{Cederbaum:2019rbv}
C.~Cederbaum and G.~J.~Galloway,
``Photon surfaces with equipotential time-slices,''
J. Math. Phys. \textbf{62}, no.3, 032504 (2021)
[arXiv:1910.04220 [math.DG]].

\bibitem{Shiromizu:2017ego}
T.~Shiromizu, Y.~Tomikawa, K.~Izumi and H.~Yoshino,
``Area bound for a surface in a strong gravity region,''
PTEP \textbf{2017}, no.3, 033E01 (2017)
[arXiv:1701.00564 [gr-qc]].

\bibitem{Feng:2019zzn}
X.~H.~Feng and H.~Lu,
``On the size of rotating black holes,''
Eur. Phys. J. C \textbf{80}, no.6, 551 (2020)
[arXiv:1911.12368 [gr-qc]].

\bibitem{Yang:2019zcn}
R.~Q.~Yang and H.~Lu,
``Universal bounds on the size of a black hole,''
Eur. Phys. J. C \textbf{80}, no.10, 949 (2020)
[arXiv:2001.00027 [gr-qc]].

\bibitem{Kubiznak:2007kh}
D.~Kubiznak and P.~Krtous,
``On conformal Killing-Yano tensors for Plebanski-Demianski family of solutions,''
Phys. Rev. D \textbf{76}, 084036 (2007)
[arXiv:0707.0409 [gr-qc]].

\bibitem{Vasudevan:2005bz}
M.~Vasudevan,
``Integrability of some charged rotating supergravity black hole solutions in four and five dimensions,''
Phys. Lett. B \textbf{624}, 287-296 (2005)
[arXiv:gr-qc/0507092].

\bibitem{Chen} 
   B.~Y.~Chen, ``Pseudo-Riemannian Geometry, $\delta$-Invariants and Applications'',
   World Scientific (2011).

\bibitem{Popa:2007eq}
F.~C.~Popa and O.~Tintareanu-Mircea,
``Irreducible Killing tensors from third rank Killing-Yano tensors,''
Mod. Phys. Lett. A \textbf{22}, 1309-1318 (2007)
[arXiv:gr-qc/0701005].

\bibitem{Wilkins:1972rs}
D. C. Wilkins, ``Bound Geodesics in the Kerr Metric, Phys. Rev. D \textbf{5}, 814 (1972).

\bibitem{Teo:2020sey}
E. Teo, ``Spherical orbits around a Kerr black hole, ''  Gen. Rel. Grav. \textbf{53}, 10 (1972).

\bibitem{Dokuchaev:2019jqq}
V.~I.~Dokuchaev and N.~O.~Nazarova,
``Silhouettes of invisible black holes, '' 
Usp. Fiz. Nauk \textbf{190}, 627-647 (2020)
[arXiv:1911.07695 [gr-qc]].

\bibitem{Gibbons:2016isj}
G.~W.~Gibbons and C.~M.~Warnick,
``Aspherical Photon and Anti-Photon Surfaces,'' 
 Phys. Lett. B \textbf{763}, 169-173 (2016)
[arXiv:1609.01673 [gr-qc]].

\bibitem{Cornish:1996de}
N.~J.~Cornish and G.~W.~Gibbons,
``The Tale of two centers,''
Class. Quant. Grav. \textbf{14}, 1865-1881 (1997)
[arXiv:gr-qc/9612060].

\bibitem{Cunha:2016bjh}
P.~V.~P.~Cunha, J.~Grover, C.~Herdeiro, E.~Radu, H.~Runarsson and A.~Wittig,
``Chaotic lensing around boson stars and Kerr black holes with scalar hair,''
Phys. Rev. D \textbf{94}, no.10, 104023 (2016)
[arXiv:1609.01340 [gr-qc]].
  
\bibitem{Semerak:2012dw}
O.~Semerak and P.~Sukova,
``Free motion around black holes with discs or rings: between integrability and chaos - I,''
Mon. Not. Roy. Astron. Soc. \textbf{404}, 545-574 (2010)
[arXiv:1211.4106 [gr-qc]].

\bibitem{Shipley:2016omi}
J.~Shipley and S.~R.~Dolan,
``Binary black hole shadows, chaotic scattering and the Cantor set,''
Class. Quant. Grav. \textbf{33}, no.17, 175001 (2016)
[arXiv:1603.04469 [gr-qc]].

\bibitem{Cunha:2018gql}
P.~V.~P.~Cunha, C.~A.~R.~Herdeiro and M.~J.~Rodriguez,
``Does the black hole shadow probe the event horizon geometry?,''
Phys. Rev. D \textbf{97}, no.8, 084020 (2018)
[arXiv:1802.02675 [gr-qc]].

\bibitem{Cunha:2017eoe}
P.~V.~P.~Cunha, C.~A.~R.~Herdeiro and E.~Radu,
``Fundamental photon orbits: black hole shadows and spacetime instabilities,''
Phys. Rev. D \textbf{96}, no.2, 024039 (2017)
[arXiv:1705.05461 [gr-qc]].

\bibitem{Pugliese:2018hju}
D.~Pugliese and H.~Quevedo,
``Observers in Kerr spacetimes: the ergoregion on the equatorial plane,''
Eur. Phys. J. C \textbf{78}, no.1, 69 (2018)
[arXiv:1801.06149 [gr-qc]].

\bibitem{Bardeen:1973tla}
J.~M.~Bardeen,
``Timelike and null geodesics in the Kerr metric,'' Les Houches Summer School of Theoretical Physics, 215-240 (1973).

\bibitem{Konoplya:2018arm}
R.~A.~Konoplya, Z.~Stuchl\'\i{}k and A.~Zhidenko,
``Axisymmetric black holes allowing for separation of variables in the Klein-Gordon and Hamilton-Jacobi equations,''
Phys. Rev. D \textbf{97}, no.8, 084044 (2018)
[arXiv:1801.07195 [gr-qc]].

\bibitem{Demianski}
M. Demianski and M. Francaviglia, ``Separability structures and Killing-Yano tensors in vacuum type-d space-times without acceleration'',
International Journal of Theoretical
Physics \textbf{19}, 675–680 (1980).


\bibitem{Newman:1963yy}
E.~Newman, L.~Tamburino and T.~Unti,
``Empty space generalization of the Schwarzschild metric,''
J. Math. Phys. \textbf{4}, 915 (1963).

\bibitem{Clement:2015aka}
G.~Cl\'ement, D.~Gal'tsov and M.~Guenouche,
``NUT wormholes,''
Phys. Rev. D \textbf{93}, no.2, 024048 (2016)
[arXiv:1509.07854 [hep-th]].

\bibitem{Misner:1963fr}
C.~W.~Misner,
``The Flatter regions of Newman, Unti and Tamburino's generalized Schwarzschild space,''
J. Math. Phys. \textbf{4}, 924-938 (1963).

\bibitem{Galtsov:1994pd}
D.~V.~Galtsov and O.~V.~Kechkin,
``Ehlers-Harrison type transformations in dilaton - axion gravity,''
Phys. Rev. D \textbf{50}, 7394-7399 (1994)
[arXiv:hep-th/9407155].

\bibitem{Sen:1992ua}
A.~Sen,
``Rotating charged black hole solution in heterotic string theory,''
Phys. Rev. Lett. \textbf{69}, 1006-1009 (1992)
[arXiv:hep-th/9204046].

\bibitem{Kallosh:1994ba}
R.~Kallosh, D.~Kastor, T.~Ortin and T.~Torma,
``Supersymmetry and stationary solutions in dilaton axion gravity,''
Phys. Rev. D \textbf{50}, 6374-6384 (1994)
[arXiv:hep-th/9406059].

\bibitem{Johnson:1994nj}
C.~V.~Johnson and R.~C.~Myers,
``Stringy twists of the Taub - NUT metric,''
(1994) [arXiv:hep-th/9409177].


\bibitem{Zhang:2020pay}
M.~Zhang and J.~Jiang,
``Emissions of photons near the horizons of Kerr-Sen black holes,''
Phys. Rev. D \textbf{102}, no.12, 124012 (2020)
[arXiv:2004.11087 [gr-qc]].

\bibitem{Garcia:1995qz}
A.~Garcia, D.~Galtsov and O.~Kechkin,
``Class of stationary axisymmetric solutions of the Einstein-Maxwell dilaton - axion field equations,''
Phys. Rev. Lett. \textbf{74}, 1276-1279 (1995).

\bibitem{Burinskii:1995hk}
A.~Burinskii,
``Some properties of the Kerr solution to low-energy string theory,''
Phys. Rev. D \textbf{52}, 5826-5831 (1995).

\bibitem{Kobialko:2021qat}
K.~Kobialko, I.~Bogush and D.~Gal'tsov,
``Killing tensors in foliated spacetimes and photon surfaces,''
[arXiv:2110.04608 [gr-qc]].

\bibitem{Houri:2010fr}
T.~Houri, D.~Kubiznak, C.~M.~Warnick and Y.~Yasui,
``Generalized hidden symmetries and the Kerr-Sen black hole,''
JHEP \textbf{07}, 055 (2010)
[arXiv:1004.1032].

\bibitem{Cvetic:1996kv}
M.~Cvetic and D.~Youm,
``Entropy of nonextreme charged rotating black holes in string theory,''
Phys. Rev. D \textbf{54}, 2612-2620 (1996)
[arXiv:hep-th/9603147].

\bibitem{Cvetic:2017zde}
M.~Cvetic, G.~W.~Gibbons and C.~N.~Pope,
``STU Black Holes and SgrA,''
JCAP \textbf{08}, 016 (2017)
[arXiv:1705.05740 [gr-qc]].

\bibitem{Chong:2004na}
Z.~W.~Chong, M.~Cvetic, H.~Lu and C.~N.~Pope,
``Charged rotating black holes in four-dimensional gauged and ungauged supergravities,''
Nucl. Phys. B \textbf{717}, 246-271 (2005)
[arXiv:hep-th/0411045].

\bibitem{Chow:2013tia}
D.~D.~K.~Chow and G.~Comp\`ere,
``Seed for general rotating non-extremal black holes of $\mathcal {N}= 8$ supergravity,''
Class. Quant. Grav. \textbf{31}, 022001 (2014)
[arXiv:1310.1925 [hep-th]].


\end{thebibliography}
\end{document}